\definecolor{darkblue}{rgb}{0.,0.,0.4}
\definecolor{darkred}{rgb}{0.5,0.,0.}
\newtheorem{lemma}{Lemma}[section]
\newtheorem{theorem}[lemma]{Theorem}
\newtheorem{corollary}[lemma]{Corollary}
\newtheorem{proposition}[lemma]{Proposition}
\newtheorem{proposition-definition}[lemma]{Proposition-Definition}
\theoremstyle{definition}
\newtheorem{definition}[lemma]{Definition}
\newtheorem{remark}[lemma]{Remark}
\newtheorem{example}[lemma]{Example}
\newcommand{\CC}{{\mathbb{C}}}
\newcommand{\RR}{{\mathbb{R}}}
\newcommand{\ZZ}{{\mathbb{Z}}}
\newcommand{\FF}{{\mathbb{F}}}
\newcommand{\calA}{{\mathcal{A}}}
\newcommand{\calB}{{\mathcal{B}}}
\newcommand{\calC}{{\mathcal{C}}}
\newcommand{\calD}{{\mathcal{D}}}
\newcommand{\calE}{{\mathcal{E}}}
\newcommand{\calI}{{\mathcal{I}}}
\newcommand{\calJ}{{\mathcal{J}}}
\newcommand{\calL}{{\mathcal{L}}}
\newcommand{\calM}{{\mathcal{M}}}
\newcommand{\calX}{{\mathcal{X}}}
\newcommand{\calY}{{\mathcal{Y}}}
\newcommand{\calZ}{{\mathcal{Z}}}
\newcommand{\one}{{\mathbf{1}}}
\newcommand{\rd}{{\mathrm{d}}}
\newcommand{\half}{{\frac{1}{2}}}
\newcommand{\polyring}{{\mathfrak{R}}}
\newcommand{\dd}{{\mathsf{d}}} % space dimension
\newcommand{\ii}{{\mathbf{i}}}
\newcommand{\id}{{\mathbf{id}}}
\DeclareMathOperator{\tr}{{{tr}}}
\DeclareMathOperator{\Tr}{{Tr}}
\DeclareMathOperator*{\diam}{{diam}}
\DeclareMathOperator*{\Supp}{{Supp}}
\DeclareMathOperator*{\Mat}{{\mathsf{Mat}}}
\DeclareMathOperator*{\barMat}{{\overline{\Mat}}}
\DeclareMathOperator*{\mul}{{mul}}
\DeclareMathOperator{\dist}{{dist}}
\DeclareMathOperator{\Comm}{{Comm}}
\DeclareMathOperator{\Cent}{{Cent}}
\DeclarePairedDelimiter{\norm}{\lVert}{\rVert}
\DeclarePairedDelimiter{\abs}{|}{|}
\begin{document}
\title{Invertible subalgebras}
\author{Jeongwan Haah}
\address{Microsoft Quantum, Redmond, Washington, USA}
\address{Microsoft Quantum, Station Q, Santa Barbara, California, USA}
\begin{abstract}
We introduce invertible subalgebras of local operator algebras on lattices.
An invertible subalgebra is defined to be one 
such that
every local operator can be locally expressed 
by elements of the inveritible subalgebra and those of the commutant.
On a two-dimensional lattice,
an invertible subalgebra hosts a chiral anyon theory 
by a commuting Hamiltonian,
which is believed to be impossible on any full local operator algebra.
We prove that
the stable equivalence classes of $\mathsf d$-dimensional invertible subalgebras form an abelian group 
under tensor product,
isomorphic to the group of all $\mathsf d+1$ dimensional quantum cellular automata (QCA) 
modulo blending equivalence and shifts.

In an appendix,
we consider a metric on the group 
of all QCA on infinite lattices
and prove that the metric completion contains 
the time evolution by local Hamiltonians, 
which is only approximately locality-preserving.
Our metric topology is strictly finer than the strong topology.
\end{abstract}

\maketitle

Boundary algebras~\cite{GNVW,FreedmanHastings2019QCA,clifqca1}
on~$\ZZ^\dd$ of a quantum cellular automaton (QCA) on~$\ZZ^{\dd+1}$,
where a QCA is a $*$-automorphism of a local operator algebra that preserves locality,
have played an important role to understand QCA.
This in turn gives a nontrivial invariant 
of dynamics of periodically driven systems 
in~\mbox{$\dd+2$} dimensions; see~\cite{Harper2019} and references therein.
In this paper, we characterize boundary algebras of strictly locality-preserving QCA
as certain subalgebras, called invertible subalgebras, of local operator algebras.
The defining property is that each local operator~$x$
must be rewritten as~$x = \sum_i a_i b_i$, a sum of products of elements~$a_i$ 
of the invertible subalgebra and those~$b_i$ of its commutant, 
all near~$x$ in support.
See~\ref{def:invertible}.

The decomposition of the parent local operator algebra in $\dd$ dimensions
using an invertible subalgebra
allows us to define a QCA in one dimension higher.
It is our main result that such a QCA 
is unique up to blending equivalence~\cite{GNVW,FreedmanHastings2019QCA}.
Our construction is similar to that in one dimension~\cite{GNVW}
in that the constructed QCA is virtually a shift QCA
along one direction out of~$\dd+1$ directions of~$\ZZ^{\dd+1}$
and all the remaining $\dd$ directions are handled by
the inveritible subalgebra.
For this result, we may think of every QCA as a shift QCA in disguise,
pumping an invertible subalgebra along a coordinate axis.

The definition of invertible subalgebra 
is completely within $\dd$ dimensions 
in which the local operator algebra of interest lives in.
Hence, one may study them independently of QCA.
Local operator algebras and their norm-completions
have been studied with long history 
as a canonical mathematical model of quantum spin systems on lattices~\cite{BratteliRobinson2}.
They are studied abstractly also in the context of
approximately finite-dimensional $C^*$-algebras
or uniformly hyperfinite algebras,
usually ignoring the metric (locality) structure of the underlying lattice,
and there are classification results for subalgebras~\cite{Schafhauser2018}.
However, 
our notion of invertible subalgebras, where locality structure is important,
appears to be not considered before;
we remark that on finite systems
our notion of invertible subalgebra 
coincides with that of visibly simple algebra
of Freedman and Hastings~\cite{FreedmanHastings2019QCA}.

Invertible subalgebras and QCA can be studied using a sequence of finite systems,
by which one may perhaps avoid some technical difficulties 
arising from infinite dimensional algebras.
Such consideration may be necessary if one wishes to study these 
on a compact ``control space''
with nontrivial topology where lattice points live~\cite{FHH2019}.
In that setting, a sequence needs to be chosen so that 
objects are coherent in some sense.
The issue becomes somewhat subtle and especially important
if we want to think of the group structure of these objects,
where the number of group operations needs to be only finite 
with no obvious upper bound.
There is a resolution to these issues~\cite{FHH2019},
but in this paper we just consider~$\ZZ^\dd$, an infinite lattice, from the beginning.

The rest of the paper is organized as follows.
We formally define invertible subalgebras in~\S\ref{sec:setup}.
Next in~\S\ref{sec:Brauer}, 
we consider a stable equivalence relation among invertible subalgebras
and show that they form an abelian group.
With the locality structure aside, 
the construction mimics that of Brauer group of central simple algebras over a field.
Hence, we call the group of all stable equivalence classes of invertible subalgebras
a Brauer group.
By adapting an argument of~\cite{FreedmanHastings2019QCA},
we show that the Brauer group is trivial in one dimension.
In~\S\ref{sec:BoundaryAlgebra} we establish the connection between invertible subalgebras
and QCA, and prove our main result that 
they are the same objects up to certain equivalence relations.
In~\S\ref{sec:Pauli} we characterize invertible subalgebras
generated by a translation invariant set of Pauli operators.
This section is virtually a repackaging of~\cite[\S 3]{clifqca1}.
In fact, our definition of invertible subalgebras and 
the construction of QCA 
starting from an invertible subalgebra 
are generalizations of~\cite{clifqca1}.
In~\S\ref{sec:exampleTopologicalOrder}
we discuss a concrete example invertible subalgebra in two dimensions,
and show that in this subalgebra one can construct a commuting Hamiltonian
that realizes a chiral anyon theory.
This gives a piece of evidence that the example invertible subalgebra 
represents a nonzero Brauer class.
In~\S\ref{sec:discussion}, we summarize the results of this paper,
and briefly mention a blending equivalence of invertible subalgebras.
Finally, in~\S\ref{app:topo} 
we consider a metric topology on the space of all $*$-linear
transformations on a local operator algebra,
which is strictly finer than the strong topology~\cite{BratteliRobinson2}.
We derive a standard statement that time evolution by a local Hamiltonian
is differentiable in time, shown in our topology.
We prove that the time evolution 
is a limit of finite depth quantum circuits.
Perhaps, a nice scope of approximately locality-preserving $*$-automorphisms
may be obtained by taking closures of strictly locality-preseriving ones.
A closure depends on a chosen topology,
and the results in~\S\ref{app:topo}
will suggest that our metric is reasonable.

{\it Acknowledgments}:
I thank Lukasz Fidkowski, Matt Hastings, and Zhenghan Wang for encouraging discussions.
I also thank Corey Jones and David Penneys for lessons in AF algebras.
Some part of this work was done 
while I was attending a workshop 
at American Institute of Mathematics, San Jose,
on higher categories and topological order.
I thank the participants for discussions.

\tableofcontents

\section{Setup}\label{sec:setup}

Let $\Mat(\{\cdot\}, p)$ with an integer~$p \ge 1$ 
be the algebra of all $p \times p$ matrices 
over the field of complex numbers,
where an involution is given by the conjugate transpose~$\dag$.
As an abstract algebra, this is \emph{simple} (no two-sided ideal other than zero and the whole)
and \emph{central} (any element that commutes with every other element is a scalar multiple of the identity~$\one$).

Consider a $\dd$-dimensional lattice~$\ZZ^\dd$ where $\dd \ge 0$ is an integer.
By abuse of notation, let $p : \ZZ^\dd \to \ZZ_{>0}$ be a function,
called a {\bf local dimension assignment}.
We assign a matrix algebra~$\Mat(\{s\},p(s))$ on each lattice site~$s \in \ZZ^\dd$.
For any finite subset~$S \subset \ZZ^\dd$,
we consider a matrix algebra, denoted by~$\Mat(S,p) = \bigotimes_{s \in S} \Mat(s,p(s))$.
The collection of all those matrix algebras for all finite subsets of~$\ZZ^\dd$
is a directed system with embeddings~$\Mat(S,p) \hookrightarrow \Mat(S',p)$ 
for~$S \subseteq S'$
given by tensoring with identity on~$S' \setminus S$.
It is understood that~$\Mat(\emptyset,p) = \CC$.
The direct limit, which may be written as 
\begin{equation}
\Mat(\ZZ^\dd,p) = \bigcup_S \Mat(S,p),
\end{equation}
of this directed system is the {\bf local operator algebra} 
on $\ZZ^\dd$ with local dimension assignment~$p$.
The local dimension~$p(s)$ is finite for each site~$s$,
but may not be uniformly bounded across the lattice.
When this local dimension is not important (other than being finite everywhere)
we will often suppress~$p$ from notation and write~$\Mat(\ZZ^\dd)$.

For any operator $x \in \Mat(\ZZ^\dd)$,
there is some finite $S \subset \ZZ^\dd$ such that $x \in \Mat(S)$,
and the smallest such~$S$ is called the {\bf support} of~$x$,
and denoted by~$\Supp(x)$.
By definition, scalars ($c\one$ for some $c \in \CC$) have empty support.
The support of a subalgebra is the union of the support of all elements.
\emph{No} operator in $\Mat(\ZZ^\dd)$ has \emph{infinite} support,
but a subalgebra may have infinite support.
We will say that an operator or a subalgebra 
is {\bf supported on} some subset of the lattice
if their support is merely contained in it.
If $S \subseteq \ZZ^\dd$ is any subset and $\ell > 0$,
we denote by $S^{+\ell}$ the set of all sites
whose $\ell_\infty$-distance from~$S$ is at most~$\ell$.
A subalgebra of~$\Mat(\ZZ^\dd)$ is {\bf locally generated}
if, for some $\ell > 0$,
there exists a generating set for the subalgebra 
such that every generator has support of diameter at most~$\ell$.

\begin{definition}\label{def:invertible}
A unital $*$-subalgebra~$\calA$ of~$\Mat(\ZZ^\dd)$ is {\bf invertible}
if there exists a constant~$\ell > 0$, called spread, 
such that every $x \in \Mat(\ZZ^\dd)$
can be decomposed as
\begin{align}
x = \sum_i a_i b_i \text{ with some } 
\begin{cases}
a_i \in \calA, &\Supp(a_i) \subseteq \Supp(x)^{+\ell}\\
b_i \in \calB, &\Supp(b_i) \subseteq \Supp(x)^{+\ell}
\end{cases}
\end{align}
where $\calB = \calA'$ is the commutant of~$\calA$ within~$\Mat(\ZZ^\dd)$
and the sum over~$i$ is finite.
\end{definition}

Though we have set the stage with infinite dimensional algebras,
all phenomena we will see happen in finite systems
as long as there is a clear separation between the overall system size
and the spread length scale~$\ell$ that appears 
in the definition of QCA and invertible subalgebras.
Note also that we have not (yet) taken any metric completion of~$\Mat(\ZZ^\dd)$;
usually the operator norm is taken to obtain a {\bf quasi-local operator algebra} 
after completion.
We will consider some aspects of this completion in~\S\ref{app:topo}.
The tensor product of two $*$-algebras will always be the algebraic tensor product over~$\CC$.

\section{Brauer group of Invertible subalgebras} \label{sec:Brauer}

\begin{definition}[\cite{FreedmanHastings2019QCA}]
A $*$-subalgebra~$\calA$ of~$\Mat(\ZZ^\dd)$ is {\bf VS}
if there exists~$\ell > 0$, called a range, such that 
for any~$a \in \calA \setminus \CC\one$ and any site~$s \in \Supp(a)$
there exists $w \in \calA \cap \Mat(\{s\}^{+\ell})$ such that $[a,w] \neq 0$.
\end{definition}

This is a strong form of centrality
since any nontrivial element~$a$ of~$\calA$
has a local witness that $a$ is not in the center of~$\calA$,
near every point of~$\Supp(a)$.
Ref.~\cite{FreedmanHastings2019QCA} has introduced this notion,
and the authors called any such subalgebra ``visibly simple,'' hence VS.
This property by definition is not about simplicity but about centrality,
although for finite dimensional $*$-algebras over~$\CC$,
centrality coincides simplicity.
It is to be seen if this strong notion of centrality
implies simplicity in more general settings.
We will see in~\ref{prop:VS} 
that a VS subalgebra with an extra condition is invertible
and hence simple.

\begin{lemma}\label{lem:InvertibleCentral}
Let $\calA \subseteq \Mat(\ZZ^\dd)$ be any invertible subalgebra with spread~$\ell$,
and $\calB$ be its commutant within~$\Mat(\ZZ^\dd)$.
Then, both~$\calA$ and~$\calB$ are VS of range~$\ell$.
\end{lemma}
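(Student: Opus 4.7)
The plan is to exhibit, for each nontrivial $a\in\calA$ and each $s\in\Supp(a)$, a witness inside $\calA\cap\Mat(\{s\}^{+\ell})$ that fails to commute with $a$ (and symmetrically for $\calB$), taking the VS range equal to the spread $\ell$ of the invertible subalgebra. The first step is to produce a witness that is only \emph{site-local}, without worrying yet about whether it lies in $\calA$; the second step is to project that witness into $\calA$ using the invertibility decomposition.

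For the first step I will use the following standard fact about matrix algebras: since $\Mat(\{s\})$ is central simple over $\CC$, the commutant of $\Mat(\{s\})$ inside $\Mat(S)$ (for any finite $S\ni s$) is exactly $\one_{\{s\}}\otimes\Mat(S\setminus\{s\})$. Consequently, for any $a\in\Mat(\ZZ^D)$, we have $s\notin\Supp(a)$ if and only if $a$ commutes with every element of $\Mat(\{s\})$. So whenever $a\in\Mat(\ZZ^D)$ is nontrivial and $s\in\Supp(a)$, there exists some $x\in\Mat(\{s\})$ with $[x,a]\ne 0$. The same statement holds verbatim for $b\in\calB\setminus\CC\one$, since $\calB\subseteq\Mat(\ZZ^D)$.

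For the second step, invoke the definition of invertibility: the operator $x\in\Mat(\{s\})$ decomposes as $x=\sum_i \alpha_i\beta_i$ with $\alpha_i\in\calA$, $\beta_i\in\calB$, and $\Supp(\alpha_i),\Supp(\beta_i)\subseteq\{s\}^{+\ell}$. To prove $\calA$ is VS, expand
\begin{equation}
0\ne[x,a]=\sum_i\bigl(\alpha_i[\beta_i,a]+[\alpha_i,a]\beta_i\bigr)=\sum_i[\alpha_i,a]\beta_i,
\end{equation}
where the last equality uses $[\beta_i,a]=0$ since $a\in\calA$ and $\beta_i\in\calB=\calA'$. The sum being nonzero forces some $[\alpha_i,a]\ne 0$, producing the required witness $\alpha_i\in\calA\cap\Mat(\{s\}^{+\ell})$. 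For $\calB$ the argument is symmetric: expanding $[x,b]$ for $b\in\calB$ and using $[\alpha_i,b]=0$ gives $[x,b]=\sum_i\alpha_i[\beta_i,b]\ne 0$, so some $[\beta_i,b]\ne 0$ and $\beta_i\in\calB\cap\Mat(\{s\}^{+\ell})$ is the desired witness.

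The only subtle point is the first-step lemma about commutants of on-site matrix algebras; everything afterwards is a one-line algebraic manipulation. The main obstacle, if any, is being careful that the double commutant of $\Mat(\{s\})$ inside the infinite union $\Mat(\ZZ^D)=\bigcup_S\Mat(S)$ really does consist only of operators supported away from $s$; this reduces to the finite-dimensional case by passing to a finite $S$ containing $\Supp(a)\cup\{s\}$, and there it is the classical fact that $\Mat(\{s\})$ is central in the tensor factorization $\Mat(S)=\Mat(\{s\})\otimes\Mat(S\setminus\{s\})$.
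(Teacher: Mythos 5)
Your proof is correct and follows essentially the same route as the paper's: pick a site-local operator $x\in\Mat(\{s\})$ not commuting with $a$, apply the invertibility decomposition $x=\sum_i\alpha_i\beta_i$ near $s$, and observe that since every $\beta_i$ commutes with $a$, some $\alpha_i$ must fail to, giving the VS witness (and symmetrically for $\calB$). Your explicit commutator expansion and the careful justification that non-commutation with $\Mat(\{s\})$ detects membership of $s$ in the support are just slightly more detailed versions of steps the paper leaves implicit.
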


\begin{proof}
Let $a \in \calA \setminus \CC \one$ be arbitrary,
and let $s \in \Supp(a)$.
Since $\Mat(\{s\})$ is central,
there is~$x \in \Mat(\{s\})$ such that~$[a,x] \neq 0$.
Since $\calA$ is invertible, we have a decomposition $x = \sum_i a_i b_i$
where $b_i$ commutes with~$\calA$ and $\Supp(a_i) \subseteq \{s\}^{+\ell}$ for all~$i$.
Hence, some $a_i$ must not commute with~$a$.
This $a_i$ is what we wanted.

If $b \in \calB \setminus \CC \one$ and $t \in \Supp(b)$, 
then there is~$y \in \Mat(\{t\})$ such that~$[b,y] \neq 0$.
Since $\calA$ is invertible, we have a decomposition $y = \sum_j a'_j b'_j$
where $a'_j$ commutes with~$\calB$ and $\Supp(b'_j) \subseteq \{t\}^{+\ell}$ for all~$j$.
Hence, some $b'_j$ must not commute with~$b$.
This $b'_j$ is what we wanted.
\end{proof}

\begin{lemma}\label{lem:CommutantOfRestricted}
Let $\calA \subseteq \Mat(\ZZ^\dd)$ be any VS $*$-subalgebra with range~$\ell$.
Let $S \subseteq \ZZ^\dd$ be any set of sites.
Any element~$z \in \calA$ that commutes 
with every element of~$\calA \cap \Mat(S)$ is supported 
on~$(\ZZ^\dd \setminus S)^{+2\ell}$.
\end{lemma}

That is, any central element of the subalgebra~$\calA \cap \Mat(S)$
can only be supported at the boundary.

\begin{proof}
Suppose $\Supp(z)$ overlaps with the ``interior''~$S \setminus (\ZZ^\dd \setminus S)^{+2\ell}$
at, say, $s \in S$.
We have by the VS property
a ``witness''~$w \in \calA$ supported on an $\ell$-neighborhood of~$s$ 
such that~$[z,w] \neq 0$.
But the $\ell$-neighborhood of~$s$ is still in~$S$,
so $w \in \calA \cap \Mat(S)$.
But this is contradictory to the assumption that $z$ commutes with
every element of~$\calA \cap \Mat(S)$.
\end{proof}

\begin{lemma}\label{lem:VSInjective}
Let $\calA$ and $\calB$ 
be any unital mutually commuting $*$-subalgebras of~$\Mat(\ZZ^\dd)$.
If $\calA$ is VS, then the multiplication map
\begin{equation}
\mul : \calA \otimes \calB \ni \sum_i a_i \otimes b_i \mapsto \sum_i a_i b_i \in \calA \calB
\end{equation} 
is a $*$-algebra isomorphism.
\end{lemma}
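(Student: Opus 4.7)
The plan is to first check that $\mul$ is a $*$-algebra homomorphism and surjective onto $\calA\calB$, then to address injectivity by a finite-dimensional reduction combined with the classical double centralizer theorem.

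The homomorphism and $*$-compatibility follow from a direct calculation using the commutativity of $\calA$ with $\calB$: for $\xi = \sum_i a_i \otimes b_i$ and $\eta = \sum_j a'_j \otimes b'_j$, sliding each $b_i$ past each $a'_j$ yields $\mul(\xi\cdot\eta) = \sum_{i,j} a_i a'_j b_i b'_j = \mul(\xi)\mul(\eta)$, and $(ab)^* = a^*b^*$ by the same commutativity. Surjectivity onto $\calA\calB$ is immediate from the definition of the codomain.

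For injectivity, any putative kernel element $\xi = \sum_{i=1}^n a_i \otimes b_i$ with $\mul(\xi) = 0$ lies in $\calA_U \otimes \calB_U$ for some finite $U \subseteq \ZZ^D$, where $\calA_U = \calA \cap \Mat(U)$ and $\calB_U = \calB \cap \Mat(U)$. I would enlarge $U$ to $\tilde U = U^{+R}$ with buffer $R > 2\ell$, where $\ell$ is the VS range, and apply~\ref{lem:CenterOfRestricted} to conclude that the center $Z(\calA_{\tilde U})$ of the finite-dimensional semisimple $*$-algebra $\calA_{\tilde U}$ is supported on the boundary layer $(\ZZ^D\setminus\tilde U)^{+2\ell}\cap\tilde U$, which is disjoint from $U$ by the choice of $R$. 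In particular the central idempotents $e_k \in Z(\calA_{\tilde U})$ commute with every $a_i$ and $b_i$ since their supports are disjoint from $U$. Decomposing $\calA_{\tilde U} = \bigoplus_k \calA^{(k)}$ into simple components and entering each corner $e_k \Mat(\tilde U) e_k$, the classical double centralizer theorem factorizes this corner as $\calA^{(k)} \otimes \calC^{(k)}$ with $\calC^{(k)}$ the commutant of $\calA^{(k)}$ in the corner; multiplying $\mul(\xi) = 0$ by $e_k$ and applying injectivity in the simple case yields $\sum_i (e_k a_i) \otimes (e_k b_i e_k) = 0$ in $\calA^{(k)} \otimes \calC^{(k)}$.

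The hard step will be reassembling these component-wise vanishings into $\xi = 0$ in the global tensor $\calA \otimes \calB$. The central idempotents of a finite truncation are not in general globally central, and the corner commutants $\calC^{(k)}$ strictly exceed $e_k \calB_{\tilde U} e_k$ in general, so the local data does not glue naively. The key VS input I expect to drive the reassembly is the identity $Z(\calA_{\tilde U}) \cap \calB = \CC\one$: a nonscalar element of $Z(\calA_{\tilde U})$ lies in $\calA$, and were it also in $\calB = \calA'$ it would lie in $\calA \cap \calA' = Z(\calA)$, which the VS definition forces to be $\CC\one$ by producing a local witness to non-centrality at any site of its support. This trivial intersection rules out nontrivial tensor relations of the form $za \otimes b - a \otimes zb$ inside $\calA \otimes \calB$, which is precisely what is needed to close the reassembly.
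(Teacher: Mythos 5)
Your strategy is viable and shares the paper's key ingredients (finite truncation, Lemma~\ref{lem:CenterOfRestricted} forcing the central projectors of the truncated algebra onto a boundary layer disjoint from the support of the kernel element), but the step you flag as ``hard'' is genuinely left open, and the lever you propose for closing it is the wrong one. The identity $Z(\calA_{\tilde U})\cap\calB=\CC\one$ is not what is needed, and ``tensor relations of the form $za\otimes b-a\otimes zb$'' are not the obstacle: the tensor product is over $\CC$ throughout, so no such relations can arise in the first place. What the reassembly actually requires is that multiplying by $e_k$ loses no information in the \emph{second} factor, i.e.\ that $b\mapsto e_k b$ is injective on $\calB_U$. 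This you already have for free from your own setup: $e_k$ is a nonzero projector supported on the boundary layer of $\tilde U$, every $b\in\calB_U$ is supported on $U$, these regions are disjoint, and a product of two nonzero operators with disjoint supports is nonzero. Concretely, write $\xi=\sum_i a_i\otimes b_i$ with the $b_i$ linearly independent; disjointness of supports then makes the $e_k b_i$ linearly independent in $\calC^{(k)}$, so your component-wise vanishing $\sum_i(e_k a_i)\otimes(e_k b_i)=0$ forces $e_k a_i=0$ for every $i$ and $k$, and summing over $k$ gives $a_i=\sum_k e_k a_i=0$, hence $\xi=0$. With that paragraph added, your proof is complete.

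For comparison, the paper avoids the reassembly issue altogether by truncating the two factors to boxes of \emph{different} sizes, $\calA_{2L}=\calA\cap\Mat(S(2L))$ and $\calB_L=\calB\cap\Mat(S(L))$, decomposing \emph{both} into simple blocks via their minimal central projectors $\pi_\mu,\tau_\nu$, and noting that each block map $\mul_{\mu,\nu}$ on $\pi_\mu\calA_{2L}\otimes\tau_\nu\calB_L$ is injective as soon as it is nonzero, because that domain is simple; nonvanishing of $\mul_{\mu,\nu}(\pi_\mu\otimes\tau_\nu)=\pi_\mu\tau_\nu$ again comes from disjointness of supports, with $\pi_\mu$ on the boundary of the large box and $\tau_\nu$ inside the small one. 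Your one-factor decomposition plus the double centralizer theorem in each corner is a legitimate alternative route, but only once the gluing step above is made explicit.
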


\begin{proof}
The multiplication map is a well-defined $*$-algebra homomorphism 
since $\calA,\calB$ are mutually commuting.
By definition, this map~$\mul$ is onto~$\calA \calB$.
We have to show that $\mul$ is injective.
Suppose $c = \sum_i a_i \otimes b_i \in \ker \mul$ where the sum is finite.
Since both $\calA$ and $\calB$ consist of finitely supported operators,
each of~$a_i$ and~$b_i$ is finitely supported,
so there is some finite $S \subset \ZZ^\dd$ that support all~$a_i$ and~$b_i$.
We take $S = S(L_0)$ to be a box for some $L_0$,
where $S(L)$ is defined to be $[-L,L]^\dd \cap \ZZ^\dd$ for any $L \ge 0$.

Let $\calA_{2L} = \calA \cap \Mat(S(2L))$ and $\calB_{L} = \calB \cap \Mat(S(L))$.
Being finite dimensional semisimple, these decompose into simple $*$-subalgebras
by their centers~$Z(\calA_{2L}) = \mathrm{span}_\CC \{\pi_\mu\}$ 
and~$Z(\calB_{L}) = \mathrm{span}_\CC\{ \tau_\nu\}$ whose bases consisting of 
the minimal central projectors:
$\calA_{2L} \cong \bigoplus_\mu (\pi_\mu \calA_{2L})$ 
and $\calB_{L}  \cong \bigoplus_\nu (\tau_\nu \calB_{L})$.
We have $\calA_{2L} \otimes \calB_{L} \cong \bigoplus_{\mu, \nu} \pi_\mu \calA_{2L} \otimes \tau_\nu \calB_{L}$.
Each map~$\mul_{\mu,\nu}$, the restriction of $\mul$ on~$\pi_\mu \calA_{2L} \otimes \tau_\nu \calB_{L}$,
is injective whenever it is nonzero since the domain is simple.
Since the product~$\pi_\mu\tau_\nu$ and~$\pi_{\mu'}\tau_{\nu'}$ 
are orthogonal for any different tuples~$(\mu,\nu) \neq (\mu',\nu')$,
we see that $\mul$~restricted to $\calA_{2L} \otimes \calB_{L}$ is injective
if an only if $\mul_{\mu,\nu}$ are all injective.
Observe that 
$\mul_{\mu,\nu}(\pi_\mu \one \otimes \tau_\nu \one) = \pi_\mu \tau_\nu$,
where, by~\ref{lem:CommutantOfRestricted},
$\pi_\mu \in Z(\calA_{2L})$ is supported on the boundary of~$S(2L)$
and $\tau_\nu \in Z(\calB_{L})$ is supported somewhere in~$S(L)$.
For $L$ much larger than the range of the VS algebra~$\calA$,
the boundary of~$S(2L)$ is far from~$S(L)$,
and hence the product $\pi_\mu \tau_\nu$ cannot be zero.
Therefore, $\mul_{\mu,\nu}$ is nonzero and hence injective for those sufficiently large~$L$.
We conclude that $\mul : \calA_{2L} \otimes \calB_{L} \to \Mat(S(2L))$ is injective 
for all sufficiently large~$L$.

The kernel element~$c$ belongs to the kernel of~$\mul |_{\calA_{2L} \otimes \calB_{L}}$ 
for sufficiently large~$L$,
and hence, can only be zero.
\end{proof}

\begin{corollary}\label{lem:TensorIso}
Let $\calA$ be a unital $*$-subalgebra of~$\Mat(\ZZ^\dd)$ with the commutant~$\calB$.
Then,~$\calA$ is invertible if and only if
the multiplication map 
\begin{align}
	\mul : \calA \otimes \calB \to \Mat(\ZZ^\dd)
\end{align}
has a locality-preserving inverse in the sense of~\ref{def:invertible}.
\end{corollary}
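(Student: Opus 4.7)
The plan is to prove both directions directly from the definitions, using Lemma~\ref{lem:InvertibleCentral} and Lemma~\ref{lem:VSInjective} as the key technical inputs. The backward direction is essentially a restatement, and the forward direction requires only that we upgrade the pointwise decomposition guaranteed by invertibility into a bona fide inverse map, for which injectivity of $\mul$ is what we need.

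For the backward direction, suppose $\mul$ has a locality-preserving inverse $\mul^{-1}$, meaning that for every $x \in \Mat(\ZZ^D)$ we can write $\mul^{-1}(x) = \sum_i a_i \otimes b_i$ with $\Supp(a_i), \Supp(b_i) \subseteq \Supp(x)^{+\ell}$. Applying $\mul$ yields $x = \sum_i a_i b_i$, which is exactly the decomposition required by Definition~\ref{def:invertible}. So $\calA$ is invertible with spread~$\ell$.

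For the forward direction, assume $\calA$ is invertible with spread~$\ell$. By Lemma~\ref{lem:InvertibleCentral}, $\calA$ is VS, so Lemma~\ref{lem:VSInjective} tells us that $\mul : \calA \otimes \calB \to \calA\calB$ is an injective $*$-algebra homomorphism. Invertibility of $\calA$ directly implies that $\mul$ surjects onto $\Mat(\ZZ^D)$, so in fact $\calA \calB = \Mat(\ZZ^D)$ and $\mul$ is a $*$-algebra isomorphism. Now for each $x \in \Mat(\ZZ^D)$, define $\mul^{-1}(x)$ to be $\sum_i a_i \otimes b_i \in \calA \otimes \calB$, where $a_i, b_i$ are supplied by Definition~\ref{def:invertible}. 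Because any two such decompositions of $x$ differ by an element of $\ker \mul = 0$, this $\mul^{-1}(x)$ is independent of the chosen decomposition, hence gives a well-defined map inverse to $\mul$, and it preserves locality with the same constant~$\ell$ by construction.

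The only real subtlety is the well-definedness step in the forward direction: a priori, Definition~\ref{def:invertible} produces many different admissible decompositions of the same~$x$, and we must argue that they all represent the same tensor in $\calA \otimes \calB$. This is precisely where the VS property, via Lemma~\ref{lem:VSInjective}, does the work; without injectivity of $\mul$ there would be no canonical inverse to speak of. Once that point is settled, both directions reduce to unpacking the definitions.
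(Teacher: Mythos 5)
Your proof is correct and follows essentially the same route as the paper: invertibility gives VS via Lemma~\ref{lem:InvertibleCentral}, VS gives injectivity of $\mul$ via Lemma~\ref{lem:VSInjective}, surjectivity and locality come straight from Definition~\ref{def:invertible}, and the converse is a restatement. Your explicit remark on well-definedness of $\mul^{-1}$ only spells out what the paper leaves implicit.
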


\begin{proof}
If~$\calA$ is invertible, then it is VS by~\ref{lem:InvertibleCentral}.
Then, by~\ref{lem:VSInjective} the multiplication map is injective.
By definition of invertibility, the multiplication map is surjective.
The inverse obtained this way is locality preserving.
Conversely, a locality preserving inverse 
fulfills the definition of invertibility.
\end{proof}

\begin{proposition}\label{lem:csa}
Any invertible subalgebra~$\calA$ of~$\Mat(\ZZ^\dd)$ of spread~$\ell$
is central simple and $\ell$-locally generated.
\end{proposition}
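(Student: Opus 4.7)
My plan is to treat the three conclusions---centrality, simplicity, and local generation---in sequence, drawing on the VS property from~\ref{lem:InvertibleCentral} and the tensor-product isomorphism from~\ref{lem:TensorIso}.

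For centrality, I would argue by contradiction. Suppose $z \in Z(\calA) \setminus \CC\one$ and pick any $s \in \Supp(z)$. Since $\calA$ is VS by~\ref{lem:InvertibleCentral}, there exists $w \in \calA$ supported in an $\ell$-neighborhood of $s$ such that $[z,w] \neq 0$, contradicting $z$ being central in $\calA$. Hence $Z(\calA) = \CC\one$.

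For simplicity, I would first check that $\Mat(\ZZ^D)$ is itself simple: given a nonzero two-sided ideal $J \subseteq \Mat(\ZZ^D)$ and $0 \neq x \in J \cap \Mat(S)$ for some finite $S$, the intersection $J \cap \Mat(S)$ is a nonzero two-sided ideal of the simple matrix algebra $\Mat(S)$, so $\one \in J$. Now let $\calI \subseteq \calA$ be a nonzero two-sided ideal. Writing an arbitrary $x \in \Mat(\ZZ^D)$ as $\sum_j a_j b_j$ via invertibility of $\calA$ and using $[\calA,\calB]=0$, I get $x \cdot (cb) = \sum_j (a_j c)(b_j b) \in \calI \calB$ for all $c \in \calI$, $b \in \calB$, with the analogous identity on the right, so $\calI \calB$ is a nonzero two-sided ideal of $\Mat(\ZZ^D)$. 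Simplicity of $\Mat(\ZZ^D)$ forces $\one \in \calI \calB$, which via the isomorphism $\mul: \calA \otimes \calB \to \Mat(\ZZ^D)$ of~\ref{lem:TensorIso} means $\one \otimes \one$ lies in $\calI \otimes \calB$. Fix any linear functional $\phi: \calB \to \CC$ with $\phi(\one) = 1$, obtained by extending the evaluation on $\CC\one \subseteq \calB$ using a vector-space complement. Applying $\id_\calA \otimes \phi$ to both sides of $\one \otimes \one = \sum_i c_i \otimes b_i$ with $c_i \in \calI$, $b_i \in \calB$ gives $\one = \sum_i \phi(b_i)\, c_i \in \calI$, so $\calI = \calA$.

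For local generation, I would let $\calA_0$ be the subalgebra of $\calA$ generated by the operators $a_k^{(s,E)}$ appearing in the invertibility decomposition $E = \sum_k a_k^{(s,E)} b_k^{(s,E)}$ of each matrix unit $E \in \Mat(\{s\})$, as $s$ ranges over $\ZZ^D$. By construction each generator is supported in $\{s\}^{+\ell}$, so $\calA_0$ is locally generated. To see $\calA_0 = \calA$, take any $a \in \calA$ supported on a finite set $S$, expand $a$ as a linear combination of products of single-site matrix units at sites of $S$, substitute each matrix unit by its invertibility decomposition, and commute all $\calB$-factors to the right using $[\calA,\calB] = 0$. This rewrites $a = \sum_\alpha c_\alpha A_\alpha B_\alpha$ with $A_\alpha \in \calA_0$, $B_\alpha \in \calB$. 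Pulling back along $\mul^{-1}$ yields $a \otimes \one = \sum_\alpha c_\alpha A_\alpha \otimes B_\alpha$, and the same $\id_\calA \otimes \phi$ trick then gives $a = \sum_\alpha c_\alpha\, \phi(B_\alpha)\, A_\alpha \in \calA_0$. The main obstacle, inasmuch as there is one, is this last step: invertibility only directly decomposes single-site operators, and the $\id_\calA \otimes \phi$ projection is what promotes this to a generation statement for arbitrary elements of $\calA$.
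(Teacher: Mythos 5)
Your proof is correct and follows essentially the same route as the paper: centrality via the VS property of~\ref{lem:InvertibleCentral}, simplicity by showing $\calI\calB$ is a nonzero two-sided ideal of the simple algebra $\Mat(\ZZ^D)$ and pulling back through the injective $\mul$ of~\ref{lem:TensorIso}, and local generation via the single-site decompositions. The only (cosmetic) difference is that you extract the $\calA$-component of an identity in $\calA\otimes\calB$ by applying $\id_\calA\otimes\phi$ for a unital linear functional $\phi$ on $\calB$, where the paper instead rearranges the sum so that the $\calB$-factors are linearly independent; both are valid.
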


\begin{proof}
First, let us show that $\Mat(\ZZ^\dd)$ is central simple and locally generated.
It is by definition locally generated.
If $m \in \Mat(\ZZ^\dd)$ is central,
then it is central in $\Mat(\Supp(m))$,
so $m \in \CC\one$.
For any nonzero element $x$ of any nonzero two-sided ideal~$\calI$ of~$\Mat(\ZZ^\dd)$,
the two-sided ideal~$X$ of~$\Mat(S)$ generated by~$x$ where $\Supp(x) \subseteq S$,
is nonzero, so $X = \Mat(S)$, and $\Mat(S) \subseteq \calI$.
Since $S$ is arbitrary, we must have $\calI = \Mat(\ZZ^\dd)$.

We have shown that $\calA$ is central in~\ref{lem:InvertibleCentral}.
Let~$\calB$ be the commutant of~$\calA$ within~$\Mat(\ZZ^\dd)$.
Let~$\calJ$ be a nonzero two-sided ideal of~$\calA$.
Then, $\calJ \calB \supseteq \calJ \one = \calJ$ is a nonzero two-sided ideal of a simple algebra~$\Mat(\ZZ^\dd)$,
so~$\calJ \calB = \Mat(\ZZ^\dd) \supseteq \calA$.
If~$a \in \calA$ is any element, 
$\mul(a \otimes \one) = a = \mul(\sum_j a_j \otimes b_j)$ 
where $a_j \in \calJ$, $b_j \in \calB$, and $\mul$ is the multiplication map.
Since~$\mul$ is injective by~\ref{lem:TensorIso},
this rearranges to $a \otimes \one - \sum_j a_j \otimes b_j = 0$.
We can rewrite this such that $b_j$'s are linearly independent,
implying that~$a$ is in the $\CC$-linear span of~$a_j \in \calJ$.
Therefore, $\calJ = \calA$ and $\calA$ is simple.

Let~$\calA_0$ be the collection of all $a_j$'s
that appear in the decomposition $x = \sum_j a_j b_j$, 
where $a_j \in \calA$ and $b_j \in \calB$,
of any single-site operator~$x \in \Mat(\ZZ^\dd)$.
By definition of invertible subalgebras,
$\calA_0$ consists of elements whose support size is uniformly bounded.
We claim that $\calA_0$ generates~$\calA$.
Let~$a \in \calA$ be arbitrary.
Since $a \in \Mat(\ZZ^\dd)$,
we have an $\calA \calB$-decomposition $a = \sum_i a'_i b'_i$.
Pulling it back to $\calA \otimes \calB$ by the multiplication map,
we see, as before, that~$a \in \mathrm{span}_\CC\{ a'_i \}$.
But every element of~$\Mat(\ZZ^\dd)$, 
not necessarily supported on a single site,
is a finite sum of finite products of single-site operators,
each of which $\calA \calB$-decomposes with elements of~$\calA_0$.
So, $a'_i \in \langle \calA_0 \rangle$, and therefore~$a \in \langle \calA_0 \rangle$.
\end{proof}

\begin{proposition}\label{lem:InvertibilityIsSymmetric}
If $\calA$ is any invertible subalgebra of~$\Mat(\ZZ^\dd)$ with the commutant~$\calB$,
then the commutant of~$\calB$, the double commutant of~$\calA$, is~$\calA$.
Therefore, a $*$-subalgebra of~$\Mat(\ZZ^\dd)$ is invertible if and only if 
its commutant is invertible.
\end{proposition}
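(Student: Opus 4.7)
The plan is to establish the double commutant identity $\calA'' = \calA$ first, and then derive the symmetry statement as a formal corollary of Lemma~\ref{lem:TensorIso} together with the swap of tensor factors.

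For the double commutant, the inclusion $\calA \subseteq \calB'$ is tautological. For the reverse inclusion, take any $x \in \calB'$. Since $x \in \Mat(\ZZ^D)$, invertibility of~$\calA$ yields a finite decomposition $x = \sum_i a_i b_i$ with $a_i \in \calA$ and $b_i \in \calB$. By Lemma~\ref{lem:InvertibleCentral} the algebra~$\calA$ is VS, so Lemma~\ref{lem:VSInjective} tells us the multiplication map $\mul : \calA \otimes \calB \to \calA\calB$ is an isomorphism; hence the lift $\tilde x := \sum_i a_i \otimes b_i \in \calA \otimes \calB$ is well defined, and we may arrange the sum so that the~$a_i$ are $\CC$-linearly independent. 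Now for any $b \in \calB$, the hypothesis $x \in \calB'$ gives
\begin{equation}
0 = [x,b] = \sum_i a_i [b_i,b] = \mul\!\left(\sum_i a_i \otimes [b_i,b]\right).
\end{equation}
Injectivity of~$\mul$ forces $\sum_i a_i \otimes [b_i,b] = 0$, and linear independence of the~$a_i$ then forces $[b_i,b]=0$ for every~$i$ and every~$b \in \calB$. Hence each~$b_i$ lies in the center of~$\calB$.

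The final step is to observe that the center of a VS algebra is~$\CC\one$: by the very definition of VS, any element outside~$\CC\one$ admits a local witness with which it fails to commute, so it cannot be central. Since~$\calB$ is VS by Lemma~\ref{lem:InvertibleCentral}, each~$b_i$ is a scalar, whence $x = \sum_i (b_i \text{ scalar})\, a_i \in \calA$. This gives $\calB' \subseteq \calA$, completing the proof that $\calA'' = \calA$.

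For the symmetric statement, suppose $\calA$ is invertible. Then by Lemma~\ref{lem:TensorIso} the map $\mul_\calA : \calA \otimes \calB \to \Mat(\ZZ^D)$ has a locality-preserving inverse. Composing with the tensor swap $\calB \otimes \calA \to \calA \otimes \calB$ (which is trivially locality-preserving) gives a locality-preserving inverse to $\mul_\calB : \calB \otimes \calA \to \Mat(\ZZ^D)$. Since the commutant of~$\calB$ in~$\Mat(\ZZ^D)$ is~$\calA$ by the double commutant identity just proved, Lemma~\ref{lem:TensorIso} applied in the other direction shows that~$\calB$ is invertible. The converse is identical. The main obstacle in this argument is really the single clean use of Lemma~\ref{lem:VSInjective} to pull the commutation identity back into the tensor product; once that is available, the rest is just bookkeeping with centers of VS algebras.
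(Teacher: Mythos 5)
Your proof is correct and follows essentially the same route as the paper: lift the decomposition $x=\sum_i a_ib_i$ to $\calA\otimes\calB$ via the injectivity of~$\mul$ (Lemma~\ref{lem:VSInjective}), use linear independence of the~$a_i$ to conclude each~$b_i$ is central in~$\calB$, and then use the VS property from Lemma~\ref{lem:InvertibleCentral} to force $b_i\in\CC\one$. Your explicit tensor-swap argument for the ``if and only if'' clause is a slightly more detailed write-up of what the paper leaves implicit, but it is the same reasoning.
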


\begin{proof}
Clearly, $\calA \subseteq \calB'$.
To show that $\calB' \subseteq \calA$, let~$x \in \calB'$ be arbitrary.
The invertibility of~$\calA$ implies that
$x = \mul( \sum_j a_j \otimes b_j)$ 
where $\mul: \calA \otimes \calB \to \Mat(\ZZ^\dd)$ is the multiplication map of~\ref{lem:TensorIso}.
We may assume that $a_j$ are linearly independent.
Since $x \in \calB'$ and $\mul$ is an isomorphism,
$\sum_j a_j \otimes b_j$ commutes with~$\one \otimes \calB$.
So, we have $\sum_j a_j \otimes [b,b_j] = 0$ for all~$b\in \calB$,
and the linear independence of~$a_j$ implies that $[b,b_j] = 0$ for each~$j$.
By~\ref{lem:InvertibleCentral} 
we know that~$\calB$ is central, so $b_j \in \CC \one$ for all~$j$.
Hence,~$x \in \mul( \sum_j a_j \otimes \CC\one) \subseteq \calA$.
\end{proof}

Recall that on~$\Mat$ we have a trace, a $\CC$-linear functional~$\tr$
such that $\tr(ab)=\tr(ba)$ and $\tr(a^\dag a) > 0$ for $a \neq 0$,
defined by~$\tr(x) = \Tr(x) / \dim_\CC S$
where $\Tr$ is the usual trace of a matrix
and $\dim_\CC S = \Tr(\one_S)$ means 
the dimension of the local algebra on~$S$ that supports~$x$.
It is important that this trace does not depend on the choice of~$S$.
In particular, $\tr(\one) = 1$.
The inequality~$\Tr(xy) \le \norm x \Tr(|y|)$ implies that
for any $x,y \in \Mat$,
\begin{align}
\tr(xy) &\le \norm x \tr(\abs y) \le \norm x \norm y.
\end{align}
Recall that for any positive semidefinite finite dimensional matrix~$\rho$
(a sum of operators of form~$z^\dag z$)
we have $\norm \rho = \sup_\sigma \Tr(\rho \sigma)$ 
where $\sigma$ ranges over all positive semidefinite matrices
with~$\norm{\sigma}_1 = \Tr \sigma \le 1$.
Writing in terms of normalized trace, $\tr$, on~$\Mat$,
we have
\begin{align}
	\norm \rho = \sup_{\sigma: \sigma \succeq 0, \tr(\sigma) \le 1} \tr(\rho \sigma),
\end{align}
the expression of which remains invariant under the embedding $\rho \mapsto \rho \otimes \one$,
and hence we may let~$\sigma$ range over all positive semidefinite operators of~$\Mat$
with $\tr \sigma \le 1$.

\begin{proposition}\label{prop:TrNormISA}
Let $\calA$ be an invertible subalgebra of~$\Mat(\ZZ^\dd)$
with the commutant~$\calB$.
Define
\begin{align}
\tr_\calA &: \Mat(\ZZ^\dd) \xrightarrow{\mul^{-1}\text{ by~\ref{lem:TensorIso}}} \calA \otimes \calB \xrightarrow{ \tr \otimes \id } \calB, \\
	\tr_\calB &: \Mat(\ZZ^\dd) \xrightarrow{\mul^{-1}\text{ by~\ref{lem:TensorIso}}} \calA \otimes \calB \xrightarrow{ \id \otimes \tr } \calA.\nonumber
\end{align}
Then, for all $x \in \calA$ and $y \in \calB$,
\begin{align}
	\tr(x y) &= \tr(\tr_\calA(xy)) = \tr(\tr_\calB(xy)) = \tr(x)\tr(y), \\
	\norm{xy} &= \norm x \norm y.\nonumber
\end{align}
\end{proposition}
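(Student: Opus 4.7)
The first task is to unpack the two middle equalities, which follow cheaply from the fact that $\mul: \calA \otimes \calB \to \Mat(\ZZ^D)$ is a $*$-isomorphism by Corollary~\ref{lem:TensorIso}. For $x \in \calA$ and $y \in \calB$ the unique $\mul$-preimage of $xy$ is the simple tensor $x \otimes y$, so the definitions give $\tr_\calA(xy) = \tr(x)\, y$ and $\tr_\calB(xy) = x\,\tr(y)$; applying the global trace to either yields $\tr(x)\tr(y)$. All that remains is to identify this with $\tr(xy)$ itself, and then to establish the norm identity.

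For $\tr(xy) = \tr(x)\tr(y)$ I plan to invoke uniqueness of the tracial state on $\Mat(\ZZ^D)$. Specifically, I will verify that $\tr \circ \tr_\calB$ is a tracial state, i.e.\ unital, tracial, and positive. Unitality is immediate from $\tr_\calB(\one) = \one$; the tracial property follows by writing two generic elements via $\mul^{-1}$ as $\sum_i a_i \otimes b_i$ and $\sum_j c_j \otimes d_j$ and using cyclicity of $\tr$ on $\calA$ and on $\calB$ factor-by-factor. The delicate step is positivity: if $\mul^{-1}(m) = \sum_i a_i \otimes b_i$, then $\tr_\calB(m^\dag m) = \sum_{i,j} a_i^\dag a_j\, \tr(b_i^\dag b_j)$, and the Gram matrix $G_{ij} := \tr(b_i^\dag b_j)$ is positive semidefinite (since $\tr(w^\dag w) \ge 0$ for $w \in \calB$); diagonalizing $G$ by a unitary $U$ rewrites $\tr_\calB(m^\dag m)$ as $\sum_k D_{kk}\, c_k^\dag c_k$ with $c_k \in \calA$, which is visibly positive. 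On the other hand, $\tr$ is the unique tracial state on $\Mat(\ZZ^D)$, because any tracial state restricts to each finite matrix algebra $\Mat(S)$ as its unique normalized trace. Hence $\tr \circ \tr_\calB = \tr$, giving $\tr(xy) = \tr(\tr_\calB(xy)) = \tr(x\,\tr(y)) = \tr(x)\tr(y)$.

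For the norm identity, commutation of $x$ and $y$ gives $(xy)^\dag(xy) = x^\dag x \cdot y^\dag y$, reducing the claim to $\norm{AB} = \norm A \norm B$ for commuting positives $A \in \calA$ and $B \in \calB$. Only $\norm{AB} \ge \norm A \norm B$ requires work. Let $P_A \in \calA$ be the spectral projector of $A$ onto its top eigenspace, realized as a polynomial in $A$; set $\sigma_A := P_A/\tr(P_A) \in \calA$, which is PSD with $\tr(\sigma_A) = 1$ and $\tr(A\sigma_A) = \norm A$. Pick $\sigma_B \in \calB$ analogously. Then $\sigma := \sigma_A \sigma_B$ is PSD (commuting positives) with $\tr(\sigma) = 1$ by the trace factorization just established, and $\tr(AB\sigma) = \tr(A\sigma_A)\tr(B\sigma_B) = \norm A \norm B$ by the same factorization applied to $A\sigma_A \in \calA$ and $B\sigma_B \in \calB$. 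Plugging this $\sigma$ into the duality $\norm{AB} = \sup_{\sigma \succeq 0,\, \tr \sigma \le 1} \tr(AB\sigma)$ gives the desired bound. The hardest step is the Gram-matrix positivity verification for $\tr_\calB(m^\dag m)$; once positivity of $\tr \circ \tr_\calB$ is in hand, trace uniqueness and norm duality take care of both halves routinely.
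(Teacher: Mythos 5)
Your proposal is correct, and it reaches the trace identity by a genuinely different route than the paper. The paper restricts to nested boxes, decomposes $\calA_{2L}$ and $\calB_L$ by their minimal central projectors, invokes Lemma~\ref{lem:CenterOfRestricted} to place those projectors near the respective box boundaries so that $\tr(\pi_\mu\tau_\nu)=\tr(\pi_\mu)\tr(\tau_\nu)$, and then applies uniqueness of the normalized trace block by block on finite-dimensional simple algebras. You instead prove that $\tr\circ\tr_\calB$ is a tracial state on all of $\Mat(\ZZ^D)$ and appeal to uniqueness of the tracial state on the full algebra; this bypasses the boundary-support analysis of central projectors entirely, at the cost of the Gram-matrix positivity verification for $\tr_\calB(m^\dag m)$ (which checks out: $G_{ij}=\tr(b_i^\dag b_j)$ is PSD, and diagonalizing it exhibits $\tr_\calB(m^\dag m)$ as a nonnegative combination of elements $c_k^\dag c_k$ with $c_k\in\calA$). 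In fact positivity is not strictly needed for uniqueness --- a unital functional vanishing on commutators is already pinned down on each $\Mat(S)$ --- but it is a natural and correct thing to record, and the paper itself implicitly relies on positivity of $\tr_\calB$ when it replaces $\rho_x$ by $\tr_\calB(\rho_x)\in\calA$ in its norm argument. For the norm identity your argument parallels the paper's (reduce to commuting positives, exhibit a product state $\sigma_A\sigma_B$ achieving $\norm A\norm B$ in the duality formula), except that you construct the optimizing state explicitly as a normalized spectral projector $P_A/\tr(P_A)$, a polynomial in $A$ and hence an element of $\calA$, whereas the paper obtains $\rho_x$ by compactness and then pushes it into $\calA$ with $\tr_\calB$. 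Both routes are sound; yours is somewhat more self-contained since it does not lean on Lemmas~\ref{lem:InvertibleCentral} and~\ref{lem:CenterOfRestricted} for this proposition.
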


\begin{proof}
For $L > 0$ we define $\calA_{2L} = \calA \cap \Mat( [-2L,2L]^\dd \cap \ZZ^\dd )$
and $\calB_{L} = \calB \cap \Mat( [-L,L]^\dd \cap \ZZ^\dd )$.
By~\ref{lem:InvertibleCentral} and~\ref{lem:CommutantOfRestricted},
we know the minimal central projectors $\{ \pi_\mu \} \subset \calA_{2L}$ and
$\{ \tau_\nu \} \subset \calB_L$ are supported near the boundary of the boxes~$[-(2)L,(2)L]^\dd$.
Hence, $\pi_\mu$ and $\tau_\nu$ have disjoint support for all sufficiently large~$L$,
implying $\tr(\pi_\mu \tau_\nu) = \tr(\pi_\mu) \tr(\tau_\nu)$.
For those large~$L$, so large that $\Supp(x)\cup \Supp(y) \subset [-L,L]^\dd$,
we have then a $*$-algebra isomorphism
$\pi_\mu \calA_{2L} \otimes \pi_\nu \calB_L \to \pi_\mu \calA_{2L} \tau_\nu \calB_L$
for each pair $(\mu,\nu)$.
For any fixed~$(\mu,\nu)$, a linear map
$\pi_\mu x  \otimes \tau_\nu y \mapsto \tr(\pi_\mu x)\tr(\tau_\nu y) / \tr(\pi_\mu \tau_\nu)$
is a normalized trace on $\pi_\mu \calA_{2L} \otimes \pi_\nu \calB_L$.
Also, a linear map~$\pi_\mu \pi_\nu z \mapsto \tr(\pi_\mu \pi_\nu z)/\tr(\pi_\mu \tau_\nu)$
is a normalized trace on~$\pi_\mu \calA_{2L} \pi_\nu \calB_L$.
Since a normalized trace is unique on any finite dimensional simple $*$-algebra,
they must agree: $\tr(\pi_\mu x)\tr(\tau_\nu y) = \tr(\pi_\mu x \tau_\nu y)$.
Summing over~$\mu$ and $\nu$,
we obtain the claim that $\tr(xy) = \tr(x)\tr(y)$.
It follows that $\tr(xy) = \tr(\tr_\calB(xy)) = \tr(\tr_\calA(xy))$.

For norm, it suffices to prove the claim for positive semidefinite~$x,y$,
because we will have
$\norm{xy}^2 = \norm{y^\dag x^\dag x y} = \norm{x^\dag x y^\dag y} = \norm{x^\dag x} \norm{y^\dag y} = \norm{x}^2 \norm{y}^2$.
So, assume $x, y \succeq 0$, implying $xy = yx \succeq 0$.
Let $\rho_x \in \Mat(\ZZ^\dd)$ be such 
that~$\rho_x \succeq 0$, $\tr(\rho_x) = 1$, and~$\norm x = \tr(x \rho_x)$;
such~$\rho_x$ exists because in~$\norm x = \sup_\rho \tr(x \rho)$
we can restrict the support of~$\rho$ to that of~$x$,
in which case $\rho$ ranges over a compact set.
We may further assume that $\rho_x \in \calA$
since $\tr(x \rho_x) = \tr(\tr_\calB(x \rho_x)) = \tr(x \tr_\calB(\rho_x))$.
Likewise, we find~$\rho_y \in \calB$ such that~$\rho_y \succeq 0$, $\tr(\rho_y) = 1$,
and~$\norm y = \tr(y \rho_y)$.
Then, $\rho_x \rho_y \succeq 0$ and $\tr(\rho_x \rho_y) = \tr(\rho_x)\tr(\rho_y) = 1$.
Therefore, $\norm{xy} \ge \tr(x y \rho_x \rho_y) = \tr(x \rho_x) \tr(y \rho_y) = \norm x \norm y$.
The opposite inequality~$\norm{x y} \le \norm{x} \norm{y}$ is trivial.
\end{proof}

\begin{definition}
Two $*$-subalgebras~$\calA_1 \subseteq \Mat(\ZZ^{\dd}, p_1)$ 
and~$\calA_2 \subseteq \Mat(\ZZ^{\dd}, p_2)$
are {\bf stably equivalent} (written~$\calA_1 \simeq \calA_2$)
if there exist
two local dimension assignments~$q_1,q_2 : \ZZ^\dd \to \ZZ_{>0}$,
a $*$-algebra isomorphism~$\phi$, and $\ell > 0$ 
such that
\begin{align}
\phi : \calA_1 \otimes \Mat(\ZZ^{\dd},q_1) &\xrightarrow{\quad \cong \quad} \calA_2 \otimes \Mat(\ZZ^{\dd}, q_2 ),\\
\Supp(\phi(x \otimes y)) &\subseteq \Supp(x \otimes y)^{+\ell} \text{ for all }
x \in \calA_1, y \in \Mat(\ZZ^\dd, q_1) .\nonumber
\end{align}
The stable equivalence class of an invertible subalgebra~$\calA$
among all invertible subalgebras of~$\Mat(\ZZ^{\dd})$
is denoted by~$[\calA]$, called the {\bf Brauer class} of~$\calA$.
A {\bf Brauer trivial} or {\bf stably trivial} 
invertible subalgebra is one that is stably equivalent 
to~$\CC = \Mat(\ZZ^{\dd}, \ZZ^{\dd} \to \{1\})$.
The set of all Brauer classes of invertible subalgebras of~$\Mat(\ZZ^\dd)$
is the {\bf Brauer group} of $\dd$-dimensional invertible subalgebras.
\end{definition}

Note that an isomorphism~$\phi$ that gives a stable equivalence
need not be defined on the entire~$\Mat(\ZZ^\dd,p_1)$.

\begin{proposition}\label{lem:BrauerGpInvertibleSubalgebras}
The set of all Brauer classes of invertible subalgebras on~$\ZZ^\dd$
is an abelian group under the tensor product operation:
$[\calA_1 \subseteq \Mat(\ZZ^\dd,p_1)] + [\calA_2 \subseteq \Mat(\ZZ^\dd,p_2)]$ 
is defined to be $[\calA_1 \otimes \calA_2 \subseteq \Mat(\ZZ^\dd,p_1 p_2)]$.
The inverse of~$[\calA]$ is represented by the commutant of~$\calA$
within~$\Mat(\ZZ^\dd)$ in which~$\calA$ resides.
\end{proposition}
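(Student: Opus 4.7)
The plan is to establish, in order, (i) closure of $\otimes$ on invertible subalgebras, (ii) well-definedness on Brauer classes, and (iii) the abelian group axioms, with the substantive content being existence of inverses, which will follow from Corollary~\ref{lem:TensorIso}. Throughout I use the site-wise identification $\Mat(\ZZ^D, p_1 p_2) \cong \Mat(\ZZ^D, p_1) \otimes \Mat(\ZZ^D, p_2)$.

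For closure, let $\calA_i \subseteq \Mat(\ZZ^D, p_i)$ be invertible with spread $\ell_i$ and commutant $\calB_i$. Any $x \in \Mat(S, p_1 p_2)$ writes as a finite sum $x = \sum_k u_k \otimes v_k$ with $u_k \in \Mat(S, p_1)$ and $v_k \in \Mat(S, p_2)$. Applying invertibility of each $\calA_i$ factor-wise and multiplying out yields a decomposition of $x$ by elements of $\calA_1 \otimes \calA_2$ paired with elements of $\calB_1 \otimes \calB_2 \subseteq (\calA_1 \otimes \calA_2)'$, all supported in $S^{+\max(\ell_1, \ell_2)}$, so $\calA_1 \otimes \calA_2$ is invertible. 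Well-definedness on Brauer classes is immediate: given stable equivalences $\phi_i$, the tensor product $\phi_1 \otimes \phi_2$, after the site-wise shuffle of matrix factors, is a locality-preserving isomorphism exhibiting $\calA_1 \otimes \calA_2 \simeq \calA'_1 \otimes \calA'_2$. Associativity and commutativity of $+$ follow similarly from the corresponding properties of the algebraic tensor product together with the zero-spread site-wise swap of matrix factors. The class $[\CC]$ is the identity via the canonical zero-spread isomorphism $\calA \otimes \CC \to \calA$.

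The key step is inverses: I claim $[\calA] + [\calB] = [\CC]$ where $\calB$ is the commutant of $\calA$. By Corollary~\ref{lem:TensorIso}, the multiplication map $\mul: \calA \otimes \calB \to \Mat(\ZZ^D, p)$ is a $*$-algebra isomorphism whose inverse has spread at most the spread $\ell$ of $\calA$. Conversely, any $x \in \calA \otimes \calB$ supported on a finite $R$ as an element of $\Mat(\ZZ^D, p^2)$ lies in $(\calA \cap \Mat(R, p)) \otimes (\calB \cap \Mat(R, p))$ by exactness of $\otimes_\CC$ for vector spaces, so $\mul(x) \in \Mat(R, p)$ also has support contained in $R$; that is, $\mul$ itself has zero spread. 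Reading $\mul$ as a map $(\calA \otimes \calB) \otimes \CC \to \CC \otimes \Mat(\ZZ^D, p)$ gives precisely a stable equivalence $\calA \otimes \calB \simeq \CC$. The main thing I expect to fuss over is the bookkeeping of supports across the identification $\Mat(\ZZ^D, p^2) \cong \Mat(\ZZ^D, p)^{\otimes 2}$; once that is pinned down, both group operations and inverses rest on the single structural input of Corollary~\ref{lem:TensorIso}.
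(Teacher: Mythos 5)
Your proof is correct and follows essentially the same route as the paper: closure and well-definedness come from tensoring decompositions and stable-equivalence isomorphisms, and the inverse is exactly the locality-preserving isomorphism $\mul:\calA\otimes\calB\to\Mat(\ZZ^D)$ of Corollary~\ref{lem:TensorIso}, read as a stable equivalence to $\CC$. You merely spell out the support bookkeeping (zero spread of $\mul$ forward, spread $\ell$ for its inverse) that the paper leaves implicit.
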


\begin{proof}
The tensor product of two invertible subalgebras is an invertible subalgebra 
of the tensor product of the two parent algebras.
If $\calC_1 \otimes \Mat(\ZZ^\dd,c_1) \cong \calA_1 \otimes \Mat(\ZZ^\dd, a_1)$ 
and $\calC_2 \otimes \Mat(\ZZ^\dd, c_2) \cong \calA_2 \otimes \Mat(\ZZ^\dd, a_2)$,
then $\calC_1 \otimes \calC_2 \otimes \Mat(\ZZ^\dd, c_1 c_2) \cong \calA_1 \otimes \calA_2 \otimes \Mat(\ZZ^\dd, a_1 a_2)$.
This means that the group operation is well defined.
The claim on the inverse is already proved in~\ref{lem:TensorIso}.
\end{proof}

Let us remark on the relation between invertible and VS subalgebras.

\begin{proposition}\label{prop:VS}
Let $\calA \subseteq \Mat(\ZZ^\dd)$ be any unital VS $*$-subalgebra
with the commutant~$\calB$ within $\Mat(\ZZ^\dd)$.
Suppose that the multiplication 
map~$\mul : \calA \otimes \calB \to \Mat(\ZZ^\dd)$ is surjective.
Then, $\calA$ is invertible.
\end{proposition}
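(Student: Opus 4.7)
The plan is to exploit the fact that $\mul$ is a $*$-algebra isomorphism (injective by~\ref{lem:VSInjective}, surjective by hypothesis) and show its inverse is locality-preserving. Given $x \in \Mat(\ZZ^D)$ with $S = \Supp(x)$, I write $\mul^{-1}(x) = \sum_i a_i \otimes b_i$ in minimal form so that $\{a_i\} \subset \calA$ and $\{b_i\} \subset \calB$ are both linearly independent, and aim to bound $\Supp(a_i)$ and $\Supp(b_i)$ separately in terms of $S$ and the VS range $\ell_0$ of~$\calA$. Invertibility of~$\calA$ with spread~$2\ell_0$ will follow immediately.

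For the $\calA$-bound, I would argue $\Supp(a_i) \subseteq S^{+\ell_0}$ by a standard VS-uniqueness argument. If some $a_k$ reached a site $u \notin S^{+\ell_0}$, VS would supply $w \in \calA \cap \Mat(\{u\}^{+\ell_0})$ with $[w, a_k] \neq 0$; since $\{u\}^{+\ell_0}$ is disjoint from $S$, $w$ commutes with $x$, so applying $\mul^{-1}$ to $[w, x] = 0$ yields $\sum_i [w, a_i] \otimes b_i = 0$. Linear independence of the $b_i$ then forces $[w, a_i] = 0$ for each~$i$, a contradiction.

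The $\calB$-bound is the hard part, since there is no VS property for~$\calB$ directly available; my remedy is to bootstrap off the $\calA$-bound applied to an auxiliary single-site operator. If $b_k$ had a site $t \notin S^{+2\ell_0}$ in its support, centrality of $\Mat(\{t\})$ would provide $v \in \Mat(\{t\})$ with $[v, b_k] \neq 0$. Applying the $\calA$-bound to $v$ yields $\mul^{-1}(v) = \sum_l c_l \otimes d_l$ with $\{c_l\}$ linearly independent and $\Supp(c_l) \subseteq \{t\}^{+\ell_0}$; by the distance assumption this region is disjoint from $S^{+\ell_0}$, so each $c_l$ commutes with each $a_i$. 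Since $[v, x] = 0$, expanding $[\mul^{-1}(v), \mul^{-1}(x)] = 0$ in $\calA \otimes \calB$ and using $[c_l, a_i] = 0$ collapses it to $\sum_{i, l} (a_i c_l) \otimes [d_l, b_i] = 0$. Because $a_i$ and $c_l$ live on disjoint regions, the products $\{a_i c_l\}$ equal $\{a_i \otimes c_l\}$ under the local tensor factorization of $\Mat$ and so remain linearly independent in~$\calA$; hence $[d_l, b_i] = 0$ for all $i, l$. In particular $[d_l, b_k] = 0$, and since $c_l$ and $b_k$ commute across $\calA$ and $\calB$, this gives $[v, b_k] = \sum_l c_l[d_l, b_k] = 0$, contradicting the choice of~$v$.

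The main obstacle, as noted, is the absence of a VS property for~$\calB$; the bootstrap above circumvents it by locating a surrogate local witness inside~$\calA$. Once both bounds hold, $\mul^{-1}(x)$ lies in the tensor of the $S^{+2\ell_0}$-local subalgebras of $\calA$ and $\calB$, giving exactly the decomposition demanded by~\ref{def:invertible} with spread~$2\ell_0$.
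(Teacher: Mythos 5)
Your proposal is correct and follows essentially the same route as the paper's proof: first bound the $\calA$-factors using the VS witness together with injectivity of $\mul$ from~\ref{lem:VSInjective}, then bound the $\calB$-factors by decomposing a single-site witness $v$ and exploiting linear independence of the products $a_i c_l$ on disjoint supports. The only differences are cosmetic (working with a simultaneously minimal representation and tighter constants than the paper's $10\ell$, $50\ell$).
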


We did not impose the locality condition of invertibility,
but instead assumed the VS property of~\cite{FreedmanHastings2019QCA}.
The assumption on~$\mul$ is automatic in any finite lattice,
so the VS property coincides with our invertibility.
We have shown in~\ref{lem:InvertibleCentral} and~\ref{lem:TensorIso} 
that the converse holds generally.
Ref.~\cite{FreedmanHastings2019QCA} posed an open question
if the commutant of any VS subalgebra is VS.
We answer it yes by~\ref{lem:InvertibilityIsSymmetric} for any finite systems.

\begin{proof}
We have to find a decomposition~$x = \sum_i a_i b_i$ for any~$x \in \Mat(\ZZ^\dd)$
such that~$a_i \in \calA \setminus \CC\one$ and~$b_i \in \calB$ are uniformly local.
A decomposition with no locality property exists since~$\mul$ is surjective.
Assume without loss of generality that $b_i$ are linearly independent.
For any~$w \in \calA$ with $\Supp(w) \cap \Supp(x) = \emptyset$,
we have $0 = [w,x] = \sum_i [w,a_i] b_i = \mul(\sum_i [w,a_i] \otimes b_i)$.
Since $\mul$ is injective by~\ref{lem:VSInjective}, we have $[w,a_i] = 0$ for all~$i$.
Since $\calA$ is VS with range~$\ell$,
we see $\Supp(a_i) \subseteq \Supp(x)^{+10\ell}$ for all~$i$.
We have shown that for any~$x$
there is a decomposition~$x \sum_i a_i b_i$ where $a_i \in \calA$ are all near~$x$ and 
are linearly independent,
and $b_i \in \calB$.

Now, rearrange the sum~$x = \sum_i a_i b_i$ if needed,
so that $a_i$ are linearly independent.
If $\Supp(b_{i_0})$ for some~$i_0$ contains a site~$s$ 
that is distance~$50\ell$ away from~$\Supp(x)$,
then there is an element~$y \in \Mat(\{s\})$ 
such that $[y,b_{i_0}] \neq 0$.
By the previous paragraph,
$y$ also has a decomposition~$y = \sum_j a'_j b'_j$ where~$a'_j \in \calA$
are supported within $10 \ell$-neighborhood of~$s$ and are linearly independent,
and~$b'_j \in \calB$.
Then,
$0 = [x,y] = \sum_{i,j} a_i a'_j [b_i, b'_j] = \mul(\sum_{i,j} a_i a'_j \otimes [b_i,b'_j])$.
By~\ref{lem:VSInjective} again,
we must have $\sum_{i,j} a_i a'_j \otimes[ b_i, b'_j  ] = 0$.
Since $a_i$ and $a'_j$ are far apart in support,
the products~$\{a_i a'_j\}_{i,j}$ are linearly independent, too,
and we must have $[b_i,b'_j] = 0$ for all $i,j$.
Then, $[b_{i_0}, y] = 0$, a contradiction.
So, $\Supp(b_i)$ for all~$i$ are contained within distance~$50\ell$ from~$x$.
\end{proof}

Freedman and Hastings~\cite{FreedmanHastings2019QCA}
have proved that one-dimensional VS $*$-subalgebras 
in any finite lattice are ``trivial,''
generated by a collection of mutually commuting local central simple $*$-subalgebras.
Indeed, 
the Brauer group on the infinite one-dimensional lattice~$\ZZ$ is trivial.
The proof of this result borrows much from the proof of~\cite[Thm.~3.6]{FreedmanHastings2019QCA}
for finite systems with periodic boundary conditions,
and is presented in~\S\ref{app:1d}.

\section{Boundary algebras of QCA}\label{sec:BoundaryAlgebra}

\begin{definition}\label{def:QCA}
A $*$-automorphism~$\alpha$ of~$\Mat(\ZZ^\dd, p)$ is a {\bf QCA} if
there exists a constant~$\ell > 0$, called spread,
such that for all $x \in \Mat(\ZZ^\dd, p)$ we have
\begin{align}
\Supp(\alpha(x)) \subseteq \Supp(x)^{+\ell}.
\end{align}
A QCA~$\alpha$ is a {\bf shift} 
if for any site~$s \in \ZZ^\dd$ there is a 
factorization~$\Mat(\{s\}, p(s)) \cong \bigotimes_j \Mat(\{s\}, p_{s,j})$
for some positive integers~$p_{s,j}$
such that for any~$j$ the image $\alpha(\Mat(\{s\}, p_{s,j}))$ is supported on a single site.
A shift QCA~$\alpha$ is {\bf monolayer}
if each single-site algebra~$\Mat(\{s\}, p(s))$ 
is mapped onto a single-site algebra.
A shift QCA is bilayer or trilayer, {\it etc.}, 
if it is a tensor product of two or three, {\it etc.} monolayer shift QCA.
\end{definition}

\begin{lemma}[\cite{Arrighi2007}]\label{lem:QCAInverseIsQCA}
If~$\alpha$ is a QCA with spread~$\ell$, then~$\alpha^{-1}$ is a QCA with spread~$\ell$.
\end{lemma}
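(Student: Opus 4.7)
The plan is to prove the statement contrapositively at the level of supports: if $\alpha^{-1}(x)$ had some of its support outside $\Supp(x)^{+\ell}$, I would produce an operator that ought to commute with $\alpha^{-1}(x)$ by $\alpha$'s spread condition, yet fails to commute because single-site matrix algebras are central.

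First I would fix an arbitrary $x \in \Mat(\ZZ^D, p)$, set $S = \Supp(x)$, and take any $y \in \Mat(\ZZ^D, p)$ with $\Supp(y) \cap S^{+\ell} = \emptyset$. Since $\alpha$ has spread $\ell$, we have $\Supp(\alpha(y)) \subseteq \Supp(y)^{+\ell} \subseteq \ZZ^D \setminus S$, so $\alpha(y)$ and $x$ have disjoint supports and must commute. Applying the $*$-automorphism $\alpha^{-1}$ to $[\alpha(y), x] = 0$ yields $[y, \alpha^{-1}(x)] = 0$ for every such $y$.

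Next, since $\alpha^{-1}$ is a $*$-automorphism of $\Mat(\ZZ^D, p)$, the element $\alpha^{-1}(x)$ lies in $\Mat(\ZZ^D, p)$ and hence has finite support $T$. Suppose for contradiction there exists $s \in T \setminus S^{+\ell}$. Using the factorization $\Mat(T) \cong \Mat(T \setminus \{s\}) \otimes \Mat(\{s\})$, the minimality of $T$ forces $\alpha^{-1}(x) \notin \Mat(T \setminus \{s\}) \otimes \CC\one$; since $\Mat(\{s\})$ is central, there must be some $y_0 \in \Mat(\{s\})$ with $[y_0, \alpha^{-1}(x)] \neq 0$. But $\{s\}$ is disjoint from $S^{+\ell}$, so the previous paragraph forces $[y_0, \alpha^{-1}(x)] = 0$, a contradiction. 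Hence $\Supp(\alpha^{-1}(x)) \subseteq S^{+\ell}$, which is exactly the spread-$\ell$ bound for $\alpha^{-1}$.

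The only potentially subtle point is the last commutant-support step used to produce $y_0$, but this is a direct instance of the centrality of $\Mat(\{s\})$ already recorded in \S\ref{sec:setup}, so no real obstacle arises. No continuity or density argument is needed because every element of $\Mat(\ZZ^D, p)$ is strictly local and the claim is purely about supports; the argument is symmetric in $\alpha$ and $\alpha^{-1}$, which is why the same constant $\ell$ controls both directions.
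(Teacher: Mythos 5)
Your proof is correct and follows essentially the same route as the paper's: both arguments rest on the duality $\Supp(z)\subseteq S$ iff $z$ commutes with everything supported off $S$, combined with applying the automorphism to the commutator $[\alpha(y),x]=0$. You simply spell out the "commutes with everything outside $S$ implies supported on $S$" direction explicitly via the centrality of single-site algebras, which the paper takes as known.
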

\begin{proof}
For any element $x \in \Mat(\ZZ^\dd)$ and any subset~$S \subseteq \ZZ^\dd$,
we have that $\Supp(x) \subseteq S$ if and 
only if~$[x,y] = 0$ for all~$y \in \Mat(\ZZ^\dd \setminus S)$.
So, the condition for QCA is equivalent to~$[\alpha(x), y] = 0$
for any~$x$ and~$y$ whose supports are~$\ell$-apart.
But, this is equivalent to~$[x,\alpha^{-1}(y)] = 0$ for all such~$x,y$.
\end{proof}

\begin{lemma}\label{lem:TensorFactor}
If a $*$-subalgebra~$\calM$ of~$\Mat(\ZZ^\dd)$ 
contains $\Mat(T)$ for some~$T \subseteq \ZZ^\dd$,
then the multiplication map
\begin{equation}
	\mul: \Mat(T) \otimes (\calM \cap \Mat(\ZZ^\dd \setminus T)) \to \calM
\end{equation}
is a $*$-algebra isomorphism.
\end{lemma}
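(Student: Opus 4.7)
My plan is to verify the three standard properties of an isomorphism in turn: being a well-defined $*$-homomorphism, being injective, and being surjective. The first is immediate from the fact that $\Mat(T)$ and $\calM \cap \Mat(\ZZ^D \setminus T) \subseteq \Mat(\ZZ^D \setminus T)$ have disjoint supports, hence commute element-wise, so $\mul$ respects multiplication, involution, and the unit.

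For injectivity, I would reduce to a finite box. Any finite sum $c = \sum_i t_i \otimes b_i$ in the domain has all $t_i$ and $b_i$ supported inside some finite $S \subset \ZZ^D$. Thus $c$ lies in the restricted domain $\Mat(S \cap T) \otimes \Mat(S \setminus T)$, and its image lies in $\Mat(S)$. But $\Mat(S \cap T) \otimes \Mat(S \setminus T) = \Mat(S)$ is just a regrouping of tensor factors over disjoint site sets, so the restricted multiplication map is the identity isomorphism. In particular, $\mul(c) = 0$ forces $c = 0$.

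The main step is surjectivity, which is where one actually uses the hypothesis $\Mat(T) \subseteq \calM$. Given $m \in \calM$ supported on some finite $S$, decompose $\Mat(S) = \Mat(S \cap T) \otimes \Mat(S \setminus T)$ and pick matrix units $\{e_{pq}\}$ for $\Mat(S \cap T)$. Uniquely write
\begin{equation}
m = \sum_{p,q} e_{pq} \otimes b_{pq}, \qquad b_{pq} \in \Mat(S \setminus T) \subseteq \Mat(\ZZ^D \setminus T).
\end{equation}
The point is to show each $b_{pq}$ lies in $\calM$. Using $e_{ip} e_{ab} e_{qj} = \delta_{pa}\delta_{bq} e_{ij}$, a direct computation gives $\sum_i e_{ip}\, m\, e_{qi} = \one_T \otimes b_{pq}$, which we identify with $b_{pq}$ itself. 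Since $e_{ip}, e_{qi} \in \Mat(T) \subseteq \calM$ and $m \in \calM$, this expresses $b_{pq}$ as an element of $\calM$, and by construction $b_{pq} \in \Mat(\ZZ^D \setminus T)$, so $b_{pq} \in \calM \cap \Mat(\ZZ^D \setminus T)$. This exhibits $m$ in the image of $\mul$.

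The only nonroutine step is the surjectivity, and within it, the trick of recovering the tensor components $b_{pq}$ from $m$ purely through left and right multiplication by elements of $\Mat(T) \subseteq \calM$. Everything else is bookkeeping about disjoint supports in finite boxes, which does not rely on invertibility, VS-ness, or any locality hypothesis on $\calM$.
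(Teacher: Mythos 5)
Your proof is correct. The structure (homomorphism from disjoint supports, injectivity by reduction to a finite box, surjectivity as the step that uses $\Mat(T)\subseteq\calM$) matches the paper's, and the one genuinely nonroutine step — surjectivity — is handled by the same underlying idea as in the paper: recover the $\Mat(\ZZ^D\setminus T)$-components of $m$ by multiplying $m$ on the left and right by elements of $\Mat(T)\subseteq\calM$, so that the extracted components are forced to lie in $\calM$. The implementations differ, though. The paper expands $m$ over a unitary orthonormal basis $\{u_j\}$ of $\Mat(R)$ and extracts the component $s'_k$ via a Haar integral over the unitary group of $\Mat(R)$, using $\int v\,(u_k^\dag u_j)\,v^\dag\,\rd v=\delta_{kj}\one$; you instead expand over matrix units and extract $b_{pq}$ by the finite sum $\sum_i e_{ip}\,m\,e_{qi}=\one\otimes b_{pq}$, which I have checked is a valid identity. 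Your version is slightly more elementary — it avoids integration over a compact group and stays entirely within finite algebraic manipulations — at the cost of fixing a basis of matrix units rather than working basis-independently with the group average; both devices implement the same conditional expectation onto $\Mat(T)'\cap\Mat(S)$. No gaps.
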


\begin{proof}
The map is an injective $*$-homomorphism because the tensor factors in the domain
have disjoint support.
We have to show that $\mul$ is surjective.
Put $S = \ZZ^\dd \setminus T$.
Let $x \in \calM$ be arbitrary.
Using $\Mat(\ZZ^\dd) = \Mat(T) \otimes \Mat(S)$,
we have $x = \sum_i t_i \otimes s_i$ 
where~$t_i \in \Mat(T)$ and~$s_i \in \Mat(S)$.
Since all~$t_i$ are finitely supported, say on a common finite set~$R \subseteq T$,
we may expand $t_i = \sum_j \lambda_{ij} u_j$ where $u_j$ are unitary orthonormal elements
under the Hilbert--Schmidt inner product defined on~$\Mat(R)$ and~$\lambda_{ij} \in \CC$.
Then, $x = \sum_i (\sum_j \lambda_{ij}u_j) \otimes s_i = \sum_j u_j \otimes (\sum_i \lambda_{ij} s_i)$.

We claim that each~$s'_j = \sum_i \lambda_{ij} s_i \in \Mat(S)$ 
is actually in~$\calM \cap \Mat(S)$;
this will show the surjectivity.
Since $R$ is finite, 
we can consider the Haar probability measure~$\rd v$ on
the unitary group~$\{v\}$ of~$\Mat(R) \subseteq \calM$.
Every $v$ commutes with~$\Mat(S)$.
Conjugation-integral allows us to evaluate the Hilbert--Schmidt inner product:
$\int v (u_k^\dag u_j) v^\dag \rd v = \delta_{kj} \one$
where $\delta_{kj}$ is the Kronecker delta.
Then,
\begin{equation}
\calM \ni \int v u_k^\dag x v^\dag \rd v
= \int v u_k^\dag (\sum_j u_j s'_j) v^\dag \rd v
= \sum_j \int  v u_k^\dag u_j v^\dag s'_j \rd v
= \sum_j \delta_{jk}\one s'_j = s'_k . \qedhere
\end{equation}
\end{proof}

\begin{lemma}[\cite{GNVW,FreedmanHastings2019QCA}]\label{lem:BoundaryAlgebra}
Let $\alpha$ be a QCA with spread~$\ell > 1$ on~$\Mat(\ZZ^\dd)$.
For any~$n \in \ZZ$, we define
\begin{align}
	H(n) &= (\ZZ \cap (-\infty, n]) \times \ZZ^{\dd-1} & \text{(half lattice)},\nonumber\\
	T(n,\ell) &= (\ZZ \cap (-\infty, n-\ell]) \times \ZZ^{\dd-1} &\text{(``interior'' of }H_n),\label{eq:BoundaryAlgebra}\\
	S(n,\ell) &= (\ZZ \cap [n-\ell+1,n+\ell]) \times \ZZ^{\dd-1} &\text{(``boundary'' of }H_n),\nonumber\\
	\calB(n,\ell) &= \alpha(\Mat(H(n))) \cap \Mat(S(n,\ell)).\nonumber
\end{align}
Then, $\calB(n,\ell)$ is a unital $*$-subalgebra of~$\Mat(S(n,\ell))$ 
such that the multiplication map
\begin{equation}
	\mul : \Mat(T(n,\ell)) \otimes \calB(n,\ell) \to \alpha(\Mat(H(n)))
\end{equation}
is a $*$-algebra isomorphism.
\end{lemma}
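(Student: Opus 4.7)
The plan is to apply Lemma~\ref{lem:TensorFactor} with $\calM = \alpha(\Mat(H(n)))$ and $T = T(n,\ell)$. Two ingredients are needed: first, that $\Mat(T(n,\ell))$ is contained in $\alpha(\Mat(H(n)))$, so that Lemma~\ref{lem:TensorFactor} is applicable; and second, that the ``complement factor'' $\alpha(\Mat(H(n))) \cap \Mat(\ZZ^D \setminus T(n,\ell))$ coincides with the boundary algebra $\calB(n,\ell)$ as defined. Once both are in place, the asserted isomorphism is just the content of Lemma~\ref{lem:TensorFactor}, and the unital $*$-subalgebra property of $\calB(n,\ell)$ is automatic from being the intersection of the two unital $*$-subalgebras $\alpha(\Mat(H(n)))$ and $\Mat(S(n,\ell))$.

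For the first ingredient, I would invoke Lemma~\ref{lem:QCAInverseIsQCA} to ensure that $\alpha^{-1}$ is also a QCA of spread~$\ell$. For any $x \in \Mat(T(n,\ell))$ we have $\Supp(x) \subseteq (-\infty,n-\ell] \times \ZZ^{D-1}$, hence $\Supp(\alpha^{-1}(x)) \subseteq T(n,\ell)^{+\ell} \subseteq (-\infty,n] \times \ZZ^{D-1} = H(n)$. Therefore $\alpha^{-1}(x) \in \Mat(H(n))$, equivalently $x \in \alpha(\Mat(H(n)))$, giving the containment $\Mat(T(n,\ell)) \subseteq \alpha(\Mat(H(n)))$.

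For the second ingredient, I would use the spread-$\ell$ property of $\alpha$ itself: every element of $\alpha(\Mat(H(n)))$ has support in $H(n)^{+\ell} = H(n+\ell)$, and so $\alpha(\Mat(H(n))) \subseteq \Mat(H(n+\ell))$. Since $H(n+\ell) = T(n,\ell) \sqcup S(n,\ell)$ is a disjoint union by construction, an element of $\Mat(H(n+\ell))$ whose support avoids $T(n,\ell)$ must already be supported in $S(n,\ell)$. Intersecting with $\Mat(\ZZ^D \setminus T(n,\ell))$ thus gives exactly $\alpha(\Mat(H(n))) \cap \Mat(S(n,\ell)) = \calB(n,\ell)$. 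I do not foresee any real obstacle here: the three sets $T(n,\ell)$, $S(n,\ell)$, and $H(n+\ell)$ are engineered so that the spread-$\ell$ constraints for $\alpha$ and $\alpha^{-1}$ line up on either side, and the argument is essentially a bookkeeping exercise built on Lemmas~\ref{lem:QCAInverseIsQCA} and~\ref{lem:TensorFactor}.
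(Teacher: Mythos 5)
Your proposal is correct and follows essentially the same route as the paper: the paper's proof likewise observes that $\alpha(\Mat(H))\subseteq \Mat(T\sqcup S)$, uses Lemma~\ref{lem:QCAInverseIsQCA} to get $\Mat(T)\subseteq\alpha(\Mat(H))$, and then applies Lemma~\ref{lem:TensorFactor}. You have merely spelled out the bookkeeping (in particular the identification $\calM\cap\Mat(\ZZ^D\setminus T)=\calM\cap\Mat(S)$) that the paper leaves implicit.
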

\begin{proof}
We suppress $n$ and $\ell$ from notation in this proof.
Observe that $\alpha(\Mat(H))$ is contained in~$\Mat(T \sqcup S)$.
Since $\alpha^{-1}$ is a QCA with the same spread by~\ref{lem:QCAInverseIsQCA},
we have $\alpha^{-1}(\Mat(T)) \subseteq \Mat(H)$
and therefore $\Mat(T) \subseteq \alpha(\Mat(H))$.
Apply~\ref{lem:TensorFactor} to conclude the proof.
\end{proof}

\begin{definition}
The subalgebra~$\calB(n,\ell)$ in~\ref{lem:BoundaryAlgebra} is called a {\bf boundary algebra}
of~$\alpha$ on the positive first axis.
If the slab~$S(n,\ell)$ in~\ref{lem:BoundaryAlgebra} is taken orthogonal to the~$j$-th direction,
then we call it a boundary algebra on the positive~$j$-th axis.
\end{definition}

\begin{lemma}\label{lem:BoundaryAlgebraIsInvertible}
Every boundary algebra of a QCA on~$\ZZ^\dd$ is an invertible subalgebra on~$\ZZ^{\dd-1}$.
If the QCA has spread at most~$\ell$, 
then the invertible subalgebra has spread at most~$2\ell$.
\end{lemma}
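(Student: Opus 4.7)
The plan is to identify $\Mat(S(n,\ell))$ with a local operator algebra on~$\ZZ^{D-1}$ by grouping each column $[n-\ell+1,n+\ell]\times\{s\}$ (for $s\in\ZZ^{D-1}$) into a single coarse-grained site, and then to exhibit a uniformly local $\calB(n,\ell)\otimes\calB(n,\ell)'$-type decomposition of each element of the slab.

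First I would introduce the mirror boundary algebra $\calC(n,\ell):=\alpha(\Mat(H(n)^c))\cap\Mat(S(n,\ell))$, where $H(n)^c=[n+1,\infty)\times\ZZ^{D-1}$. Applying the argument of~\ref{lem:BoundaryAlgebra} to this complementary half-space (legitimate because $\alpha^{-1}$ has the same spread by~\ref{lem:QCAInverseIsQCA}) yields an isomorphism $\Mat(T'(n,\ell))\otimes\calC(n,\ell)\to\alpha(\Mat(H(n)^c))$, where $T'(n,\ell):=[n+\ell+1,\infty)\times\ZZ^{D-1}$. Since $\Mat(H(n))$ and $\Mat(H(n)^c)$ are mutual commutants in $\Mat(\ZZ^D)$, their $\alpha$-images commute, so $\calC(n,\ell)\subseteq\calB(n,\ell)'$ inside $\Mat(S(n,\ell))$.

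Next, for any $x\in\Mat(S(n,\ell))$ with finite support~$R$, I would set $\tilde x:=\alpha^{-1}(x)$, supported in $R^{+\ell}$, and expand it in a local product basis for $\Mat(R^{+\ell}\cap H(n))\otimes\Mat(R^{+\ell}\cap H(n)^c)$ to write $\tilde x=\sum_i p_i q_i$ with the evident half-space supports. Applying $\alpha$, each $\alpha(p_i)\in\alpha(\Mat(H(n)))$ is supported in $R^{+2\ell}\cap(T(n,\ell)\cup S(n,\ell))$, and symmetrically for $\alpha(q_i)$. The key step is then to split each $\alpha(p_i)$ as $\sum_k t_{ik}b_{ik}$ with $t_{ik}\in\Mat(T(n,\ell))$ and $b_{ik}\in\calB(n,\ell)$, both supported in $R^{+2\ell}$. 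For this I would apply~\ref{lem:TensorFactor} to the finite-volume subalgebra $\alpha(\Mat(H(n)))\cap\Mat(R^{+2\ell}\cap(T(n,\ell)\cup S(n,\ell)))$, which contains $\Mat(R^{+2\ell}\cap T(n,\ell))$ and whose complementary tensor factor is automatically inside $\calB(n,\ell)$. An analogous split handles $\alpha(q_i)$ with $T'(n,\ell)$ and $\calC(n,\ell)$.

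Substituting these splits and applying the normalized-trace conditional expectation $\tr_T\otimes\tr_{T'}$ onto $\Mat(S(n,\ell))$ — which fixes~$x$ but collapses the $\Mat(T(n,\ell))$ and $\Mat(T'(n,\ell))$ factors to scalars — delivers $x=\sum_{i,k,l}\tr(t_{ik})\tr(t'_{il})\,b_{ik}c_{il}$ with $b_{ik}\in\calB(n,\ell)$, $c_{il}\in\calC(n,\ell)\subseteq\calB(n,\ell)'$, all supported in $R^{+2\ell}\cap S(n,\ell)$. Under the column-grouping identification, this last set corresponds exactly to the $2\ell$-neighborhood in $\ZZ^{D-1}$ of the projection of~$R$, which verifies invertibility of $\calB(n,\ell)$ with spread~$2\ell$. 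The main obstacle is this localization step: the naive global decomposition $\alpha(\Mat(H(n)))\cong\Mat(T(n,\ell))\otimes\calB(n,\ell)$ offers no support control on the factors, so one has to invoke~\ref{lem:TensorFactor} on a finite sub-box, which succeeds precisely because $\Mat$ of any subset of $T(n,\ell)$ sits inside $\alpha(\Mat(H(n)))$.
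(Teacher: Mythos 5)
Your proposal is correct and follows essentially the same route as the paper: pull $x$ back by $\alpha^{-1}$, decompose it across the half-space cut, push forward, and then apply a conditional expectation to trim the parts overhanging into $T(n,\ell)$ and beyond $S(n,\ell)$. The paper implements that last projection as a Haar twirl over the unitary groups of the overhang regions, which is the same conditional expectation as your \ref{lem:TensorFactor}-plus-normalized-partial-trace step, so the two arguments coincide up to presentation.
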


\begin{proof}
We use the notations from the statement of~\ref{lem:BoundaryAlgebra}.
Let~$\calA$ be the commutant of~$\calB$ within~$\Mat(S)$.
Let~$y \in \Mat(S)$ be arbitrary.
Then, $\alpha^{-1}(y)$ is supported on a thickened slab 
that straddles between~$H$ and~$\ZZ^\dd \setminus H$.
Using~$\Mat(\ZZ^\dd) = \Mat(H) \otimes \Mat(\ZZ^\dd \setminus H)$,
we obtain~$\alpha^{-1}(y) = \sum_i b'_i \otimes a'_i$ 
where~$b'_i \in \Mat(H)$ and~$a'_i \in \Mat(\ZZ^\dd \setminus H)$.
By~\ref{lem:QCAInverseIsQCA}, $\alpha^{-1}(y)$ is supported 
on the $\ell$-neighborhood of~$\Supp(y)$,
and hence each $b'_i$ and $a'_i$ can be chosen to be within the same neighborhood.
Put~$b_i = \alpha(b'_i) \in \alpha(\Mat(H))$ and~$a_i = \alpha(a'_i)$;
they are supported within distance $2\ell$ from~$\Supp(y)$.
Then, $y = \sum_i b_i a_i$.
This is almost what we want for the invertibility
except that we have to show that~$b_i \in \calB$ and~$a_i \in \calA$.

Let $A$ be a finite set of sites outside~$S$ such that $S \sqcup A$ supports all~$a_i$.
We can choose $A$ not to intersect~$T$ 
since~$a'_i$ is on a region with the first coordinate larger than~$n$,
whose $\ell$-neighborhood cannot reach~$T$.
Similarly, there is a finite set~$B$ of sites contained in~$T$
such that $B \sqcup S$ supports all~$b_i$.
Let $\{u\}$ be the finite dimensional unitary group of~$\Mat(A)$
and $\{v\}$ be that of~$\Mat(B)$.
By construction, $[u,v] = 0$ for all~$u,v$,
and moreover $[u,b_i]  = 0$ and $[v,a_i] = 0$ for their supports are disjoint.
Then, the projection 
by a Haar integral~$y  = \iint u v y u^\dag v^\dag \rd u \rd v = \sum_i (\int v b_i v^\dag \rd v) (\int u a_i u^\dag \rd u)$ gives a desired decomposition for the invertibility
because the projection  does not enlarge the support within~$S$.
\end{proof}

The spread of a QCA is only an upper bound on 
how much the support of the image of
an element grows.
Hence, when taking a boundary algebra according to~\ref{lem:BoundaryAlgebra},
a larger $\ell$ may be used, in which case the resulting subalgebra in~$\dd-1$ dimensions
will be larger.
Specifically, from~\eqref{eq:BoundaryAlgebra} we see that
the larger boundary algebra is always of form~$\calB \otimes \Mat(R \times \ZZ^{\dd-1})$
for some finite set~$R$ of coordinates on the first axis.
The extra tensor factor is a local operator algebra on~$\ZZ^{\dd-1}$
where each site is now occupied with a larger qudit that combines all qudits along~$R$.
Hence, a larger~$\ell$ gives a stably equivalent boundary invertible subalgebra.
Furthermore, we have the following.

\begin{lemma}\label{lem:BoundaryAlgebraIndependentOfCut}
Given any $\dd$-dimensional QCA~$\alpha$,
the boundary invertible algebras 
corresponding to different~$n$ (the location of the boundary) in~\ref{lem:BoundaryAlgebra}
are stably equivalent to one another.
\end{lemma}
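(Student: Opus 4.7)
The plan is to iterate on $|n_2 - n_1|$ and reduce to consecutive cuts $n_2 = n_1 + 1$, which I rename $n$ and $n+1$. The key algebraic input is $\Mat(H(n+1)) = \Mat(H(n)) \otimes \Mat(C)$, where $C = \{n+1\} \times \ZZ^{D-1}$, so that $\alpha(\Mat(H(n+1))) = \alpha(\Mat(H(n))) \cdot \alpha(\Mat(C))$ as a commuting product. I choose $\tilde S = [n-\ell+1,\,n+\ell+1] \times \ZZ^{D-1}$, which contains both $S(n,\ell)$ and $S(n+1,\ell)$; then $\alpha(\Mat(C)) \subseteq \Mat(\tilde S)$ since the QCA has spread $\ell$.

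Next, I compute $\alpha(\Mat(H(n+1))) \cap \Mat(\tilde S)$ in two ways. From~\ref{lem:BoundaryAlgebra}, $\alpha(\Mat(H(n+1))) = \Mat(T(n+1,\ell)) \otimes \calB(n+1,\ell)$, and since $T(n+1,\ell) \cap \tilde S = T'' := \{n-\ell+1\} \times \ZZ^{D-1}$, the intersection is $\Mat(T'') \otimes \calB(n+1,\ell)$. Alternatively, using the commuting factorization, I claim the intersection equals $\calB(n,\ell) \cdot \alpha(\Mat(C))$. The inclusion~$\supseteq$ is clear. For $\subseteq$, write $z = \sum_i a_i c_i$ with $a_i \in \alpha(\Mat(H(n)))$ and $c_i \in \alpha(\Mat(C))$ linearly independent, and expand $a_i = \sum_k t_{ik} s_{ik}$ along $\Mat(T(n,\ell)) \otimes \calB(n,\ell)$. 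The factors $s_{ik} c_i$ all lie in $\Mat(\ZZ^D \setminus T(n,\ell))$, while $T(n,\ell) \cap \tilde S = \emptyset$, so applying the normalized partial trace over $\Mat(T(n,\ell))$ to the identity $z = \sum_{i,k} t_{ik} \otimes s_{ik} c_i$ collapses it to $z = \sum_i \bigl(\sum_k \tr(t_{ik})\, s_{ik}\bigr) c_i \in \calB(n,\ell) \cdot \alpha(\Mat(C))$. Equating the two descriptions yields
\[
\calB(n,\ell) \cdot \alpha(\Mat(C)) = \Mat(T'') \otimes \calB(n+1,\ell) \quad \text{inside}\ \Mat(\tilde S).
\]

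By~\ref{lem:VSInjective} applied to the invertible subalgebra $\calB(n,\ell) \subseteq \Mat(\tilde S)$, multiplication on the left side is injective, so the displayed identity is an abstract $*$-isomorphism $\calB(n,\ell) \otimes \alpha(\Mat(C)) \cong \Mat(T'') \otimes \calB(n+1,\ell)$. Since $\alpha$ restricts to a locality-preserving $*$-isomorphism $\Mat(C) \to \alpha(\Mat(C))$ of spread~$\ell$, composing gives a locality-preserving $*$-isomorphism $\calB(n,\ell) \otimes \Mat(C) \cong \Mat(T'') \otimes \calB(n+1,\ell)$ of spread bounded by a constant multiple of~$\ell$. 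Both $\Mat(C)$ and $\Mat(T'')$ are full local operator algebras on~$\ZZ^{D-1}$ (from the columns $\{n+1\}$ and $\{n-\ell+1\}$, viewed as copies of $\ZZ^{D-1}$); tensoring both sides by a common full local operator algebra on~$\ZZ^{D-1}$ balances the possibly different site dimensions along these two columns, producing the stable equivalence $\calB(n,\ell) \simeq \calB(n+1,\ell)$.

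The main obstacle is verifying the reverse containment in the central identity—tracking how support that leaks into $T(n,\ell)$ is extinguished by partial tracing—after which the stable equivalence follows from locality-preservation of $\alpha|_{\Mat(C)}$ and routine local-dimension bookkeeping.
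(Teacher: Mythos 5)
Your proof is correct, and it rests on the same core identity as the paper's: moving the cut from $n$ to $n+1$ multiplies the boundary algebra by $\alpha(\Mat(C))$, the image of one full column, which stabilization then absorbs. The mechanics differ, however. The paper enlarges the slab width from $\ell$ to $\ell'$ so that both $\calB(n,\ell')$ and $\alpha(\Mat(P))$ (with $P = H(n+1)\setminus H(n) = C$) sit inside the honest boundary algebra $\calB(n+1,\ell'+1)$, asserts $\calB(n+1,\ell'+1)=\calB(n,\ell')\,\alpha(\Mat(P))$ by inspection, and then pulls back through $\alpha^{-1}$ so that \ref{lem:TensorFactor} splits off $\Mat(P)$ as a tensor factor; the stable equivalences are chained through the locality-preserving map $\alpha^{-1}$. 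You instead keep $\ell$ fixed, introduce the auxiliary slab $\tilde S$, prove the corresponding identity $\calB(n,\ell)\,\alpha(\Mat(C))=\Mat(T'')\otimes\calB(n+1,\ell)$ by a two-sided conditional-expectation computation, and convert the product on the left into a tensor product via the injectivity of $\mul$ from \ref{lem:VSInjective}. Your route never leaves the image algebra and makes the paper's ``by inspection'' step rigorous; the paper's is shorter because \ref{lem:TensorFactor} hands it the factorization for free. Two minor points: you should record that $\alpha(\Mat(C))$ commutes with $\calB(n,\ell)$ (immediate from $[\Mat(C),\Mat(H(n))]=0$ and the fact that $\alpha$ is an isomorphism) so that restricting $\mul$ to $\calB(n,\ell)\otimes\alpha(\Mat(C))$ makes sense, and that $\calB(n,\ell)$ is still VS when regarded inside the enlarged slab algebra $\Mat(\tilde S)$ (the local witnesses are unchanged). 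The final ``balancing'' of local dimensions is unnecessary: the definition of stable equivalence already permits distinct stabilizers $q_1 = p(n+1,\cdot)$ and $q_2 = p(n-\ell+1,\cdot)$ on the two sides, so your displayed isomorphism is literally of the required form.
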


\begin{proof}
We use the same notation as in~\eqref{eq:BoundaryAlgebra}.
We know that $[\calB(n,\ell)] = [\calB(n,\ell')]$ for any~$n$ and any~$\ell' > \ell$.
For sufficiently large~$\ell'$,
we see that $\alpha^{-1}(\calB(n+1,\ell'+1))$ 
contains~$\Mat(P)$, where $P = H(n+1) \setminus H(n)$.
By~\ref{lem:TensorFactor},
$\alpha^{-1}(\calB(n+1,\ell'+1))$ has~$\Mat(P)$ as a tensor factor.
On the other hand, 
by inspection of the definition,
$\calB(n+1,\ell'+1)$ is nothing but~$\calB(n,\ell') \alpha(\Mat(P))$.
Therefore, $\alpha^{-1}(\calB(n+1, \ell'+1)) \cong \alpha^{-1}(\calB(n,\ell')) \otimes \Mat(P)$.
Hence,
\begin{align}
\calB(n,\ell) &\simeq \calB(n,\ell') \simeq \alpha^{-1}(\calB(n,\ell'))
\simeq \alpha^{-1}(\calB(n,\ell')) \otimes \Mat(P)\\
&\simeq \alpha^{-1}(\calB(n+1,\ell'+1))
\simeq \calB(n+1,\ell'+1)
\simeq \calB(n+1,\ell).\qedhere\nonumber
\end{align}
\end{proof}

Therefore, we may speak of~\emph{the} Brauer class of boundary algebras
of a QCA on a positive axis.
We have little idea 
on how a different choice of an axis on which we take a boundary algebra
affects the Brauer class of the boundary invertible subalgebra.
Indeed, we may not expect a complete spatial isotropy
since a boundary algebra on the \emph{negative} axis 
is the Brauer inverse of that on the positive axis,
as shown in~\ref{lem:InvertibleSubalgebraToQCA} below.
In~\S\ref{sec:discussion} we will remark that
boundary algebras on different axis ``blend'' in a certain sense.

\begin{lemma}\label{lem:BoundaryAlgebraOnNegative}
Using the notation from~\eqref{eq:BoundaryAlgebra},
we have
\begin{align}
\calA(n,\ell)\calB(n,\ell) = \Mat(S(n,\ell)) 
\quad \text{ where }\quad \calA(n,\ell) = \alpha( \Mat(\ZZ^\dd \setminus H(n)) ) \cap \Mat(S(n,\ell)).
\end{align}
\end{lemma}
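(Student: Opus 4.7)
The inclusion $\calA(n,\ell)\calB(n,\ell) \subseteq \Mat(S(n,\ell))$ is immediate since both factors are by definition contained in~$\Mat(S(n,\ell))$. The plan for the reverse inclusion is to produce, for any $y \in \Mat(S(n,\ell))$, a decomposition $y = \sum_{j,k} c_{jk}\, b_j\, a_k$ with $b_j \in \calB(n,\ell)$, $a_k \in \calA(n,\ell)$, and $c_{jk} \in \CC$, by first writing $y$ in the coarser form $\sum_i B_i A_i$ with $B_i \in \alpha(\Mat(H(n)))$ and $A_i \in \alpha(\Mat(\ZZ^D \setminus H(n)))$ and then trimming the supports of the factors into $S(n,\ell)$ via a Haar-integral projection, in the style of the proof of~\ref{lem:BoundaryAlgebraIsInvertible}.

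To obtain the coarse decomposition, apply~$\alpha^{-1}$ to~$y$ to land in $\Mat(\Supp(y)^{+\ell})$, use the tensor decomposition $\Mat(\ZZ^D) = \Mat(H(n)) \otimes \Mat(\ZZ^D \setminus H(n))$ to write $\alpha^{-1}(y) = \sum_i B'_i A'_i$ with $B'_i \in \Mat(H(n))$ and $A'_i \in \Mat(\ZZ^D \setminus H(n))$, and apply~$\alpha$ to produce $y = \sum_i B_i A_i$ with $B_i = \alpha(B'_i)$ and $A_i = \alpha(A'_i)$.

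By~\ref{lem:BoundaryAlgebra}, the subalgebra $\alpha(\Mat(H(n)))$ has $\Mat(T(n,\ell))$ as a tensor factor with complement~$\calB(n,\ell)$, so $B_i = \sum_j t_{ij}\, b_{ij}$ with $t_{ij} \in \Mat(T(n,\ell))$ and $b_{ij} \in \calB(n,\ell)$. Applying the same lemma to the opposite half-lattice $\ZZ^D \setminus H(n)$, whose role in the proof of~\ref{lem:BoundaryAlgebra} is completely symmetric since it uses only that $\alpha$ and $\alpha^{-1}$ are both QCA of the same spread by~\ref{lem:QCAInverseIsQCA}, the subalgebra $\alpha(\Mat(\ZZ^D \setminus H(n)))$ has $\Mat(T'(n,\ell))$ as a tensor factor with complement~$\calA(n,\ell)$, where $T'(n,\ell) = (\ZZ \cap [n+\ell+1,\infty)) \times \ZZ^{D-1}$; thus $A_i = \sum_k r_{ik}\, a_{ik}$ with $r_{ik} \in \Mat(T'(n,\ell))$ and $a_{ik} \in \calA(n,\ell)$. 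Since $T(n,\ell), S(n,\ell), T'(n,\ell)$ are pairwise disjoint, all these factors commute pairwise apart from $b_{ij}$ and $a_{ik}$, and one obtains
\begin{equation}
y = \sum_{i,j,k} t_{ij}\, r_{ik}\, b_{ij}\, a_{ik}.
\end{equation}

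To finish, choose a finite $P \subseteq T(n,\ell)$ supporting every $t_{ij}$ and a finite $Q \subseteq T'(n,\ell)$ supporting every $r_{ik}$. Since $y, b_{ij}, a_{ik}$ are all supported in $\Mat(S(n,\ell))$ and hence disjointly from $P \cup Q$, conjugation by any unitary $u \in \Mat(P)$ or $v \in \Mat(Q)$ fixes each of them. Averaging the displayed identity against the Haar measures on the unitary groups of $\Mat(P)$ and $\Mat(Q)$ replaces $t_{ij}$ by $\tr(t_{ij})\one$ and $r_{ik}$ by $\tr(r_{ik})\one$, yielding $y = \sum_{i,j,k} \tr(t_{ij})\tr(r_{ik})\, b_{ij}\, a_{ik} \in \calB(n,\ell)\calA(n,\ell) = \calA(n,\ell)\calB(n,\ell)$, as desired. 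The one step that requires care is the mirrored application of~\ref{lem:BoundaryAlgebra} on the upper half-lattice, which is precisely where~\ref{lem:QCAInverseIsQCA} is used to ensure the symmetry in $\alpha$ and $\alpha^{-1}$.
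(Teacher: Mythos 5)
Your proof is correct and follows essentially the same route as the paper: both rest on the two tensor factorizations $\alpha(\Mat(H(n))) = \Mat(T(n,\ell))\,\calB(n,\ell)$ and $\alpha(\Mat(\ZZ^D\setminus H(n))) = \calA(n,\ell)\,\Mat(T'(n,\ell))$ obtained from~\ref{lem:BoundaryAlgebra} applied to both half-lattices. The only difference is that you make explicit, via the Haar-average projection onto $\Mat(S(n,\ell))$, the final extraction step that the paper leaves implicit in ``multiplying these two \ldots we see that''; this is a worthwhile clarification but not a different argument.
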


\begin{proof}
By construction, we have $\calA(n,\ell)\calB(n,\ell) \subseteq \Mat(S(n,\ell))$.
Considering half spaces towards positive infinity,
the result of~\ref{lem:BoundaryAlgebra} is recast as
\begin{align}
\Mat(T(n,\ell)) \calB(n,\ell) &= \alpha(\Mat(H(n)))\\
\calA(n,\ell) \Mat(\ZZ^\dd \setminus (T(n,\ell)\cup S(n,\ell)) ) &= \alpha(\Mat(\ZZ^\dd \setminus H(n))).\nonumber
\end{align}
Multiplying these two, since $\alpha$ is surjective,
we see that $\calA(n,\ell)\calB(n,\ell) \supseteq \Mat(S(n,\ell))$.
\end{proof}

\begin{lemma}\label{lem:InvertibleSubalgebraToQCA}
Given any invertible subalgebra~$\calB$ in $\Mat(\ZZ^{\dd-1},p)$,
there exists a local operator algebra~$\Mat(\ZZ^\dd, 1 \times p)$
and a QCA~$\alpha$ on it, of which~$\calB$ is stably equivalent to a boundary algebra.
If $\calB$ has spread at most~$\ell$, then $\alpha$ has spread at most~$\ell$.
\end{lemma}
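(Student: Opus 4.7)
My plan is to construct $\alpha$ as a ``pump'' QCA that shifts the $\calB$-component on each horizontal layer by one step in the new direction while leaving the commutant $\calA = \calB'$ fixed, in the spirit of the one-dimensional shift construction of GNVW. I take $\ZZ^D = \ZZ \times \ZZ^{D-1}$ with local dimension $(m,s) \mapsto p(s)$, so that each layer $\{m\} \times \ZZ^{D-1}$ carries a copy of the ambient algebra of $\calB$; write $\calA_m$ and $\calB_m$ for the corresponding copies of $\calA$ and $\calB$ sitting on layer $m$. By \ref{lem:TensorIso} applied layer by layer, the multiplication map $\bigotimes_{m \in \ZZ}(\calA_m \otimes \calB_m) \to \Mat(\ZZ^D, 1 \times p)$ is a $*$-algebra isomorphism preserving locality.

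I then define $\alpha$ on this tensor decomposition factorwise by $\alpha|_{\calA_m} = \id$ and $\alpha|_{\calB_m} = T_m$, where $T_m : \calB_m \to \calB_{m+1}$ is the unit translation in the first coordinate. Since the image subalgebras $\calA_m$ and $T_m(\calB_m) = \calB_{m+1}$ all sit on pairwise disjoint layers, every required commutation relation is automatic and $\alpha$ extends to a $*$-algebra endomorphism of $\Mat(\ZZ^D)$. The image contains every $\calA_m$ and every $\calB_m$, which together generate the whole algebra, and injectivity follows from simplicity (\ref{lem:csa}), so $\alpha$ is a $*$-automorphism. For spread, given a single-site operator $x$ at $(m,s)$, the invertibility of $\calB$ with spread $\ell$ yields $x = \sum_i a_i b_i$ with $a_i \in \calA_m$ and $b_i \in \calB_m$ supported inside $\{s\}^{+\ell}$ within the layer; then $\alpha(x) = \sum_i a_i T_m(b_i)$ is supported in $\{m, m+1\} \times \{s\}^{+\ell} \subseteq \{(m,s)\}^{+\ell}$, assuming $\ell \ge 1$ (which we may take without loss of generality). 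The bound extends to arbitrary $x$ by expressing it as a polynomial in single-site operators.

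Finally I read off a boundary algebra and confirm stable equivalence with $\calB$. At $n = 0$, the formula for $\alpha$ combined with the multiplication iso gives
\begin{equation*}
\alpha(\Mat(H(0))) \;\cong\; \bigotimes_{m \le 0} \calA_m \;\otimes\; \bigotimes_{m \le 1} \calB_m .
\end{equation*}
The interior tensor factor $\Mat(T(0,\ell)) \cong \bigotimes_{m \le -\ell}(\calA_m \otimes \calB_m)$ peels off cleanly (both $\calA_m$ and $\calB_m$ lie in the image for $m \le -\ell$), so by \ref{lem:BoundaryAlgebra} the boundary algebra is
\begin{equation*}
\calB(0,\ell) \;\cong\; \bigotimes_{-\ell < m \le 0}\calA_m \;\otimes\; \bigotimes_{-\ell < m \le 1}\calB_m \;\cong\; \Big(\bigotimes_{-\ell < m \le 0}(\calA_m \otimes \calB_m)\Big) \otimes \calB_1 .
\end{equation*}
The first factor is a full local operator algebra on $\ZZ^{D-1}$ (with inflated local dimension coming from the $\ell$ layers), and $\calB_1$ is a copy of the original invertible subalgebra $\calB$; thus $\calB(0,\ell)$ is $\calB$ tensored with a full local algebra, hence stably equivalent to $\calB$ via a locality-preserving isomorphism obtained by collapsing the slab $S(0,\ell)$ into a single copy of $\ZZ^{D-1}$. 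The only delicate point is the global well-definedness of $\alpha$ as a $*$-automorphism of an infinite algebra, but this ultimately reduces to the observation that $\calA_m$ and $\calB_{m+1}$ live on disjoint layers, so all relations survive by construction.
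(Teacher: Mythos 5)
Your construction is the same as the paper's: fix the commutant $\calA$ on each layer and shift the $\calB$-factor up by one step, which is exactly the map $\alpha_j : x^{(j)} \mapsto \sum_i a_i^{(j)} \otimes b_i^{(j+1)}$ in the paper's proof; the only cosmetic difference is that you establish bijectivity via generation of the image plus simplicity, while the paper writes down the explicit inverse $\beta$. One small correction: the images $\calA_{m+1}$ and $T_m(\calB_m) = \calB_{m+1}$ occupy the \emph{same} layer, so well-definedness of the factorwise map is not purely a disjoint-support statement --- you also need that $\calA$ and $\calB$ mutually commute within a layer, which of course holds by definition of the commutant.
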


\begin{proof}
Construct a lattice in $\dd$ dimensions by copying that of $\Mat(\ZZ^{\dd-1})$
at each point of a new axis~$\ZZ$, 
and construct the local operator algebra~$\Mat(\ZZ^\dd)$.
Declare the new axis the first axis.
The local dimension assignment is translation-invariant along the first axis.
Let~$\calA$ be the commutant of~$\calB$ within~$\Mat(\ZZ^{\dd-1})$.
Let $j \in \ZZ$.
For any $x^{(j)} \in \Mat(\{j\} \times \ZZ^{\dd-1})$, 
we use the invertibility of~$\calB$ to write $x^{(j)} = \sum_i a_i^{(j)} b_i^{(j)}$
where~$a_i^{(j)} \in \calA$ and $b_i^{(j)} \in \calB$ are supported 
on the $\ell$-neighborhood of~$\Supp(x^{(j)}) \subseteq \{j\} \times \ZZ^{\dd-1}$.
Then, we define
\begin{align}
	\alpha_j :  x^{(j)} &\mapsto \sum_i a_i^{(j)} \otimes b_i^{(j+1)} \in \Mat(\{j\} \times \ZZ^{\dd-1}) \otimes \Mat(\{j+1\} \times \ZZ^{\dd-1}),\\
	\beta_j :  x^{(j)} &\mapsto \sum_i a_i^{(j)} \otimes b_i^{(j-1)} \in \Mat(\{j\} \times \ZZ^{\dd-1}) \otimes \Mat(\{j-1\} \times \ZZ^{\dd-1}).\nonumber
\end{align}
By~\ref{lem:TensorIso}, each~$\alpha_j$ and each~$\beta_j$ 
is a well defined $*$-algebra homomorphism.
Next, we define 
\begin{align}
	\alpha : \Mat(\ZZ^\dd) \ni x = \prod_j x^{(j)} &\mapsto \prod_j \alpha_j(x^{(j)}) \in \Mat(\ZZ^\dd),\\
	\beta : \Mat(\ZZ^\dd) \ni x = \prod_j x^{(j)} &\mapsto \prod_j \beta_j(x^{(j)}) \in \Mat(\ZZ^\dd)\nonumber
\end{align}
and extend to the entire~$\Mat(\ZZ^\dd)$ by $\CC$-linearity.
Here, the products are always finite and are well defined 
because~$\calA$ and~$\calB$ mutually commute.
Both~$\alpha$ and~$\beta$ preserve multiplication 
because each~$\alpha_j$ and each~$\beta_j$ does.
Moreover, $\beta \circ \alpha = \id = \alpha \circ \beta$:
\begin{align}
	x^{(j)} 
	&\xmapsto{\quad \alpha \quad } \sum_i a_i^{(j)} b_i^{(j+1)}
	\xmapsto{\quad \beta \quad } \sum_i \beta_j(a_i^{(j)})\beta_{j+1}(	b_i^{(j+1)}) 
	= \sum_i a_i^{(j)} b_i^{(j)} = x^{(j)},\\
	x^{(j)} 
	&\xmapsto{\quad\beta\quad } \sum_i a_i^{(j)} b_i^{(j-1)}
	\xmapsto{\quad\alpha\quad } \sum_i \alpha_j(a_i^{(j)})\alpha_{j-1}(	b_i^{(j-1)}) 
	= \sum_i a_i^{(j)} b_i^{(j)} = x^{(j)}.\nonumber
\end{align}
Therefore, $\alpha$ is a QCA with spread~$\ell$.

It remains to show that~$\calB$ 
is stably equivalent to a boundary algebra of~$\alpha$ on the positive first axis.
But this is obvious, 
since $\calM = \Mat((\ZZ \cap (-\infty,0])\times \ZZ^{\dd-1})$
is mapped under~$\alpha$ to $\calM \otimes \calB$.
\end{proof}

\begin{definition}
Let~$\alpha$ be a QCA of~$\Mat(\ZZ^\dd,p)$
and~$\beta$ be a QCA of~$\Mat(\ZZ^\dd,q)$.
For any $\ell > 0$, let $S_-(n,\ell) = (\ZZ \cap (-\infty,n - \ell)) \times \ZZ^{\dd-1}$
and $S_+(n,\ell) = (\ZZ \cap (n + \ell,\infty)) \times \ZZ^{\dd-1}$.
We say that $\alpha$ {\bf blends into}~$\beta$,
along the first axis
if 
there exists~$\ell > 0$ (unrelated to the spread of the two QCA)
and $n \in \ZZ$ (position of blending interface)
and a QCA~$\gamma$ of~$\Mat(\ZZ^\dd,r)$ (a blending QCA)
such that
\begin{align}
r(s) = p(s) \text{ and } \gamma(x) = \alpha(x) 
&\text{ for all }
x \in \Mat(S_-(n,\ell),p),
s \in S_-(n,\ell),\\
r(s) = q(s) \text{ and } \gamma(x) = \beta(x) 
&\text{ for all }
x \in \Mat(S_+(n,\ell),q),
s \in S_+(n,\ell).\nonumber
\end{align}
\end{definition}

\begin{definition}
A depth~1 quantum circuit with spread~$\ell$ 
is an $*$-automorphism of~$\Mat(\ZZ^\dd)$ defined by
a collection~$\{u_a \in \Mat(\ZZ^\dd) \}_a$ of local unitaries
such that $\diam\Supp(u_a) \le \ell$ and $\Supp(u_a) \cap \Supp(u_b) = \emptyset$ whenever $a \neq b$,
and the action is given by conjugation.
A finite depth quantum circuit, or {\bf quantum circuit} for short,
is a composition of finitely many depth~1 quantum circuits.
Every quantum circuit is a QCA.
\end{definition}

Every quantum circuit blends into the identity anywhere,
and the identity blends into any quantum circuit.
Just drop the local unitaries!

\begin{lemma}\label{lem:BlendingIsEquivalence}
Along the first axis,
a QCA~$\alpha$ blends into a QCA~$\beta$
with blending interface at~$n$,
if and only if~$\alpha$ blends into~$\beta$
with blending interface anywhere,
if and only if~$\beta$ blends into~$\alpha$ with blending interface anywhere.
\end{lemma}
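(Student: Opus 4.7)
The plan is to handle the two equivalences separately, using a given blending QCA $\gamma \colon \Mat(\ZZ^D, r) \to \Mat(\ZZ^D, r)$ for $\alpha \to \beta$ at interface position $n$ with buffer thickness $\ell$.

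For the position-independence, I would verify that the same $\gamma$ continues to serve as a blending QCA at any other interface $n'$ provided the thickness is enlarged to $\ell' = \ell + |n - n'|$. With this choice, $S_-(n', \ell') \subseteq S_-(n, \ell)$ and $S_+(n', \ell') \subseteq S_+(n, \ell)$, so all four matching conditions (on the local dimensions $r$ and on the action of $\gamma$) of the blending definition at $(n, \ell)$ restrict directly to the corresponding conditions at $(n', \ell')$. No further construction is needed for this half.

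For the symmetry, I would construct a blending QCA for $\beta$ into $\alpha$ by reflecting $\gamma$ along the first axis. Let $\sigma \colon \ZZ^D \to \ZZ^D$ be the involution $\sigma(s_1, s_2, \ldots, s_D) = (-s_1, s_2, \ldots, s_D)$, and let $\Psi_\sigma \colon \Mat(\ZZ^D, r) \to \Mat(\ZZ^D, r \circ \sigma)$ be the canonical $*$-algebra isomorphism that permutes single-site algebras by $\sigma$. Set $\gamma' = \Psi_\sigma \circ \gamma \circ \Psi_\sigma^{-1}$; this is a QCA of the same spread on the reflected lattice. Since $\sigma$ sends $S_\pm(n, \ell)$ to $S_\mp(-n, \ell)$, the reflected assignment $r \circ \sigma$ equals $q \circ \sigma$ on $S_-(-n, \ell)$ and $p \circ \sigma$ on $S_+(-n, \ell)$, with $\gamma'$ acting respectively as $\Psi_\sigma \beta \Psi_\sigma^{-1}$ and $\Psi_\sigma \alpha \Psi_\sigma^{-1}$ there. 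When $p$ and $q$ are translation invariant along the first axis (the setting of the main applications), $p \circ \sigma = p$, $q \circ \sigma = q$, and the conjugated QCAs coincide with $\alpha$ and $\beta$; then $\gamma'$ is an honest blending of $\beta$ into $\alpha$ at interface $-n$, and the first half of the lemma lets us relocate the interface anywhere.

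The main obstacle is the symmetry step, and specifically the bookkeeping required to identify the reflected local dimensions and actions with the originals. Translation invariance along the first axis makes this identification automatic. In greater generality one composes with an appropriate translation isomorphism to realign the local dimensions and invokes the position-independence just proved to reposition the interface; the structural content of the reflection argument is unchanged. Combined with the obvious concatenation construction (which yields transitivity), this establishes that blending descends to a genuine equivalence relation on QCAs.
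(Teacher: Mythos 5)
Your first half is fine as a proof of the literal statement: since the definition only asks for the \emph{existence} of some $\ell>0$, enlarging the buffer to $\ell'=\ell+|n-n'|$ does make the same $\gamma$ a blending at $n'$, because $S_\pm(n',\ell')\subseteq S_\pm(n,\ell)$. Be aware, though, that this trivializes the claim rather than capturing what the paper actually establishes: the paper's argument (tensoring with $\beta^{-1}$, using that $\beta\otimes\beta^{-1}$ is a finite-depth circuit whose gates can be dropped, and compressing the slab $[n',n]\times\ZZ^{D-1}$ into a single layer) produces a blending QCA whose transition region is genuinely relocated to $n'$ with bounded width. Your version leaves the physical interface at $n$ and merely declares a wider buffer, which is too weak for the downstream uses (e.g.\ placing two interfaces far apart when composing blendings for the group structure in the corollary). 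Relatedly, the ``obvious concatenation construction'' you invoke for transitivity is not obvious; splicing two blending QCAs that agree with $\beta$ on a middle region into a single automorphism requires an argument.

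The second half has a genuine error. Conjugating by the reflection $\Psi_\sigma$ yields a QCA $\gamma'$ that blends $\Psi_\sigma\beta\Psi_\sigma^{-1}$ into $\Psi_\sigma\alpha\Psi_\sigma^{-1}$, and translation invariance of the local dimension assignments does \emph{not} give $\Psi_\sigma\alpha\Psi_\sigma^{-1}=\alpha$: reflection conjugation changes the QCA itself, not just the bookkeeping. The simplest counterexample is the unit right shift on $\Mat(\ZZ,p=2)$ (constant, hence translation-invariant, local dimension), whose reflection conjugate is the left shift --- a QCA that does not even blend into the right shift, since their GNVW indices differ. So no amount of ``realigning local dimensions by translations'' repairs the identification; what you have proved is symmetry of blending for the reflected pair, not for $(\alpha,\beta)$. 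The symmetry of the blending relation is a genuinely nontrivial fact, and the paper's route is different in kind: one checks that $\gamma^{-1}$ (a QCA by \ref{lem:QCAInverseIsQCA}) blends $\alpha^{-1}$ into $\beta^{-1}$, tensors with $\alpha\otimes\beta$, and uses that $\alpha\otimes\alpha^{-1}$ and $\beta^{-1}\otimes\beta$ are finite-depth circuits, which blend into the identity by dropping gates. You need some version of that inversion argument; the reflection trick cannot be salvaged.
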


\begin{proof}
It is well-known~\cite{Arrighi2007} 
that $\alpha \otimes \alpha^{-1}$ is a finite depth quantum circuit.
(\emph{Proof}:
$\alpha \otimes \alpha^{-1} = (\alpha \otimes \id) \mathrm{SWAP} (\alpha^{-1} \otimes \id) \mathrm{SWAP}$, but $\mathrm{SWAP}$ is a finite depth quantum circuit,
and $(\alpha \otimes \id) \mathrm{SWAP} (\alpha^{-1} \otimes \id)$
is a product of commuting gates each of which is an image of local SWAP operator under~$\alpha$,
which can be turned into finitely many layers of nonoverlapping gates on~$\ZZ^\dd$.)
Observe that a finite depth quantum circuit 
blends into the empty QCA~$\one : \CC \to \CC$ 
on~$\CC=\Mat(\ZZ^\dd, p=1)$ with blending interface placed anywhere.

To show the first ``iff,'' suppose~$\alpha$ blends into~$\beta$ at~$n$,
which we write as~$\alpha \xrightarrow{\sim n \sim} \beta$. 
We first move the interface to, say,~$n' \ll n$.
Observe that $\alpha \otimes \beta^{-1} \xrightarrow{\sim n \sim } \one$.
Take the slab~$(\ZZ \cap [n',n]) \times \ZZ^{\dd-1}$
and regard the local operator algebra on this slab 
as a $(\dd-1)$-dimensional local operator algebra.
Then we can say that $\alpha \otimes \beta^{-1} \xrightarrow{\sim n' \sim } \one$;
the blending QCA has an increased local dimension at the interface.
So, $\alpha \otimes \beta^{-1} \otimes \beta \xrightarrow{\sim n' \sim} \beta$.
Since $\one \xrightarrow{\sim n' \sim} \beta^{-1} \otimes \beta$,
we have $\alpha \xrightarrow{\sim n' \sim} \beta$.
We can similarly move the interface to $n' \gg n$
by starting with $\one \xrightarrow{\sim n \sim} \beta \otimes \alpha^{-1}$.

To show the second ``iff,'' we use the fact that a blending QCA is invertible:
$\alpha \xrightarrow{\sim n \sim} \beta$ 
implies
$\alpha^{-1} \xrightarrow{\sim n \sim} \beta^{-1}$,
which implies
$\alpha^{-1} \otimes \alpha \otimes \beta \xrightarrow{\sim n \sim} 
\beta^{-1} \otimes \alpha \otimes \beta$,
which implies
$\beta \xrightarrow{\sim n \sim} \alpha$.
\end{proof}

\begin{corollary}
Given an axis, the blending relation is an equivalence relation,
modulo which the set of all $\dd$-dimensional QCA form an abelian group under tensor product,
with the identity class~$[\id]$ represented by the identity QCA
and the inverse of~$[\alpha]$ represented by~$\alpha^{-1}$.
\end{corollary}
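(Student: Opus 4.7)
The corollary packages three claims: (i) blending is an equivalence relation, (ii) the operation $[\alpha]+[\beta]:=[\alpha\otimes\beta]$ descends to equivalence classes, and (iii) the quotient is an abelian group with identity $[\id]$ and $[\alpha]^{-1}=[\alpha^{-1}]$. I plan to verify each in turn, leaning on \ref{lem:BlendingIsEquivalence} throughout.

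Reflexivity is immediate by taking $\gamma=\alpha$ as the blending QCA; symmetry is precisely the second ``iff'' in \ref{lem:BlendingIsEquivalence}. For transitivity, given $\alpha\xrightarrow{\sim n\sim}\beta$ via $\gamma_1$ and $\beta\xrightarrow{\sim m\sim}\delta$ via $\gamma_2$, I first use the first ``iff'' of \ref{lem:BlendingIsEquivalence} to move the first interface to some $n'\ll m$, producing a wide middle slab on which both $\gamma_1$ and $\gamma_2$ coincide with $\beta$. Then I define a piecewise $*$-automorphism equal to $\gamma_1$ on the half-space to the left of some intermediate coordinate $k\in(n',m)$ and equal to $\gamma_2$ on the complement. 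Because the two agree in an $\ell$-neighborhood of the cut and each has bounded spread, this piecewise map is a well-defined QCA that witnesses $\alpha$ blending into $\delta$.

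For well-definedness of $\otimes$: if $\alpha\sim\alpha'$ and $\beta\sim\beta'$ via blending QCAs $\delta_\alpha,\delta_\beta$, then after moving their interfaces to a common coordinate via the first ``iff'' of \ref{lem:BlendingIsEquivalence}, the tensor product $\delta_\alpha\otimes\delta_\beta$ is a blending QCA demonstrating $\alpha\otimes\beta\sim\alpha'\otimes\beta'$. Associativity is inherited from that of algebraic tensor products. Tensoring with the trivial QCA on $\Mat(\ZZ^D,1)=\CC$ is literally a no-op, so $[\alpha\otimes\id]=[\alpha]$ and $[\id]$ is neutral. The inverse claim $[\alpha]+[\alpha^{-1}]=[\id]$ follows at once from the parenthetical inside the proof of \ref{lem:BlendingIsEquivalence}: $\alpha\otimes\alpha^{-1}$ is a finite-depth quantum circuit, and every such circuit blends into $\id$ by discarding the local unitaries that straddle the interface. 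Commutativity follows because the site-wise $\SWAP$ interchanging the two tensor factors is a depth-$1$ quantum circuit, hence blending-equivalent to $\id$ via some blending QCA $\gamma$, and then $\gamma\circ(\alpha\otimes\beta)\circ\gamma^{-1}$ is a blending QCA from $\alpha\otimes\beta$ to $\SWAP\circ(\alpha\otimes\beta)\circ\SWAP^{-1}=\beta\otimes\alpha$.

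The main obstacle is the transitivity step: one must verify that piecewise gluing of two blending QCAs yields a bounded-spread $*$-automorphism. This hinges on the interface-moving freedom granted by \ref{lem:BlendingIsEquivalence}, which secures a wide overlap region on which both candidates restrict to $\beta$; once the overlap is wide enough, local agreement implies the piecewise map is well-defined and inherits the spread bound from its two pieces.
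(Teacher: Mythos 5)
Your proposal is correct and follows the same route the paper intends: everything rests on Lemma~\ref{lem:BlendingIsEquivalence} together with the fact that $\alpha\otimes\alpha^{-1}$ is a finite-depth circuit and hence blends into the identity; the paper simply declares the rest ``clear,'' while you supply the routine gluing and tensoring verifications. The only point worth a word more than you give it is that the glued map in your transitivity step is surjective --- this follows by gluing $\gamma_1^{-1}$ and $\gamma_2^{-1}$ along the same cut and checking on single-site operators that the two compositions are the identity, which works because the overlap is much wider than all spreads.
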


\begin{proof}
Clear by~\ref{lem:BlendingIsEquivalence} 
and the fact that $\alpha \otimes \alpha^{-1}$ blends into the identity.
\end{proof}

Applying~\ref{lem:BoundaryAlgebraIndependentOfCut} to a blending QCA,
we see that the Brauer class of boundary algebras on the first axis
is invariant under blending relation.
Conversely,

\begin{lemma}\label{lem:TrivialBoundaryAlgebraImpliesBlendedTrivialQCA}
A QCA with a Brauer trivial boundary algebra on the positive first axis,
blends into a bilayer shift QCA along the first axis.
\end{lemma}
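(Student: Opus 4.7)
Let $\calB$ be a boundary algebra of $\alpha$ on the positive first axis. Brauer triviality yields a bounded-spread $*$-isomorphism
$\phi:\calB\otimes\Mat(\ZZ^{D-1},q_1)\xrightarrow{\cong}\Mat(\ZZ^{D-1},q_2)$
with both sides full local operator algebras on $\ZZ^{D-1}$. The plan is to first stabilize $\alpha$ so that its boundary algebra becomes a full local operator algebra, construct an actual bilayer shift with the same boundary algebra, and then blend one into the other by exploiting the fact that two QCA sharing a boundary algebra send a half-space onto the same image.

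For the stabilization, I would show $\alpha$ blends into $\tilde{\alpha}=\alpha\otimes\id$ on an enlarged lattice carrying an extra $\Mat(\ZZ^D,q_1)$ ancilla at each site. The blending QCA acts as $\alpha$ on one side of a blending interface placed far from $S(n,\ell)$ and as $\alpha\otimes\id$ on the other side where the ancilla is added; finite spread makes this a well-defined QCA. The boundary algebra of $\tilde\alpha$ is $\calB\otimes\Mat(\ZZ^{D-1},q_1)$, which $\phi$ identifies with $\Mat(\ZZ^{D-1},q_2)$. By Lemma~\ref{lem:BoundaryAlgebra}, $\tilde\alpha$ therefore sends $\Mat(H(n))$ onto $\Mat(T(n,\ell))\otimes\Mat(\ZZ^{D-1},q_2)$, which is a full local operator algebra on a mildly enlarged half-space.

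Next I would construct the target shift $\sigma$ by applying Lemma~\ref{lem:InvertibleSubalgebraToQCA} to $\Mat(\ZZ^{D-1},q_2)$ as a trivially invertible subalgebra of itself; since its commutant is $\CC\one$, the construction degenerates to $\sigma:x^{(j)}\mapsto x^{(j+1)}$, the pure monolayer shift along the first axis, which qualifies as a bilayer shift with trivial second layer. Its boundary algebra on the positive first axis is exactly $\Mat(\ZZ^{D-1},q_2)$. Using $\phi$ to identify boundary algebras, both $\tilde\alpha$ and $\sigma$ send $\Mat(H(n))$ onto the same algebra. Hence the composite $\sigma^{-1}\circ\tilde\alpha$ (after conjugating by a locality-preserving realization of $\phi$ near the slab) preserves $\Mat(H(n))$, and since any $*$-automorphism preserving a subalgebra of form $\Mat(H)$ automatically preserves its commutant $\Mat(\ZZ^D\setminus H(n))$, it decomposes as a tensor product $\delta_L\otimes\delta_R$ of QCAs on $H(n)$ and its complement. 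Each factor blends into the identity by placing a blending interface deep inside the respective half-space and applying the $\delta\otimes\delta^{-1}$ swap trick from the proof of Lemma~\ref{lem:BlendingIsEquivalence}. Chaining these blendings with the stabilization step yields that $\alpha$ blends into the bilayer shift $\sigma$.

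The principal obstacle is the ``conjugate by $\phi$'' move: since $\phi$ is a $*$-isomorphism between subalgebras rather than an ambient QCA, turning it into a locality-preserving piece of a blending construction so that $\sigma^{-1}\circ\tilde\alpha$ truly preserves $\Mat(H(n))$ as an exact subalgebra (rather than only approximately) requires careful surgery near $S(n,\ell)$. A natural intermediate step is to build a QCA $\tilde\sigma$ directly from $\calB\otimes\Mat(\ZZ^{D-1},q_1)$ via Lemma~\ref{lem:InvertibleSubalgebraToQCA}, verify that it blends into $\sigma$ through $\phi$, and compare $\tilde\alpha$ to $\tilde\sigma$ on the same lattice, thereby isolating the role of $\phi$ to a purely $(D-1)$-dimensional comparison.
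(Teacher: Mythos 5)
There is a genuine gap, and it sits exactly where you flag the ``principal obstacle'': your plan never accounts for the commutant $\calA(n,\ell)$ of the boundary algebra inside the slab. Brauer triviality gives an \emph{abstract} locality-preserving isomorphism $\phi:\calB\otimes\Mat(\ZZ^{D-1},q_1)\cong\Mat(\ZZ^{D-1},q_2)$; it does not make $\tilde\alpha(\Mat(H(n)))=\Mat(T)\otimes\calB\otimes\Mat(\ZZ^{D-1},q_1)$ equal to a full local operator algebra on a half-space inside the ambient algebra. That image still has the nontrivial commutant $\calA(n,\ell)\otimes\one$ within $\Mat(S(n,\ell))\otimes\Mat(\ZZ^{D-1},q_1)$, and by \ref{lem:BoundaryAlgebraOnNegative} any blending QCA must be surjective onto $\Mat(S(n,\ell))=\calA(n,\ell)\calB(n,\ell)$, so something in the construction must produce $\calA(n,\ell)$. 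Consequently $\phi$ cannot be ``realized near the slab'' as a piece of an ambient QCA without simultaneously specifying where $\calA(n,\ell)$ goes, and the composite $\sigma^{-1}\circ\tilde\alpha$ you want to form is not defined as an automorphism of a common algebra, let alone one preserving $\Mat(H(n))$ exactly. (Once that composite exists and splits across the cut, the rest of your argument is fine and even simpler than you make it: a tensor product $\delta_L\otimes\delta_R$ across the interface blends into the identity directly via the blending QCA $\delta_L\otimes\id$, no swap trick needed.)

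The paper's proof resolves precisely this point, and it is why the target is genuinely \emph{bilayer}. By \ref{lem:BrauerGpInvertibleSubalgebras} the commutant $\calA(n,\ell)$ is also Brauer trivial, giving $\phi:\Mat(\ZZ^{D-1},q_{\mathrm{left}})\cong\calA(n,\ell)\otimes\Mat(\ZZ^{D-1},p')$; the blending QCA is then written down explicitly (Table~\ref{tb:blending}): $\alpha$ acts on $(-\infty,n]$ and supplies $\calB(n,\ell)$, a \emph{left}-moving layer feeds fresh sheets into $\phi$ to supply $\calA(n,\ell)\otimes\Mat(p')$ on the slab, and a \emph{right}-moving layer carries the remaining degrees of freedom away. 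Surjectivity onto the slab then follows from $\calA\calB=\Mat(S)$. If you want to salvage your route, the missing step is to invoke Brauer triviality of $\calA(n,\ell)$ as well and introduce the oppositely-moving ancilla layer that $\phi$ consumes; a single monolayer shift target cannot work, because nothing in your construction ever reaches $\calA(n,\ell)$.
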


\begin{proof}
We use the notations of~\eqref{eq:BoundaryAlgebra} in this proof.
Without loss of generality we assume that $\ell = 1$
so the image of any single-site operator under~$\alpha$
is supported on immediately neighboring sites,
with respect to the $\ell_\infty$-distance on~$\ZZ^\dd$.
This can be achieved simply by ``compressing the lattice'':
we consider a block of sites as a new site,
and form a lattice of the blocks.
Though we set $\ell =1$, we keep the notation~$\ell$ 
as long as its specific value is not important.

The Brauer triviality implies that 
a representative~$\calA(n,\ell)$ of the Brauer inverse of~$[\calB(n,\ell)]$
is also Brauer trivial by~\ref{lem:BrauerGpInvertibleSubalgebras}.
Here,~$\calA(n,\ell)$ is defined in~\ref{lem:BoundaryAlgebraOnNegative},
and is the commutant of~$\calB(n,\ell)$ within~$\Mat(S(n,\ell))$.
So, we have a locality-preserving $*$-algebra isomorphism
\begin{align}
\phi &: \Mat(\ZZ^{\dd-1}, q_\mathrm{left}) 
\xrightarrow{\quad \cong \quad} 
\calA(n,\ell) \otimes \Mat(\ZZ^{\dd-1},p')
\end{align}
for some local dimension assignments~$q_\mathrm{left}$ and~$p'$.

Now we define two monolayer shift QCA~$\beta_\mathrm{left}, \beta_\mathrm{right}$.
The QCA~$\beta_\mathrm{left}$ is defined to be
a translation to the left along the first axis by one lattice unit (distance~$1$)
acting on~$\Mat(\ZZ^\dd, 1 \times q_\mathrm{left})$.
Here, $1 \times q_\mathrm{left}$ means that this sheet is repeated along the first axis.
To define $\beta_\mathrm{right}$,
we note that $S(n,\ell) = S(n,1) = \{n,n+1\} \times \ZZ^{\dd-1}$.
So, we have a local dimension assignment~$p(n+1, \cdot)$ on a $(\dd-1)$-dimensional sheet.
The QCA~$\beta_\mathrm{right}$ is defined to be 
a translation to the right along the first axis by one lattice unit
acting on~$\Mat(\ZZ^\dd, 1 \times (p(n+1,\cdot) p') )$.
Here, $p'$ appears in the stabilization for~$\phi$.

Finally, we blend~$\alpha$ into~$\beta = \beta_\mathrm{left} \otimes \beta_\mathrm{right}$.
Let us use interval notations to denote various set of sites:
instead of~$(\ZZ \cap (a,b]) \times \ZZ^{\dd-1}$,
we simply write $(a,b]$.
A blending QCA is defined by Table~\ref{tb:blending}.
It is well defined because any commuting elements are mapped to commuting elements.
The injectivity is automatic since~$\Mat(\ZZ^\dd)$ is simple.
The surjectivity is checked by inspection of the table as follows.
Every sheet, a hyperplane of~$\ZZ^\dd$ orthogonal to the first axis,
in~$(-\infty,n-1]$ is reached by~$\alpha(x)$ for some~$x$.
The two sheets~$\{n,n+1\}$ support exactly~$\calB(n,\ell)\calA(n,\ell) \otimes \Mat(\ZZ^{\dd-1},p')$
where $\calB(n,\ell)$ is reached by~$\alpha(x)$ for some~$x$
and $\calA(n,\ell)\Mat(\ZZ^{\dd-1},p')$ is reached by~$\phi$.
The sheet~$\{n+2\}$ is reached by~$\beta_\mathrm{right}(z^{\{n+1\}})$
and~$\beta_\mathrm{left}(y^{\{n+3\}})$.
The sheet~$\{j\}$ with $j \ge n+3$ is reached by~$\beta_\mathrm{right}(z^{\{j-1\}})$
and~$\beta_\mathrm{left}(y^{\{j+1\}})$.
\end{proof}

\begin{table}[t]
\caption{
A blending QCA from a QCA with Brauer trivial boundary algebra into a bilayer shift QCA.
Only the first coordinate is shown to specify a subset of~$\ZZ^\dd$.
Symbols~$x \in \Mat( (-\infty,n] \times \ZZ^{\dd-1}, p)$, 
$y \in \Mat(\ZZ^{\dd-1}, q_\mathrm{left} )$,
and $z \in \Mat( \ZZ^{\dd-1}, q_\mathrm{right} )$ denote arbitrary elements.
Exponents of $x,y,z$ denote the sheet on which the element is supported.
The image~$\phi(y)$ is supported on~$S(n,1) = \{n, n+1\} \times \ZZ^{\dd-1}$.
This table complements the text of the proof for~\ref{lem:TrivialBoundaryAlgebraImpliesBlendedTrivialQCA}.
}
\begin{tabular}{r|c|c|c|c}
\hline
\hline
part of lattice & $(-\infty, n]$ & $\{n+1\}$ & $\{n+2\}$ & $[n+3,\infty)$ \\
\hline
local dimension & $p$ & $p(n+1, \cdot) p' = q_\mathrm{right}$ & $q_\mathrm{left} q_\mathrm{right}$ & $1 \times (q_\mathrm{left} q_\mathrm{right})$ \\
\hline
domain element & $x$ & $z^{\{n+1\}}$ & $y^{\{n+2\}} z^{\{n+2\}}$ & $y^{\{j\}} z^{\{j\}}$ \\
\hline
image & $\alpha(x)$ & $z^{\{n+2\}}$ & $\phi(y)^{[n,n+1]} z^{\{n+3\}}$ & $y^{\{j-1\}} z^{\{j+1\}}$ \\
\hline
image supported on & $(-\infty,n+1]$ & $\{n+2\}$ & $[n,n+1] \sqcup \{n+3\}$ & $[n+2,\infty)$ \\
\hline
\hline
\end{tabular}
\label{tb:blending}
\end{table}

\begin{theorem}\label{thm:main}
The abelian group of all blending equivalence classes of $\dd$-dimensional QCA
(blending along a fixed axis) 
modulo those represented by shift QCA,
is isomorphic to the Brauer group of $(\dd-1)$-dimensional invertible subalgebras.
\end{theorem}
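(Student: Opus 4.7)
The plan is to define a map $\Phi$ from the set of all $D$-dimensional QCA to the Brauer group of $(D-1)$-dimensional invertible subalgebras by sending $\alpha$ to $[\calB(\alpha)]$, the Brauer class of any positive-first-axis boundary algebra of $\alpha$, and to show that $\Phi$ is a group homomorphism that factors through the quotient by blending equivalence and by shift QCA to yield the claimed isomorphism. That $\Phi$ is well defined on individual QCA is the content of~\ref{lem:BoundaryAlgebraIndependentOfCut}, and it is a homomorphism by direct inspection: the boundary algebra of $\alpha \otimes \alpha'$ factorizes as $\calB(\alpha) \otimes \calB(\alpha')$, since both $(\alpha \otimes \alpha')(\Mat(H(n), pp'))$ and $\Mat(S(n, \ell), pp')$ decompose according to the two lattice components.

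To descend to blending classes, by~\ref{lem:BlendingIsEquivalence} it suffices to show that any QCA $\delta$ blending into the identity lies in the kernel of $\Phi$. Choose a cut position $n'$ so far into the right region that $S(n', \ell') \subseteq S_+(n, \ell)$, where $\delta$ agrees with $\id$. Then the \emph{negative} boundary algebra $\calA(\delta; n', \ell') = \delta(\Mat(\ZZ^D \setminus H(n'))) \cap \Mat(S(n', \ell'))$ reduces to $\Mat(\ZZ^D \setminus H(n')) \cap \Mat(S(n', \ell'))$, a full matrix algebra on a finite-thickness slab; this is Brauer trivial after compressing the first axis into one larger local dimension. By~\ref{lem:BoundaryAlgebraOnNegative} together with~\ref{lem:InvertibilityIsSymmetric} and~\ref{lem:TensorIso}, the negative boundary is the Brauer inverse of the positive boundary, so $[\calB(\delta)]$ vanishes as well.

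For the descent through the shift-QCA subgroup, a direct computation shows every shift QCA lies in the kernel: in a shift, each single-site tensor factor $\Mat(\{s\}, p_{s,j})$ maps onto a single site within the $\ell$-neighborhood of $H(n)$, so $\alpha(\Mat(H(n))) \cap \Mat(S(n, \ell))$ is a full matrix algebra on a finite set of sites within $S(n, \ell)$, again Brauer trivial after first-axis compression. Surjectivity of the descended map is supplied by~\ref{lem:InvertibleSubalgebraToQCA}, which for any invertible subalgebra $\calB \subseteq \Mat(\ZZ^{D-1})$ constructs a QCA $\alpha$ with $\calB$ stably equivalent to its boundary algebra. For injectivity, if $\Phi([\alpha]) = 0$ then~\ref{lem:TrivialBoundaryAlgebraImpliesBlendedTrivialQCA} produces a bilayer shift QCA into which $\alpha$ blends, so $[\alpha]$ is trivial in the quotient.

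I expect the main obstacle to be not a single conceptually hard step but rather the verification that the negative boundary algebra is Brauer-inverse to the positive one in the blending-descent argument; this requires threading the commutant-product identity of~\ref{lem:BoundaryAlgebraOnNegative} through~\ref{lem:InvertibilityIsSymmetric} and~\ref{lem:TensorIso}, and then carefully tracking the auxiliary stabilizing local dimensions that arise at each invocation of~\ref{lem:InvertibleSubalgebraToQCA} and~\ref{lem:TrivialBoundaryAlgebraImpliesBlendedTrivialQCA} so that the stable equivalences compose into a coherent isomorphism of groups.
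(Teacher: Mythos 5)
Your proof is essentially the paper's proof run in the opposite direction: the paper constructs the homomorphism from Brauer classes to blending classes via~\ref{lem:InvertibleSubalgebraToQCA} and checks well-definedness with~\ref{lem:TrivialBoundaryAlgebraImpliesBlendedTrivialQCA}, whereas you start from the boundary-algebra map and use those same two lemmas for surjectivity and injectivity; the content and the lemmas invoked are identical, so this is the same approach.

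One step as written does not work, though it is locally repairable. In the blending-descent paragraph you assert that $\delta$ agrees with $\id$ on $S_+(n,\ell)$ and then compute $\delta(\Mat(\ZZ^D\setminus H(n')))=\Mat(\ZZ^D\setminus H(n'))$. But ``$\delta$ blends into the identity'' only provides a \emph{blending QCA} $\gamma$ with $\gamma=\delta$ on $\Mat(S_-(n,\ell))$ and $\gamma=\id$ on $\Mat(S_+(n,\ell))$; the QCA $\delta$ itself (e.g.\ $\alpha\otimes\beta^{-1}$ in your reduction) need not act trivially anywhere. The computation should therefore be carried out for $\gamma$: its negative boundary algebra at a far-right cut is a full matrix algebra on a sub-slab, hence Brauer trivial, so $[\calB_\gamma(n'')]=0$; then $[\calB_\delta]=[\calB_\delta(n')]=[\calB_\gamma(n')]=[\calB_\gamma(n'')]=0$, using that $\gamma$ and $\delta$ have equal boundary algebras at a far-left cut and applying~\ref{lem:BoundaryAlgebraIndependentOfCut} to $\gamma$. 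This is exactly the paper's one-line remark that~\ref{lem:BoundaryAlgebraIndependentOfCut} applied to a blending QCA gives blending-invariance of the Brauer class. A smaller slip: the boundary algebra of a shift is not supported ``on a finite set of sites'' but is a tensor product of single-site full matrix algebras over the infinite slab $S(n,\ell)$; the conclusion that it is Brauer trivial after first-axis compression is nonetheless correct.
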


\begin{proof}
The isomorphism is given by the correspondence 
from invertible subalgebra to QCA in~\ref{lem:InvertibleSubalgebraToQCA}.
This correspondence preserves tensor product operation clearly.
We have to show that the correspondence is well-defined for Brauer equivalent classes.
Any Brauer trivial invertible subalgebra is mapped to a QCA whose boundary algebra
is Brauer trivial. By~\ref{lem:TrivialBoundaryAlgebraImpliesBlendedTrivialQCA},
we know the resulting QCA blends into a shift QCA,
which is deemed trivial.
Therefore, we have a group homomorphism from Brauer classes to blending equivalence classes.

Taking the boundary algebra (defined in~\ref{lem:BoundaryAlgebra})
gives an invertible subalgebra by~\ref{lem:BoundaryAlgebraIsInvertible},
whose Brauer class is invariant under blending by~\ref{lem:BoundaryAlgebraIndependentOfCut}.
This is a well-defined group homomorphism from blending equivalence classes
to Brauer classes, and is inverse to the previous group homomorphism due to~\ref{lem:InvertibleSubalgebraToQCA}.
\end{proof}

\section{Invertible translation-invariant Pauli subalgebras}\label{sec:Pauli}

We have established tight connection between QCA and invertible subalgebras,
but have not presented any example of invertible subalgebras other than trivial ones.
Perhaps the easiest instances to write down
are those that are translation invariant
with an explicit set of generators.
This section contains a sharp criterion (\ref{lem:PauliInvertibilityCriterion} below)
to determine invertibility of a translation-invariant 
$*$-subalgebra of $\Mat(\ZZ^\dd,p = \text{const.})$
generated by generalized Pauli operators (Weyl operators).
By a (generalized) {\bf Pauli operator} we mean any product
of~$X = \sum_{j\in \ZZ/p\ZZ} \ket{j+1} \bra j$
and~$Z = \sum_{j \in \ZZ/p\ZZ} \exp(2\pi \ii j / p) \ket j \bra j$
and a phase factor of unit magnitude,
or a finite tensor product thereof.
Note that $X Z = e^{-2\pi\ii / p} Z X$.
In this section, every local dimension assignment is a constant function,
and the letter~$p$ will denote a prime number.

\begin{lemma}\label{lem:pauliDecomposition}
Suppose an invertible subalgebra~$\calA \subseteq \Mat(\Lambda)$ is generated by Pauli operators.
Then, the commutant~$\calB = \calA'$ is also generated by Pauli operators
and any Pauli operator~$o \in \Mat(\Lambda)$ is decomposed as~$o = ab$
where~$a \in \calA$ and~$b \in \calB$ are both Pauli operators 
supported within $\ell$-neighborhood of~$o$.
The operators~$a$ and~$b$ are unique up to a phase factor of unit magnitude.
\end{lemma}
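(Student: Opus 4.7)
The plan is to exploit the linear-algebraic structure of Pauli operators. Modulo unit-magnitude phases, Paulis form an $\FF_p$-vector space $V$ equipped with a non-degenerate $\FF_p$-valued symplectic form $\omega$ defined by $PQ = \exp(2\pi\ii\,\omega(\bar P,\bar Q)/p)\,QP$, where $\bar P$ denotes the phase-class of a Pauli $P$. After fixing a Pauli representative for every class, the Paulis form a $\CC$-basis of $\Mat(\Lambda)$: every element is a finite $\CC$-linear combination of Paulis, and distinct Pauli classes are linearly independent. Since $\calA$ is generated by Paulis and products of Paulis are Paulis up to phase, $\calA$ is the $\CC$-span of all Paulis whose classes lie in some $\FF_p$-subspace $V_\calA \subseteq V$.

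I would first compute the commutant. For any $y \in \calA'$, restrict to a finite region containing $\Supp(y)$ and expand $y$ in the Pauli basis there. Conjugation by a Pauli $Q \in \calA$ multiplies each basis Pauli of class $v$ by $\exp(2\pi\ii\,\omega(\bar Q, v)/p)$, so $QyQ^{-1} = y$ forces every Pauli appearing in $y$ with nonzero coefficient to have class symplectically orthogonal to $\bar Q$. Ranging $Q$ over all Paulis of $\calA$ shows the surviving classes lie in $V_\calA^\perp$, and the converse is immediate. Hence $\calB = \calA'$ is the $\CC$-span of Paulis in $V_\calA^\perp$ and is Pauli-generated.

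Next I would upgrade invertibility to a splitting $V = V_\calA \oplus V_\calA^\perp$. By~\ref{lem:TensorIso}, $\mul : \calA \otimes \calB \to \Mat(\Lambda)$ is a $*$-isomorphism: injectivity gives $V_\calA \cap V_\calA^\perp = 0$, and surjectivity, combined with the fact that the product of a Pauli of class $v \in V_\calA$ with one of class $w \in V_\calA^\perp$ has class $v + w$, forces $V_\calA + V_\calA^\perp = V$. Consequently, for any Pauli $o$ the unique direct-sum decomposition $\bar o = \bar a + \bar b$ with $\bar a \in V_\calA$ and $\bar b \in V_\calA^\perp$ lifts to Paulis $a, b$ with $ab = o$ after adjusting an overall phase; any other such factorization has the form $(\lambda a, \lambda^{-1} b)$ with $\abs{\lambda} = 1$, giving the claimed phase uniqueness.

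Finally I would bound the supports using the locality part of invertibility. Write $o = \sum_i a_i b_i$ with $a_i \in \calA$ and $b_i \in \calB$ all supported in $\Supp(o)^{+\ell}$. Each $a_i$ expands in the Pauli basis as a $\CC$-linear combination of Paulis in $V_\calA$ each supported in $\Supp(a_i)$, and similarly for $b_i$, so $\mul^{-1}(o)$ equals both the single basis tensor $a \otimes b$ and a sum of basis tensors $P \otimes Q$ with $P, Q$ supported in $\Supp(o)^{+\ell}$. Linear independence of the Pauli basis of $\calA \otimes \calB$ forces the pair $(\bar a, \bar b)$ to appear among these tensors, so $\bar a$ and $\bar b$ admit representatives supported in $\Supp(o)^{+\ell}$; since such representatives are unique up to phase, $\Supp(a), \Supp(b) \subseteq \Supp(o)^{+\ell}$. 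I expect the main obstacle to be this final support tracking: one must verify carefully that the Pauli expansion of an element $a_i \in \calA$ genuinely consists of Paulis in $V_\calA$ supported inside $\Supp(a_i)$, which rests on the linear independence of Paulis and on $\calA$'s being exactly the $\CC$-span of the Pauli subgroup generated by its Pauli generators.
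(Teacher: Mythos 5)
Your proof is correct and follows essentially the same route as the paper's: compute the commutant by expanding in the (Hilbert--Schmidt orthonormal) Pauli basis and testing commutation against the Pauli generators, then obtain the factorization $o=ab$ from the invertibility decomposition together with linear independence of Pauli classes, with uniqueness coming from $\calA\cap\calB=\CC\one$ (your $V_\calA\cap V_\calA^\perp=0$). The only difference is presentational --- you phrase it in the symplectic $\FF_p$-vector-space language and you spell out the support-tracking step (that each $a_i$, $b_i$ in the locality decomposition expands into Paulis of $\calA$, resp.\ $\calB$, supported inside $\Supp(a_i)$, resp.\ $\Supp(b_i)$) which the paper asserts without comment.
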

\begin{proof}
Let $b \in \calB$ be an arbitrary finitely supported operator.
Since $b \in\Mat(\ZZ^\dd)$, we may write~$b = \sum_j b_j Q_j$ for~$b_j \in \CC$ 
where~$Q_j$ are Pauli operators that are orthonormal under Hilbert--Schmidt inner product.
This expansion is unique.
Since~$b$ commutes with every Pauli generator~$P$ of~$\calA$,
we have $0 = [P,b] = \sum_j b_j [P,Q_j] = \sum_{j} b_j(1-\omega_j) P Q_j$
where $\omega_j$ is some $p$-th root of unity which may be~$1$.
The set $\{PQ_j\}$ is still Hilbert--Schmidt orthonormal since $P$ is unitary,
so we have $b_j(1-\omega_j) = 0$ for all~$j$.
This means that the Pauli operator expansion of~$b$ contains only Pauli operators~$Q_j$
that actually commute with~$P$.
Since $b$ was arbitrary, we conclude that $\calB$ is generated by Pauli operators.

Now, by definition of invertible subalgebras,
we decompose a Pauli operator~$o \in \Mat(\ZZ^\dd)$ 
as~$o = \sum_j o_j P_j Q_j$ with $o_j \in \CC$
where $P_j \in\calA$ and $Q_j \in \calB$ are Pauli operators supported 
within the $\ell$-neighborhood of~$\Supp(o)$.
Since Pauli operators are a Hilbert--Schmidt orthonormal basis,
we must have only one summand, say $j = k$, in the expansion of~$o$.
Since $\calA \cap \calB = \CC$ by~\ref{lem:InvertibleCentral},
the decomposition $o = o_k P_k Q_k$ gives unique~$P_k$ and~$Q_k$ up to a scalar,
but we see that the scalar must have unit magnitude.
\end{proof}

\begin{example}(central simple but not invertible $*$-subalgebra)\label{ex:1dXZ}
Consider a $*$-subalgebra~$\calA \subseteq \Mat(\ZZ,p = 2)$ 
generated by $X(j)\otimes Z(j+1)$ for all~$j \in \ZZ$
where $X,Z$ are single-qubit Pauli operators 
and $j$ denote the support.
This subalgebra appears in~\cite{FreedmanHastings2019QCA}
as a nonexample of VS subalgebras.

The commutant~$\calB$ of~$\calA$ is generated by $Z(j) \otimes X(j+1)$ for all~$j$.
Indeed, it suffices to consider Pauli operators as they are an operator basis,
and any finitely supported Pauli operator in~$\calB$ 
can be multiplied by $Z(j) \otimes X(j+1)$ for some~$j$
to have a smaller support.
It is easy to see that $\calB$ does not contain any single-site operator.
If $x \in \calA \cap \calB$ is nonscalar,
then there is a rightmost site~$s$ in~$\Supp(x)$,
but in the Pauli operator basis, 
the operator component on~$s$ must commute with both~$X(s)Z(j+1)$ 
and~$Z(j)X(j+1)$, so $s$ is not in the support.
Therefore, the intersection~$\calA\cap\calB$ consists of scalars,
showing that $\calA$ is central.

Consider a group homomorphism~$\varphi$ from the multiplicative group of all Pauli operators in $\Mat(\ZZ,p=2)$
into the additive group~$\ZZ_2$ defined by~$X(j) \mapsto 1$ and $Z(j) \mapsto 1$ for all~$j$.
This is well defined because Pauli operators commute up to a sign.
Then, the Pauli generators of~$\calA$ and~$\calB$ 
are in the kernel of the nonzero map~$\varphi$.
If $\calA$ were invertible, then by~\ref{lem:pauliDecomposition}
every single-site Pauli operator must have a Pauli decomposition into~$\calA\otimes\calB$,
but $\varphi$ gives an obstruction to any such decomposition for Pauli~$X$.
Therefore, $\calA$ is not invertible.

Suppose  $\calJ \subseteq \calA$ is a nonzero two-sided ideal.
Let~$y \in \calJ$ be an element of support the smallest in diameter.
Suppose $\diam(\Supp(y)) \ge 2$.
Then, the operator component at the rightmost site~$s \in \Supp(y)$
must be~$Z(s)$, but mutiplying a generator~$X(s-1)Z(s)$ of~$\calA$
gives an element of~$\calJ$ of strictly smaller support.
Hence, $\diam(\Supp(y)) < 2$.
But the linearly independent generators do not make a single-site operator.
Therefore, $\Supp(y) = \emptyset$ and~$\one \in \calJ$,
proving that~$\calA$ is simple.

If one instead considers a finite lattice under a periodic boundary condition,
then $\calA$ contains a nontrivial central element 
$\cdots \otimes Y \otimes Y \otimes \cdots $ .
\hfill $\square$
\end{example}

Let $\calA \subseteq \Mat(\ZZ^\dd, \ZZ^\dd \to \{p^q\})$ be a $*$-subalgebra,
where $p$ is a prime and $q$ is any positive integer,
generated by Pauli operators~$ P_1,\ldots,P_n$ and their translates.
We identify~\cite{clifQCA,Haah2013} 
the multiplicative group of all finitely supported Pauli operators
modulo phase factors
with the additive group~$\mathbb P$ of column matrices of length~$2q$ 
over a commutative Laurent polynomial ring
\begin{align}
	\polyring = \FF_p[x_1^\pm,\ldots,x_\dd^\pm]
\end{align}
with coefficients in the finite field~$\FF_p$.
The Laurent polynomial ring has an $\FF_p$-linear involution, denoted by bar,
that sends each variable~$x_i$ to~$\bar x_i = x_i^{-1}$.
Let $v_j$ be such a column matrix over~$\polyring$
corresponding to~$P_j$ for $j = 1, \ldots,n$.
We form an $n \times n$ {\bf commutation relation matrix}~$\Xi$ over~$\polyring$ 
by
\begin{align}
	\Xi_{ij} = \overline{v_i}^T \lambda_q v_j
\end{align}
where $\lambda_q$ is a nondegenerate antihermitian matrix
that is~$\begin{pmatrix} 0 & \id_q \\ -\id_q & 0 \end{pmatrix}$
if we use a basis for~$\mathbb P = \polyring^{2q}$ such that 
the upper $q$ components represent the $X$-part and the lower the $Z$-part;
see~\cite{Haah2013}.

\begin{proposition}\label{lem:PauliInvertibilityCriterion}
Let $\calA \subseteq \Mat(\ZZ^\dd, \ZZ^\dd \to \{p^q\})$ be a $*$-subalgebra
generated by a translation-invariant set of Pauli operators, 
with a commutation relation matrix~$\Xi$,
where $p$ is a prime and $q \ge 1$ an integer.
Then, $\calA$ is invertible 
if and only if the smallest nonzero determinantal ideal of~$\Xi$ is unit.
\end{proposition}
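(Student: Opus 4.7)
The plan is to reformulate everything in the Laurent polynomial formalism. Let $V$ be the $2q \times n$ matrix over $\polyring$ whose columns are $v_1, \ldots, v_n$, so that the Pauli operators of $\calA$ correspond to the $\polyring$-submodule $G = V\polyring^n \subseteq \mathbb{P} = \polyring^{2q}$. By the first part of~\ref{lem:pauliDecomposition}, the Pauli operators of $\calB = \calA'$ correspond to the symplectic orthogonal $G^\perp = \ker(\overline V^T \lambda_q)$, and the commutation relation matrix factors as $\Xi = \overline V^T \lambda_q V$.

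First I would use~\ref{lem:pauliDecomposition} to reduce invertibility of $\calA$ to the purely algebraic condition $\mathbb{P} = G \oplus G^\perp$: any $\calA$--$\calB$ Pauli decomposition $u = Vc + w$ of an arbitrary Pauli $u \in \mathbb{P}$ is unique by the uniqueness statement in~\ref{lem:pauliDecomposition}, so its existence for every $u$ is equivalent to this direct-sum decomposition of the ambient symplectic module. The locality clause of~\ref{def:invertible} comes essentially for free: the generators have bounded support and $\polyring$ is Noetherian, so translation-equivariant solvable linear systems over $\polyring$ admit solutions of bounded polynomial degree, which translates to bounded spatial spread~$\ell$.

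Next I would translate $\mathbb{P} = G \oplus G^\perp$ into a condition on $\Xi$. Applying $\overline V^T \lambda_q$ to the candidate decomposition gives $\Xi c = \overline V^T \lambda_q u$, so $\mathbb{P} = G + G^\perp$ becomes $\im(\Xi) = \im(\overline V^T \lambda_q)$ inside $\polyring^n$, while directness ($G \cap G^\perp = 0$) becomes $\ker(\Xi) \subseteq \ker(V)$. Writing $r$ for the generic rank of $\Xi$, the $r$-th determinantal ideal is the smallest nonzero one, and by standard commutative algebra it equals $(1)$ iff $\Xi$ has locally constant rank $r$ at every prime of $\polyring$, iff $\im(\Xi)$ is a projective direct summand of $\polyring^n$ of rank $r$. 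Since $\im(\Xi) \subseteq \im(\overline V^T \lambda_q)$ with equality over the fraction field, the projectivity forces equality on the nose, and the antihermiticity $\overline \Xi^T = -\Xi$ then yields $\ker(\Xi) \subseteq \ker(V)$ as well.

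The main obstacle is handling the possibility that the generators $v_j$ are $\polyring$-linearly dependent, so that both $V$ and $\Xi$ carry kernels unrelated to the symplectic geometry of $G$. The appropriate tool is the theory of Fitting ideals, which is invariant under change of presentation and behaves well under localization, so the determinantal-ideal condition is intrinsic to the symplectic module $G$ rather than to the chosen generating set. Combining this with~\ref{lem:InvertibleCentral}, applied to the Pauli generators, closes both directions of the equivalence.
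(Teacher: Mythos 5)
Your overall strategy---reduce invertibility to the module statement $\mathbb P = G\oplus G^\perp$, translate that into $\im(\Xi)=\im(\bar V^T\lambda_q)$ together with $\ker(\Xi)\subseteq\ker(V)$, and then use the characterization of a unit smallest nonzero determinantal ideal as projectivity of $\im(\Xi)$---is a genuinely different route from the paper's, which cites the companion reference for a congruence $\bar G^T\Xi G=\Theta\oplus 0$ with $\det\Theta$ a unit and exhibits the splitting explicitly via the projector $\Pi=A\Theta^{-1}\bar A^T\lambda_q$; your forward direction is also more self-contained than the paper's, which routes through the QCA of~\ref{lem:InvertibleSubalgebraToQCA}. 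Your translation steps are correct, the locality of the splitting does come for free (the projectors are matrices over $\polyring$, so their entries have finitely many monomials), and, \emph{granting} equality of the two images over $\polyring$, your deduction of $\ker(\Xi)\subseteq\ker(V)$ works: equality forces $\rank\Xi=\rank V$, hence $G\cap G^\perp=V(\ker\Xi)$ has rank zero and, sitting inside a free module, vanishes.

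The genuine gap is the clause ``with equality over the fraction field.'' Over the fraction field~$K$ one has $\dim\im_K(\Xi)=\dim G_K-\dim(G_K\cap G_K^\perp)$ while $\dim\im_K(\bar V^T\lambda_q)=\dim G_K$, so the asserted generic equality is precisely the statement $G_K\cap G_K^\perp=0$, i.e., that $\calA$ has trivial center---which does not follow from the determinantal-ideal hypothesis. Concretely, with $q=2$ and single-site generators $X_1,Z_1,X_2$, the matrix $\Xi$ is constant with one $2\times2$ symplectic block and one zero row and column, so its smallest nonzero determinantal ideal is $I_2=(1)$, yet the translates of $X_2$ are central and $\calA$ is not invertible. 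Fitting-ideal invariance, which you name as the tool for your ``main obstacle,'' only controls redundancy among the generators (i.e., $\ker V$); it says nothing about the radical $G\cap G^\perp$, which is the actual failure mode, so centrality must be supplied as an input to the converse direction. (You are in good company: the paper disposes of the same point by invoking~\ref{lem:InvertibleCentral}, which presupposes invertibility, so it too implicitly assumes the center is trivial.) A second, more minor issue: your appeal to~\ref{lem:pauliDecomposition} in reducing to $\mathbb P=G\oplus G^\perp$ is only available in the forward direction, since that lemma assumes invertibility; for the converse you must argue directly that the polynomial splitting yields the operator decomposition of~\ref{def:invertible}, which is easy but should be said, and the correct reason for bounded spread is the finiteness of the projectors' entries rather than Noetherianity.
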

\begin{proof}
Suppose $\calA$ is invertible.
Then, the $(\dd+1)$-dimensional QCA constructed in~\ref{lem:InvertibleSubalgebraToQCA}
maps a Pauli operator to a Pauli operator and is translation invariant.
Hence, the QCA is Clifford~\cite{clifQCA,clifqca1}.
Lemma~\cite[III.2]{clifqca1} shows that the smallest nonzero determinantal ideal of~$\Xi$ is unit.

Conversely, suppose that the smallest nonzero determinantal ideal of~$\Xi$ is unit.
Then,~\cite[III.3]{clifqca1} says that there is an invertible $n \times n$ matrix~$G$ 
over~$\polyring$ such that
\begin{align}
	\bar G^T \Xi G = \Theta \oplus 0
\end{align}
where $\Theta$ is $m \times m$ and has unit determinant, where $m \le n$.
This equation can be rewritten as $\bar G^T \bar V^T \lambda_q V G = \Theta \oplus 0$
where the columns of~$V$ represent the initial set of generators (up to translations)
for~$\calA$.
A column~$v$ of~$V G$ such that $\bar v^T \lambda_q V G = 0$ represents
a central element of~$\calA$, 
which must be trivial by~\ref{lem:InvertibleCentral}.
Hence, $v = 0$.
This means that the right $n-m$ columns of~$G$ 
encode multiplicative relations 
among the initial choice of generators for~$\calA$.
Thus, the left $m$ columns of the product~$V G$, 
which we collect in an $2q \times m$ matrix~$A$ over~$\polyring$,
represents another set of Pauli generators for~$\calA$,
and 
\begin{align}
\Theta = \bar A^T \lambda_q A
\end{align}
is an invertible matrix over~$\polyring$.
Now, consider a projection acting on~$\polyring^{2q}$:
\begin{align}
	\Pi = A \Theta^{-1} \bar A^T \lambda_q = \Pi^2 .
\end{align}
It is readily checked that $\Pi A = A$ and $\bar A^T \lambda_q(\id - \Pi) = 0$.
So, $\polyring^{2q} = \Pi \polyring^{2q} \oplus (\id - \Pi) \polyring^{2q} = A \perp (\id - \Pi) \polyring^{2q}$,
where $\perp$ is with respect to~$\lambda_q$.
Interpreting column matrices (``vectors'') over~$\polyring$ as Pauli operators,
we see that $(\id - \Pi)\polyring^{2q}$ describes the commutant~$\calB$ of~$\calA$
and the projections $\Pi$ and $\id - \Pi$ give the desired decomposition
for~$\calA$ to be invertible.
The locality requirement is fulfilled 
since $\Pi$ and $\id - \Pi$ have entries that are
fixed polynomials with finitely many terms.
The exponents in these projectors give the spread parameter.
\end{proof}

For example, the subalgebra in~\ref{ex:1dXZ}
has $1 \times 1$ commutation relation matrix
\begin{align}
\overline{\begin{pmatrix} 1 \\ x_1 \end{pmatrix}}^T 
\begin{pmatrix}
0 & 1 \\ -1 & 0
\end{pmatrix}
\begin{pmatrix} 1 \\ x_1 \end{pmatrix}
= x_1 - x_1^{-1},
\end{align}
which has smallest nonzero determinantal ideal $(x_1 - x_1^{-1}) \subset \FF_2[x_1^\pm]$
that is not unit.

\begin{remark}\label{rem:XiToInvertible}
Given any invertible antihermitian $n \times n$ matrix~$\Xi = M - \bar M ^T$ for some~$M$
over~$\polyring = \FF_p[x_1^\pm,\ldots,x_\dd^\pm]$
there exists an invertible Pauli $*$-subalgebra on~$\Mat(\ZZ^\dd)$.
The construction is from an observation that
\begin{align}
	\Xi = 
	\begin{pmatrix} \id & \bar M^T \end{pmatrix} 
	\begin{pmatrix} 0 & \id \\ -\id & 0 \end{pmatrix}
	\begin{pmatrix} \id \\ M	\end{pmatrix}.
\end{align}
where the $2n \times n$ matrix conjugating $\lambda_n$
contains columns each of which represents a generator of a Pauli $*$-subalgebra.
If $\half \in \polyring$,
one can use
\begin{align}
	\Xi = 
	\begin{pmatrix} \id & \half \overline \Xi^T \end{pmatrix} 
	\begin{pmatrix} 0 & \id \\ -\id & 0 \end{pmatrix}
	\begin{pmatrix} \id \\ \half\Xi	\end{pmatrix}.
\end{align}
This will be used in the next section.
\hfill $\square$
\end{remark}

\section{Topological order within invertible subalgebras}\label{sec:exampleTopologicalOrder}

\begin{figure}[t]
\centering
\includegraphics[width=0.7\textwidth, trim={0ex 70ex 90ex 0ex}, clip]{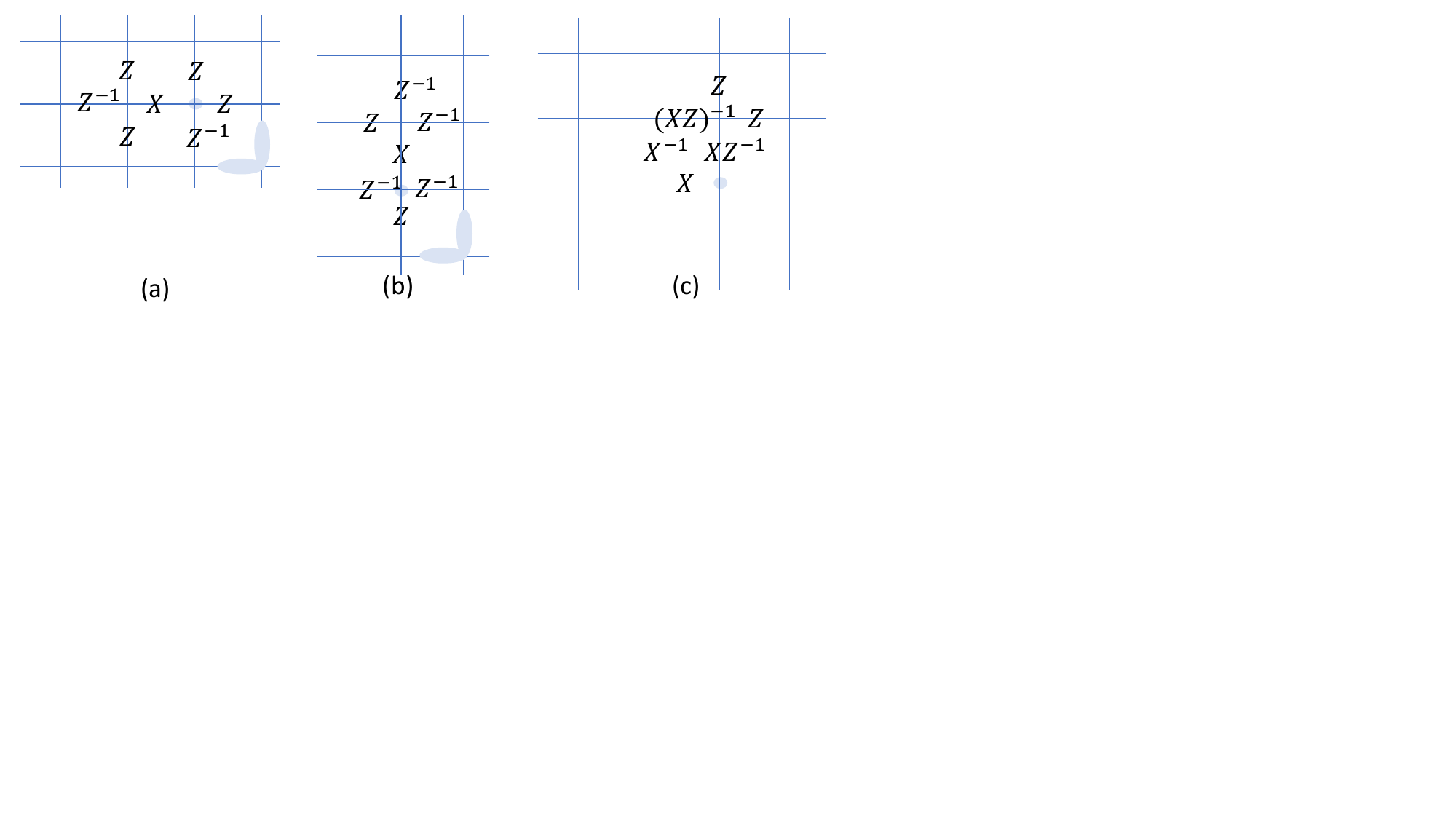}
\caption{An example of two-dimensional invertible subalgebra.
It is generated by the operators in~($\mathsf a$) and~($\mathsf b$) and their translates.
To help recognizing this drawing from the Laurent polynomial description, 
we have drawn a unit cell on the lower right corner
and marked the origin of the lattice by a pale dot.
($\mathsf c$) shows a Hamiltonian term that belongs to this invertible subalgebra,
defining a commuting Pauli Hamiltonian.
This term is obtained by multiplying 
one~($\mathsf a$), one~($\mathsf b$),
and their inverses.
The excitations of this Hamiltonian realize an abelian anyon theory with three anyons 
(two nontrivial, one vacuum)
whose topological spins~$\theta_j$ 
are $\{ 1, e^{2\pi \ii / 3},e^{2\pi \ii / 3} \}$.
This anyon theory is chiral: 
$(\sum_j d_j^2)^{-1/2}\sum_j d_j^2 \theta_j = e^{2\pi \ii c_-/8} = \ii$
where $d_j = 1$.
This anyon theory together with its time reversal conjugate is
just the toric code over qutrits.
}
\label{fig:hopping}
\end{figure}

In this section we discuss an example inveritible subalgebra in two dimensions.
The example is presented in~\cite[\S IV]{clifqca1}, though somewhat less explicitly.
We will see that this invertible subalgebra admits a \emph{commuting} Hamiltonian 
whose excitations realize a \emph{chiral} anyon theory that is not a quantum double.
The anyon theory is an abelian one with the fusion group~$\ZZ_3$
and topological spin~$e^{2\pi\ii k^2 / 3}$ for~$k \in \ZZ_3$.
The algebraic data of anyons give the value of chiral central charge~$c_- = 2 \bmod 8$~\cite{FroehlichGabbiani1990braid}.
Kitaev~\cite[App.~D]{Kitaev_2005} has developed some theory
to write a formula that expresses this chiral central charge by bulk Hamiltonian terms,
and the formula is given by a sum of expectation values of certain commutators of bulk terms.
For commuting Hamiltonians the commutators vanish tautologically,
and the formula evaluates to zero.

Perhaps we should explain our example here from an abstract angle.
When we construct~$\Mat$ as a mathematical model of discrete physical degrees of freedom,
it is our choice that the operator algebra of two or more degrees of freedom
is given by the tensor product of those of individual ones.
More generally, we may consider a discrete net of operator algebras,
which is a collection of unital algebras $\calA_S$, one for each region $S \subseteq \ZZ^\dd$,
such 
that any inclusion map of regions induces a homomorphism of algebras,
that algebras over disjoint regions are mapped to mutually commuting algebras,
and that the union~$\bigcup_F \calA_F$ of all algebras over finite regions~$F$
is dense in~$\calA = \calA_{\ZZ^\dd}$ under an appropriate sense.
One can add extra assumptions such as norm-completeness.
For fermionic systems we should instead use $\ZZ_2$-graded commutation relations.
This notion of discrete nets of operator algebras 
allows us to speak of local operators without explicitly specifying tensor products.
In a net of operator algebras over a many-body system,
we consider a Hamiltonian by choosing certain
local hermitian operators representing interactions.
Then we consider eigenstates of the Hamiltonian,
which are linear functionals on observables.
We also consider superselection sectors.

In addition to the familiar~$\Mat$,
an invertible subalgebra~$\calA \subseteq \Mat$ 
is another example of a discrete net of operator algebras:
there is a notion of support for every element of~$\calA$ inherited from~$\Mat$,
so we consider an algebra~$\calA_S$ consisting of all those supported on any given region~$S$,
and $\calA = \calA_{\ZZ^\dd}$ is generated by those local subalgebras; see~\ref{lem:csa}.
In a net of operator algebras, we may choose a commuting local Hamiltonian
and analyze the topological quasi-particle content.
This is the question that we have alluded by the title of this section.
So, the apparent paradox that a commuting Hamiltonian realizes a chiral theory
appears to be rooted in the unusual net of operator algebras;
however, it would be worth figuring out
what step in the derivation of Kitaev's formula~\cite{Kitaev_2005} relies on
a particular choice of a net of operator algebras.

We consider~$\Mat(\ZZ^2, 9)$, 
the local operator algebra on square lattice with $\CC^3$ on every edge.
Starting with an antihermitian matrix~\cite[\S IV]{clifqca1}
\begin{align}
	\Xi = 
	\left(
\begin{array}{cc}
 \frac{1}{x}- x & x +x y -y +1 \\
 -\frac{1}{x}+\frac{1}{y}-\frac{1}{x y}-1 &  y-\frac{1}{y} \\
\end{array}
\right) = - \overline \Xi ^T, \quad \det \Xi = 4, \label{eq:ExampleXi}
\end{align}
over $\polyring = \FF_3[x^\pm,y^\pm]$,
we consider a $*$-subalgebra~$\calA$ by the prescription of~\ref{rem:XiToInvertible}.
We draw two generators in Figure~\ref{fig:hopping};
the $*$-subalgebra~$\calA$ is generated by all translates of 
what is drawn in Figure~\ref{fig:hopping}($\mathsf a, \mathsf b$).
Consider a translation-invariant local Hamiltonian~$H$ 
whose terms are Pauli operators in~$\calA$,
defined as
\begin{align}
	H = - \sum_{s \in \ZZ^2} P_s + P_s^\dag, \qquad P_s = \text{Figure~\ref{fig:hopping}}(\mathsf c) .\label{eq:HamiltonianPs}
\end{align}

\begin{proposition}\label{thm:ExampleInv}
All of the following hold for the $*$-subalgebra~$\calA$.
\begin{enumerate}
	\item[(i)] $\calA$ is invertible. 
	There is a locality-preserving $*$-isomorphism $\calA^{\otimes 4} \cong \Mat(\ZZ^2,3^4)$,
	showing that $\calA^{\otimes 4}$ is Brauer trivial.
	\item[(ii)] The commutant~$\calB$ of~$\calA$ within~$\Mat(\ZZ^2,9)$ is equal to the complex conjugate of~$\calA$ in the basis where $Z$ is diagonal and $X$ is real.
	\item[(iii)] Let~$\mathcal H \subset \calA$ be the commutative $*$-subalgebra
generated by~$\{P_s | s \in \ZZ^2\}$.
The commutant of~$\mathcal H$ within~$\calA$ is equal to~$\mathcal H$.
	\item[(iv)] Let $h$ be a finite product of~$P_s, P_s^\dag$ 
	over a nonempty finite set of sites~$s$
	such that there is one factor~$P_s$ or~$P_s^\dag$ for each~$s$.
Then, $h \notin \CC\one$.
	\item[(v)] Given any two sites~$s_1 \neq s_2$ of~$\ZZ^2$,
there exists a finite product~$y$ of Pauli generators of~$\calA$
such that $y$ commutes with all~$P_s$ with~$s \neq s_1, s_2$ 
but does not with~$P_{s_1}$ and~$P_{s_2}$.
\end{enumerate}
\end{proposition}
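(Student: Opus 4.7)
All five parts translate into the Laurent-polynomial formalism of~\S\ref{sec:Pauli} applied to the matrix~$\Xi$ of~\eqref{eq:ExampleXi}. Since $\tfrac12 = -1$ in~$\FF_3$ and $\bar\Xi^T = -\Xi$, the canonical generating matrix for~$\calA$ from~\ref{rem:XiToInvertible} is $V = \binom{I_2}{-\Xi}$. For~(i), Proposition~\ref{lem:PauliInvertibilityCriterion} gives invertibility: $\det\Xi = 4 = 1$ in~$\FF_3$, so the $1\times 1$ determinantal ideal is the unit ideal. Brauer triviality of $\calA^{\otimes 4}$ amounts to producing $G \in \mathrm{GL}_8(\polyring)$ with $\bar G^T\Xi^{\oplus 4}G = \lambda_4$; I would write such a $G$ by hand, following the analogous computation in~\cite[\S IV]{clifqca1}. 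An alternative, using part~(ii)---which gives $\calA \otimes \bar\calA \cong \Mat(\ZZ^2,9)$---is to exhibit an isometry $\Xi\oplus\Xi \cong (-\Xi)\oplus(-\Xi)$ over~$\polyring$, yielding $\calA^{\otimes 2} \cong \bar\calA^{\otimes 2}$ and hence $\calA^{\otimes 4} \cong (\calA\otimes\bar\calA)^{\otimes 2} \cong \Mat(\ZZ^2, 3^4)$.

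For~(ii), compute $V^\dagger \lambda_q = (I_2,\, \Xi)\lambda_q = (-\Xi,\, I_2)$, so $\binom{w_1}{w_2} \in \calB$ iff $w_2 = \Xi w_1$, and $\calB$ is generated over~$\polyring$ by the columns of $\binom{I_2}{\Xi}$. Complex conjugation in the $Z$-diagonal basis negates $Z$-exponents, sending $V = \binom{I_2}{-\Xi}$ to $\binom{I_2}{\Xi}$, which is exactly the generating matrix of~$\calB$. Hence $\bar\calA = \calB$.

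For~(iii) and~(iv), encode~$P_s$ as a column $\mathbf p \in \polyring^2$ relative to the two generators of~$\calA$: Figure~\ref{fig:hopping}(c) presents $P_s$ as a product of~$g_1$, $g_2$ and their translated inverses, so $\mathbf p = (1-u,\,1-v)^T$ for certain monomials~$u,v$. Mutual commutativity of all translates of~$P_s$ is the identity $\bar{\mathbf p}^T\Xi\mathbf p = 0$ in~$\polyring$, verified by direct substitution. The $\mathcal H$-commutant inside~$\calA$ equals the kernel of $\bar{\mathbf p}^T\Xi\,\cdot\colon \polyring^2 \to \polyring$; this kernel has rank~$1$ since~$\Xi$ is non-degenerate over the fraction field, it contains $\polyring\mathbf p$, and checking that the two entries $\phi_j := (\bar{\mathbf p}^T\Xi)_j$ are coprime in the UFD~$\polyring$ promotes this inclusion to equality, giving~(iii). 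For~(iv), $\prod_s P_s^{\epsilon_s}$ has Pauli-exponent column $f\mathbf p$ with $f \in \polyring$ having coefficient $\epsilon_s \in \{\pm 1\} \subset \FF_3^\times$ on a nonempty finite support, so $f \neq 0$; since~$\polyring$ is a domain and $\mathbf p \neq 0$, $f\mathbf p \neq 0$, and the product is not a scalar.

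For~(v), write the desired operator as $w = \gamma_1 g_1 + \gamma_2 g_2 \in \calA$ with $\gamma_j \in \polyring$; its commutator with~$P_t$ is the coefficient of the monomial for~$t$ in $\gamma_1\phi_1 + \gamma_2\phi_2 \in \polyring$. By the same coprimality of $\phi_1,\phi_2$ used in~(iii), $\polyring\phi_1 + \polyring\phi_2 = \polyring$, so any Laurent polynomial is realised as such a combination---in particular one supported exactly on $\{s_1,s_2\}$ with nonzero coefficients. The principal technical obstacles are the coprimality check for~$\phi_1,\phi_2$, which is a concrete polynomial calculation starting from~\eqref{eq:ExampleXi}, and the explicit Brauer triviality in~(i), which requires either a hand-constructed change of basis~$G$ over~$\polyring$ or an appeal to the analogous construction in~\cite{clifqca1}.
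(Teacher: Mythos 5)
Your treatment of (i)--(iv) follows the paper's route: invertibility via \ref{lem:PauliInvertibilityCriterion} from $\det\Xi = 4 = 1 \in \FF_3$, the commutant as the kernel of $\bar V^T\lambda_2$ giving $\binom{\id}{-\half\Xi}\polyring^2$, the identification $\bar{\mathbf p}^T\Xi = \bigl(y^{-1}(-1+x^{-1}),\ -1+y^{-1}\bigr)$, and the coprimality/domain arguments for (iii) and (iv). Two remarks. In (iii) you identify the commutant of $\mathcal H$ inside $\calA$ with a $\polyring$-submodule, which only accounts for Pauli elements; a general element of $\calA$ commuting with $\mathcal H$ must first be reduced to its Pauli components (the paper does this with the averaging projections $\varphi_s = \frac13\sum_k P_s^k(\cdot)P_s^{-k}$). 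This is a standard but non-omittable step. For the Brauer triviality in (i), both your direct congruence and your alternative via $\Xi^{\oplus 2}\cong (-\Xi)^{\oplus 2}$ defer the actual matrix construction, exactly as the paper defers to~\cite[III.10]{clifqca1}; note the alternative still requires producing a nontrivial isometry over $\polyring$ (a scalar one fails since $-1$ is not a square in $\FF_3$), so it is not obviously easier.

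Part (v) contains a genuine error. Coprimality of $\phi_1,\phi_2$ in the UFD $\polyring$ does \emph{not} give $\polyring\phi_1+\polyring\phi_2=\polyring$: up to units, $(\phi_1,\phi_2)=(1-x,\,1-y)$, which is the kernel of evaluation at $x=y=1$, a proper maximal ideal. (Coprimality controls the syzygy module, which is what (iii) needs; comaximality is a strictly stronger statement that fails here.) So it is false that ``any Laurent polynomial is realised'' as $\gamma_1\phi_1+\gamma_2\phi_2$ --- in particular a monomial $u_{s_1}$ alone is not, which is precisely the statement that a single anyon cannot be created in isolation, so your argument proves too much. The conclusion of (v) survives because the achievable commutator patterns are exactly the polynomials vanishing at $(1,1)$, and $c_1u_{s_1}+c_2u_{s_2}$ with $c_1=1$, $c_2=-1=2$ has both coefficients nonzero and sum zero in $\FF_3$; equivalently, as in the paper, each short string operator fails to commute with exactly two adjacent terms, and juxtaposing them along a path from $s_1$ to $s_2$ telescopes to the desired $y$. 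You should replace the comaximality claim with this membership check in the ideal $(1-x,1-y)$.
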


\begin{proof}
(i)
The commutation relation matrix of~$\calA$ is in~\eqref{eq:ExampleXi}.
Apply~\ref{lem:PauliInvertibilityCriterion}.
The claim that $\calA^{\otimes 4} \cong \Mat(\ZZ^2,81)$ is basically 
proved in~\cite[III.10]{clifqca1},
which implies that $\Xi^{\oplus 4}$ is congruent to~$\lambda_4 =
\begin{pmatrix} 0 & \id_4 \\ -\id_4 & 0 \end{pmatrix}$
over~$\polyring$.
Recall that the commutation relation matrix~$\Xi$ is equal
to~$\bar V^T \lambda_2 V$ for a $4 \times 1$ matrix~$V$ over~$\polyring$
where the columns of~$V$ represent the Pauli generators of~$\calA$.
The congruence gives an invertible matrix~$E$ over~$\polyring$
such that $\lambda_4 = \bar E^T \Xi^{\oplus 4} E 
= \bar E^T (\bar V^T)^{\oplus 4} \lambda_2^{\oplus 4} V^{\oplus 4} E$.
Let $F$ be a permutation matrix such that $F^T \lambda_8 F = \lambda_2^{\oplus 4}$.
The map $F V^{\oplus 4} E : \polyring^8 \to \polyring^{16}$
represents a locality-preserving 
Clifford $*$-map from~$\Mat(\ZZ^2, 3^4)$ onto~$\calA^{\otimes 4} \subset \Mat(\ZZ^2, 3^8)$.

(ii)
The commutant~$\calB$ has Pauli generators corresponding to an $\polyring$-submodule
\begin{align}
	\ker\left( 
	\begin{pmatrix} \id & \half \overline \Xi^T \end{pmatrix}
	\begin{pmatrix} 0 & \id \\ -\id & 0 \end{pmatrix}
	\right)
	=
	\polyring \begin{pmatrix} \id \\ -\half \Xi \end{pmatrix}
	\label{eq:submoduleP}
\end{align}
This is precisely the complex conjugate of~$\calA$
since the minus sign in the lower half block corresponds to $Z \mapsto Z^{-1}$.

(iii)
For each $s \in \ZZ^2$, 
let $\varphi_s : x \mapsto \frac 1 3 (x + P_s x P_s^\dag + P_s^2 x (P_s^2)^\dag )$
be a projection.
Since $P_s$'s commute, 
we see $\varphi_s \varphi_{s'} = \varphi_{s'} \varphi_s$ for all~$s,s' \in \ZZ^2$.
Note that for any Pauli operator~$Q$,
\begin{align}
	\varphi_s(Q) = \begin{cases} Q & \text{if } [P_s, Q] = 0, \\ 0 & \text{otherwise}. \end{cases}
\end{align}
Suppose $y \in \calA$ commutes with every element of~$\mathcal H$.
We must have $\varphi_s(y) = y$ for all~$s$.
Since~$P_s$'s generate~$\calA$,
we have a finite linear combination~$y = y_0 \one + \sum_s y_s P_s + y'_s P_s^2$
where $y_0, y_s, y_s' \in \CC$.
Applying $\varphi_s$ for various $s$,
we see that any nonzero summand in the expansion must be commuting with every~$P_s$.

Therefore, it suffices to check that any finitely supported Pauli operator~$Q$ of~$\calA$ 
that commutes with all~$P_s$ is a finite product of~$P_s$'s.
To this end, we use the Laurent polynomial representation of Pauli operators.
The operator $Q$ corresponds to a column matrix $\begin{pmatrix} \id \\ \half \Xi \end{pmatrix} \begin{pmatrix} a \\ b \end{pmatrix}$ for some $a,b \in \polyring = \FF_3[x^\pm,y^\pm]$.
It is easy to check that the Hamiltonian term~$P_s$ corresponds to
$\begin{pmatrix} \id \\ \half \Xi \end{pmatrix} \begin{pmatrix} 1-y \\ 1 - x^{-1} \end{pmatrix}$.
Their commutativity is expressed as
\begin{align}
0 = \begin{pmatrix} 1-y^{-1} & 1 - x \end{pmatrix}
\begin{pmatrix} \id & \half \overline \Xi^T \end{pmatrix} 
	\begin{pmatrix} 0 & \id \\ -\id & 0 \end{pmatrix}
	\begin{pmatrix} \id \\ \half\Xi	\end{pmatrix}
\begin{pmatrix} a \\ b \end{pmatrix}
=
y^{-1}( -1+ x^{-1} )a + (-1 + y^{-1})b .
\end{align}
The coefficients~$(-1+x^{-1})$ and~$(-1+y^{-1})$ are coprime in~$\polyring$,
so we must have
\begin{align}
\begin{pmatrix} a \\ b \end{pmatrix} = r \begin{pmatrix} 1-y \\ 1-x^{-1} \end{pmatrix}
\end{align}
for some~$r \in \polyring$,
which precisely expresses the condition that $Q \in \mathcal H$.

(iv)
The claim amounts to the injectivity of the map~$\polyring \to \polyring^4$ given by the matrix
$\begin{pmatrix} \id \\ \half \Xi \end{pmatrix} \begin{pmatrix} 1-y \\ 1 - x^{-1} \end{pmatrix}$.
This is a nonzero column matrix over a ring~$\polyring$ without zerodivisors, so the kernel is zero.

(v)
The claim is easily proved by considering which Hamiltonian terms are noncommuting
with ``short string operators,'' the generators drawn in Figure~\ref{fig:hopping}.
A straightforward calculation shows
\begin{align}
\begin{pmatrix} 1-y^{-1} & 1 - x \end{pmatrix}
\begin{pmatrix} \id & \half \overline \Xi^T \end{pmatrix} 
	\begin{pmatrix} 0 & \id \\ -\id & 0 \end{pmatrix}
	\begin{pmatrix} \id \\ \half\Xi	\end{pmatrix}
\begin{pmatrix} 1 & 0 \\ 0 & 1 \end{pmatrix}
= \begin{pmatrix}
y^{-1}(-1 + x^{-1}) & -1 + y^{-1}
\end{pmatrix}.
\end{align}
The interpretation is that the generator~($\mathsf a$)
commutes with all the Hamiltonian terms except the two at~$(0,-1) \in \ZZ^2$ and~$(-1,-1)$,
and ($\mathsf b$) commutes with all terms except the two at~$(0,0)$ and~$(0,-1)$.
Hence, some juxtaposition of the short string operators commutes with all terms except those 
at~$s_1,s_2$ for any given $s_1, s_2 \in \ZZ^2$.
\end{proof}

We are now ready to say that the Hamiltonian realizes a $\ZZ_3$ anyon theory
by the results of~\ref{thm:ExampleInv}.
The complex conjugate~$\bar H$ of~$H$ (in the basis where $Z$ is diagonal and $X$ is real)
belongs to the commutant of~$\calA$ by~(ii).
Hence, a Hamiltonian~$H + \bar H$ on the full local operator algebra,
is a sum of noninteracting two theories.
It is easy to see that the terms of $H + \bar H$ generate multiplicatively 
the terms of the toric code Hamiltonian~\cite{Kitaev_2003} with qutrits.
The statement~(iv) means that the Hamiltonian~$H + \bar H$ is not frustrated,
and there is a ground state, 
a nonnegative normalized linear functional on the algebra of observables,
with respect to which all~$P_s$ are valued~$+1$.
In particular, $H$ has a ground state with which all~$P_s$ assume~$+1$.
By~(iii), the ground state on~$\calA$ is unique on our infinite lattice.
By~(v) an excited state 
with an even number of violated terms (not counting $P_s^\dagger$ separately) 
is an admissible state,
and in particular it is possible to isolate an excitation at the origin 
with its antiexcitation far from the origin.
These isolated excitations obey the $\ZZ_3$ fusion rule because $P_s^3 = \one$.
Considering an exchange process along three long line segments with one common point at the origin~\cite{LevinWen2003Fermions,clifqca1},
one can read off the topological spin of an isolated excitation.
We have performed detailed calculation in~\cite{clifqca1},
and the result is $e^{2\pi \ii /3 }$.
It is important here that the hopping operators (the short and long string operators)
and the Hamiltonian are entirely built from~$\calA$.
In this sense, our Hamiltonian realizes the chiral $\ZZ_3$ anyon theory within~$\calA$.

If $\calA$ was Brauer trivial, 
then its commutant~$\calB$ within~$\Mat(\ZZ^2)$ would also be Brauer trivial.
In any Brauer trivial invertible subalgebra,
we can define a trivial Hamiltonian
that is a sum of single-site operators, after a possible stabilization.
The isomorphism that gives Brauer triviality 
is locality-preserving by definition,
so it may turn the trivial Hamiltonian into a less trivial Hamiltonian.
However, such a locality-preserving map
cannot change the emergent particle content.
Overall, we would have a commuting Hamiltonian on~$\Mat(\ZZ^2)$
that realizes the chiral $\ZZ_3$ anyon theory.
Believing this is not possible,
we have to conclude that $\calA$ is Brauer nontrivial.

All examples of invertible subalgebras in~\cite{clifqca1}
and that of~\cite{nta3},
though we did not call them invertible,
exhibit similar phenomena
that some commuting Hamiltonian realizes an abelian anyon theory 
whose topological spin is a quadratic form of a nontrivial Witt class.
This is contrasted to a folklore that any commuting Hamiltonian in two dimensions
is stably quantum circuit equivalent to a Levin--Wen model~\cite{Levin_2005}.
Recently discovered QCA in three dimensions~\cite{Shirley2022} and in four dimensions~\cite{Chen2021}
give invertible subalgebras in two and three dimensions, respectively.
The two-dimensional invertible subalgebra should admit a commuting Hamiltonian,
realizing the semion theory in which $c_- = 1 \bmod 8$.
The three-dimensional invertible subalgebra should 
admit ``FcFl'' model~\cite{Chen2021,Fidkowski2021},
which is supposed not to exist in~$\Mat(\ZZ^3)$.

\begin{remark}\label{rem:conjecture}
For two-dimensional invertible subalgebras,
we wonder if the Brauer group coincides with the Witt group of modular tensor 
categories (MTC)~\cite{Davydov2010,Davydov2011}.
This expectation is backed by the following heuristic argument.

First, any two-dimensional invertible subalgebra
would admit a commuting Hamiltonian that realizes some anyon theory,
corresponding to a MTC.
So, we have a map from from an invertible subalgebra into the Witt group.
To argue that this map is well defined on the Brauer class of an invertible subalgebra,
we consider the three-dimensional QCA restricted on a finitely thick slab that is two-dimensional,
where the QCA is constructed as in~\ref{lem:InvertibleSubalgebraToQCA}.
We have the top boundary hosting an anyon theory, which we fix,
and the bottom boundary may host another anyon theory.
But the whole system is basically two-dimensional,
and all Hamiltonians are commuting.
According to the folklore that Levin--Wen model is all one may realize 
by a commuting Hamiltonian on~$\Mat(\ZZ^2)$,
which means that the whole theory on the slab must be Witt trivial,
the two theories at the top and the bottom must represent inverse Witt classes of each other.
In other words, the Witt class of the anyon theory at the bottom
is determined by that at the top,
and therefore cannot change under any stable equivalence of invertible subalgebras.

Second, for every MTC
there is a Walker--Wang model~\cite{Walker_2011}
whose bulk theory lacks any nontrivial topological excitation,
but the input MTC is realized at a boundary,
all by some commuting local Hamiltonian.
The bulk would be disentangled by a three-dimensional QCA,
from which we take a boundary invertible subalgebra.
This gives a map in the converse direction.
To argue that this is well defined on the Witt class of a MTC,
we observe that the Walker--Wang model of a Witt trivial MTC
should admit a trivial boundary,
then corresponding disentangling QCA would blend into the identity,
and the boundary invertible subalgebra must be Brauer trivial.

The first and the second maps are inverses of each other.
\hfill $\square$
\end{remark}

\section{Discussion}\label{sec:discussion}

We have introduced a notion of invertible subalgebras (\ref{def:invertible}),
which completely characterizes boundary algebras of a QCA~\cite{GNVW,FreedmanHastings2019QCA} 
in one dimension higher (\ref{lem:BoundaryAlgebraIsInvertible}, \ref{lem:InvertibleSubalgebraToQCA}).
We have shown that blending equivalence classes of $\dd$-dimensional QCA modulo shifts
are in one-to-one correspondence with locality-preserving stable isomorphism 
classes of $(\dd-1)$-dimensional invertible subalgebras (\ref{thm:main}).
On finite systems, either open or periodic boundary conditions,
the notion of invertible subalgebras is equivalent 
to that of ``visibly simple'' subalgebras (\ref{lem:InvertibleCentral}, \ref{prop:VS}),
which was introduced in~\cite{FreedmanHastings2019QCA}.

Though we have defined the Brauer group of invertible subalgebras given a spatial dimension,
we give no results how large or small this Brauer group is in general.
Every invertible subalgebra in zero dimension is trivially Brauer trivial.
Every invertible subalgebra in one dimension 
is shown to be Brauer trivial (\ref{thm:BrauerGroupIsTrivialIn1D}).
Restricting the class of invertible subalgebras 
to those generated by prime dimensional Pauli operators,
and restricting stable isomorphism maps between those Pauli invertible subalgebras
to ones that map Pauli operators to Pauli operators (Clifford),
we understand this Pauli/Clifford Brauer group 
in all spatial dimensions~\cite{clifQCAclassification}:
there is a periodicity in spatial dimension with period either~2 or~4,
and nonzero groups are given by classical Witt groups of finite fields, 
depending on the local prime dimension.
This result is in line with the conjecture in~\ref{rem:conjecture}.
Finding invariants of invertible subalgebras, not necessarily Pauli,
remains an obvious, critical problem.

\begin{figure}
\centering
\includegraphics[width=0.8\textwidth, trim={0ex 60ex 80ex 0ex}, clip]{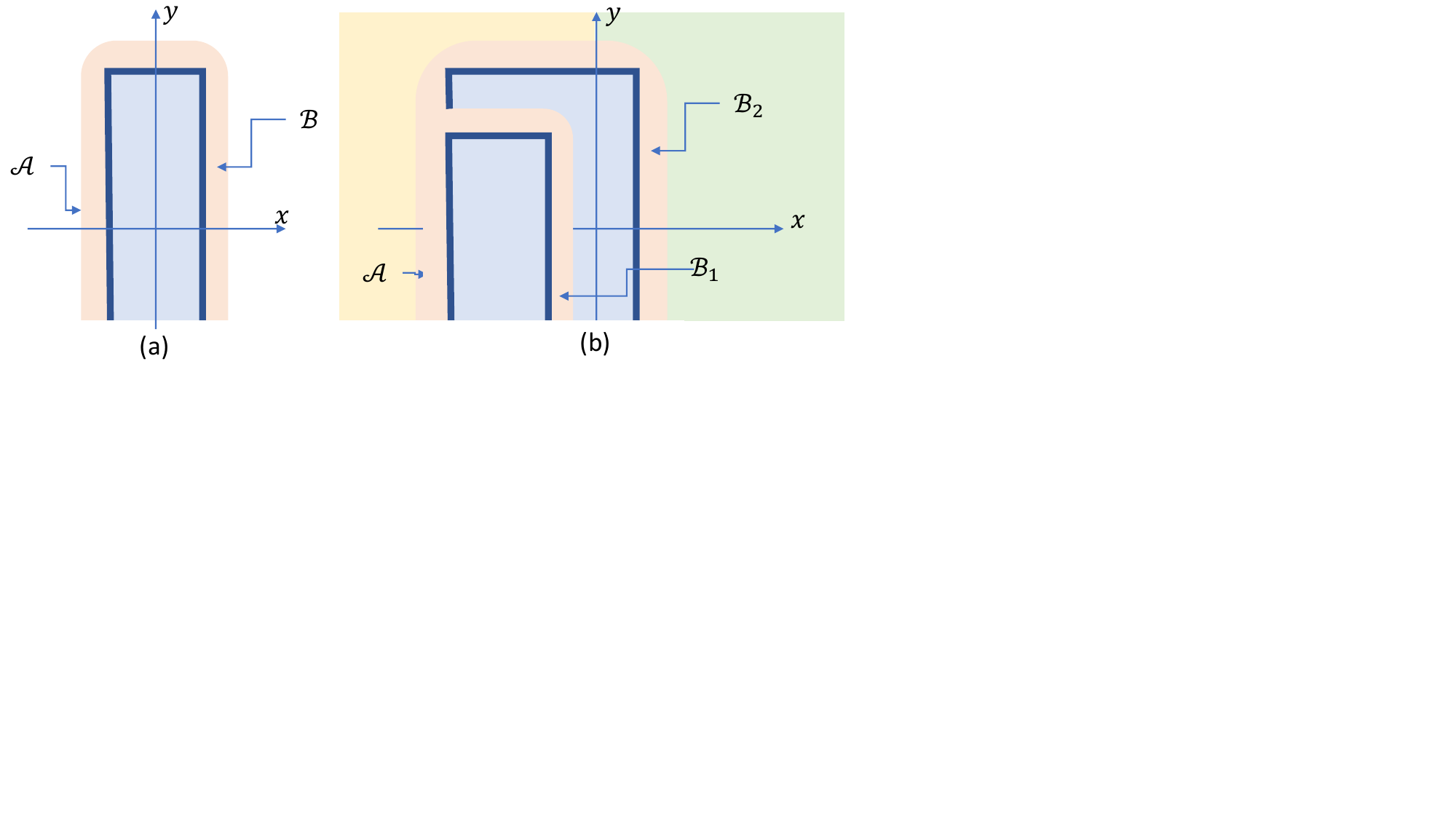}
\caption{Blending of two invertible subalgebras.
($\mathsf a$) As long as the horizontal width of the blue region 
is much larger than the spread of the QCA,
the boundary algebra is well defined on the folded pink region.
The decomposition map of~\ref{lem:TensorIso}, that is the inverse of~$\mul$, 
for this folded boundary algebra
agrees on the right boundary with that of the original, unfolded, boundary algebra~$\calB$
on the positive $x$-axis,
and agrees on the left boundary with that of the original, unfolded, boundary algebra~$\calA$
on the negative $x$-axis.
Hence, $\calA$ and $\calB$ blend,
but due to the folding we have to invert the $y$-coordinate on one of the algebras
to put them on a common flat space.
The right pane~($\mathsf b$) 
shows that the stable equivalence of two invertible subalgebras
implies the blending equivalence.
}
\label{fig:blending}
\end{figure}

In view of~\ref{lem:TensorIso}, 
we may introduce blending equivalence of invertible subalgebras.
Since an invertible subalgebra~$\calA$ is determined 
by a locality-preserving map~$\Mat \to \calA \otimes \calB$
(the inverse of~$\mul$)
where $\calB$ is the commutant of~$\calA$ within~$\Mat$,
we consider interpolation between two different such maps 
across two infinite halves of the lattice.
An example of invertible subalgebra blending appears 
when we consider the boundary algebra of a QCA with spatial boundary folding.
It is not difficult to see in Figure~\ref{fig:blending}($\mathsf a$)
that an invertible subalgebra~$\calA$ 
blends into its commutant after one-coordinate inversion;
see~\cite{FHH2019} for some discussion of one-coordinate inversion.
This blending gives an equivalence relation.
Note that
	\emph{any two stably equivalent invertible subalgebras always blend.}
	Indeed, by~\ref{thm:main}, these invertible subalgebras~$\calB_1$ and~$\calB_2$ 
	give two QCA that blend in one dimension higher,
	and a blending QCA between the two QCA shows that $\calB_1$ blends 
	into the one-coordinate-inversion of~$\calA$ as in Figure~\ref{fig:blending}($\mathsf b$),
	which blends into $\calB_2$.
Considering higher dimensions,
we see that boundary algebras on different axes blend into each other
after an appropriate coordinate transformation.

Beyond the problem to find invariants,
there are various questions left unresolved.
For example,
we do not know if the notion of visibly simple subalgebra 
is equivalent to that of invertible subalgebra on infinite dimensional local operator algebras.
Another question is
whether the image of a locality-preserving map from an invertible subalgebra is always invertible.
This question may be thought of as a domain extension problem 
in which one is given a map from an invertible subalgebra of~$\Mat$,
and wishes to extend it to the entire~$\Mat$ where the invertible subalgebra resides.

Finally, an important avenue to explore 
is to consider an approximately locality-preserving setting~\cite{Ranard2022}.
Our invertibility demands that there be a hard bound on 
the support of the decomposition of form~$x = \sum_j a_j b_j$.
In quasi-local operator algebras, which are the norm-completion of local operator algebras,
this is probably too restrictive.
We would want approximately locality-preserving maps in the stable equivalence
of invertible subalgebras.
In particular, 
if we wish to include Hamiltonian evolution on the same footing as QCA,
then the decomposition for invertibility must be relaxed.
Some choice will have to be made for what we mean by approximately locality-preserving.
Colloquially speaking, 
we need to choose how fast the tails of $*$-automorphisms should decay.
The result in the appendix that Hamiltonian time evolution
is a limit of quantum circuits,
motivates us that we might want to 
think of approximately locality-preserving $*$-automorphisms
as limits of QCA.
The main point of~\cite{Ranard2022} is that 
a certain class of those on one-dimensional lattice
are indeed limits of QCA.

\appendix

\section{Topology of linear transformations on local operator algebras}\label{app:topo}

We are going to show that time evolution by a local Hamiltonian for any given evolution time
is a limit of a sequence of QCA, 
each of which is a finite depth quantum circuit.
This formalizes the statement that quantum circuits approximate Hamiltonian time evolution,
which is nontrivial since our operator space is infinite dimensional.
The content of this statement 
may depend significantly on the topology
of the space to which the time evolution and quantum circuits belong.
By choosing a sufficiently fine topology on this space,
we wish to settle on an interesting and fruitful class 
of approximately locality-preserving $*$-automorphisms.
This perspective is perhaps complementary to
the result of~\cite{Ranard2022}.
We will comment on it after~\ref{thm:TimeEvolutionIsLimit} below.

Our exposition will contain some standard notions and results
such as trace, norm, local Hamiltonians, derivations, and time evolution.
These can all be found in a textbook~\cite{BratteliRobinson2},
but we remain as elementary as possible.

We simply write~$\Mat$ instead of~$\Mat(\ZZ^\dd,p)$ when the spatial dimension~$\dd$ 
and local dimension assignment~$p$ are not important.
Nowhere in our analysis will the local dimension appear.
The operator space~$\Mat$ and its completion~$\barMat$ 
are always endowed with the standard norm topology.
Consider a real vector space
\begin{align}
\calL = \{ \varphi : \Mat \to \barMat ~|~ \varphi \text{ is $*$-linear.} \}
\end{align}
No locality-preserving property is enforced,
though we will primarily interested in those with a locality-preserving property.
Every $*$-algebra automorphism of~$\Mat$ is a member of~$\calL$.
As we will see, a derivation $x \mapsto [\ii H,x]$ by a local Hamiltonian~$H$ is
a member of~$\calL$.

\begin{definition}\label{def:dist}
For any $\alpha,\beta \in \calL$, we define
\begin{equation}
\dist(\alpha,\beta) = \sup_{x \in \Mat \setminus \CC\one} \frac{\norm{\alpha(x) - \beta(x)}}{\norm x \cdot \abs{\Supp(x)}}
\end{equation}
where $\abs{\Supp(x)}$ denotes the number of sites in~$\Supp(x)$.
Note that the supremum is taken over all finitely supported elements of~$\Mat$.
\end{definition}

The appearance of $\norm x$ in the denominator is necessary,
because, otherwise, the supremum will be simply infinite as seen by scaling~$x$ by a scalar.
The choice of the other factor~$\abs{\Supp(x)}$ is somewhat arbitrary,
but there is some reason we would want such a factor 
that is a growing function in~$\abs{\Supp(x)}$,
as we will explain in~\ref{rem:reasonSupp} below.

Our metric topology on~$\calL$
sits between the strong topology and the induced norm topology.
The induced norm topology is obtained by removing~$\abs{\Supp(x)}$ factor
in the definition of~$\dist$.
It is equivalently defined as one in which
$\alpha_i$ converges to~$\beta$ if and only if
$\lim_{i \to \infty} \sup_{x \in \Mat(\ZZ^\dd) : \norm{x} = 1} \norm{\alpha_i(x) - \beta(x)} = 0$.
This topology is too fine to be useful;
we will see in~\ref{rem:continuity} below that 
the time evolution operator of
a noninteracting local Hamiltonian is not continuous in time.
In the strong topology,
a sequence of linear transformations~$\alpha_i$ converges to another~$\beta$
if and only if 
$\lim_{i \to \infty} \norm{\alpha_i(x) - \beta(x)} = 0$ for any~$x \in \Mat(\ZZ^\dd)$.
It is pointwise convergence.
The strong topology is a conventional choice 
in a classic book by Bratteli and Robinson~\cite{BratteliRobinson2}
when deriving time evolution by a lattice local Hamiltonian
as a limit of those on finite subsystems of increasing volume.
We will show in~\ref{rem:strongTopology}
that our metric topology is strictly finer than the strong topology.

\begin{proposition}\label{thm:distIsMetric}
$\calL$ is a complete metric space under~$\dist$.
\end{proposition}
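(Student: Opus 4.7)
The plan is to verify the four metric axioms in order: non-negativity, symmetry, triangle inequality, and identity of indiscernibles. Non-negativity is immediate since $\dist$ is a supremum of ratios of non-negative quantities, and symmetry is immediate since $\norm{\alpha(x)-\beta(x)} = \norm{\beta(x)-\alpha(x)}$. These two axioms require no work beyond reading the definition.

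For the triangle inequality, I would fix any $\alpha,\beta,\gamma \in \calL$ and any $x \in \Mat \setminus \CC\one$, then apply the ordinary triangle inequality for the operator norm on $\barMat$ to $\alpha(x)-\gamma(x) = (\alpha(x)-\beta(x)) + (\beta(x)-\gamma(x))$. Dividing both sides by $\norm{x}\cdot\abs{\Supp(x)}$ and bounding each summand by the corresponding $\dist$, then taking the supremum over $x$, gives $\dist(\alpha,\gamma) \le \dist(\alpha,\beta) + \dist(\beta,\gamma)$. No subtlety here.

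The one axiom that actually requires a little thought is the identity of indiscernibles: assuming $\dist(\alpha,\beta)=0$, I need to conclude $\alpha=\beta$ as maps on all of $\Mat$, including on scalars $c\one$ (which are excluded from the supremum because $\Supp(c\one)=\emptyset$ would make the denominator zero). The trick is to use $\RR$-linearity: for $c \in \RR$ and any fixed $x_0 \in \Mat\setminus\CC\one$, the element $x_0 + c\one$ is still not a scalar, so $\alpha(x_0+c\one) = \beta(x_0+c\one)$; expanding by linearity and using $\alpha(x_0)=\beta(x_0)$ yields $\alpha(\one)=\beta(\one)$, and then agreement on all of $\CC\one$ follows by $\RR$-linearity (and the $*$-property for the imaginary part of $c$, or by repeating the argument with $\ii\one$). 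Thus $\alpha$ and $\beta$ agree on the spanning set $\Mat = (\Mat\setminus\CC\one) \cup \CC\one$, so $\alpha=\beta$. The converse ($\alpha=\beta$ implies $\dist=0$) is trivial.

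I do not expect any genuine obstacle; the only point of care is the scalar case in identity of indiscernibles, which is a short linearity argument. I should also briefly note that $\dist$ may take the value $+\infty$ for general $\alpha,\beta \in \calL$ (since $\calL$ contains arbitrary $*$-linear maps with no boundedness hypothesis), so the statement is really that $\dist$ is an extended metric; this is the standard convention and the four axioms go through unchanged, and the author presumably treats it as such.
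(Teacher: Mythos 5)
Your proof is correct and follows essentially the same route as the paper's: the triangle inequality is the standard pointwise estimate followed by a supremum, and the scalar case of identity of indiscernibles is handled by the same trick of expressing $\one$ via non-scalar elements (the paper writes $\delta(\one)=\delta(\one-x)+\delta(x)$ for $x\notin\CC\one$, which is your $x_0+c\one$ argument in mirror image). Your added caveat that $\dist$ may take the value $+\infty$ on all of $\calL$ is a fair observation the paper leaves implicit, but it does not change the argument.
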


\begin{proof}
First, we have to show that $\dist$ is a metric.
It is obvious that $\dist(\alpha,\beta) = \dist(\beta,\alpha)$.
Suppose $\dist(\alpha,\beta) = 0$. Clearly, $\delta = \alpha - \beta = 0$ on~$\Mat \setminus \CC\one$.
If $x \in \Mat \setminus \CC\one$, we have $\norm{\delta(\one)} = \norm{\delta(\one - x) + \delta(x)} \le \norm{\delta(\one - x)} + \norm{\delta(x)} = 0$, so $\delta = 0$ on the entire~$\Mat$.
Triangle inequality follows by
\begin{align}
\frac{\norm{\alpha(x) - \gamma(x)}}{\norm x \cdot \abs{\Supp(x)} }
&\le
\frac{\norm{\alpha(x) - \beta(x)}}{\norm x \cdot \abs{\Supp(x)} }
+
\frac{\norm{\beta(x) - \gamma(x)}}{\norm x \cdot \abs{\Supp(x)} } \nonumber\\
&\le 
\sup_y \frac{\norm{\alpha(y) - \beta(y)}}{\norm y \cdot \abs{\Supp(y)} }
+
\sup_z \frac{\norm{\beta(z) - \gamma(z)}}{\norm z \cdot \abs{\Supp(z)} }\\
&= \dist(\alpha,\beta) + \dist(\beta,\gamma) \nonumber
\end{align}
and taking the supremum over~$x$.

To show the completeness,
let $\alpha_1,\alpha_2,\ldots \in \calL$ be a Cauchy sequence under~$\dist$.
For any $x \in \Mat$,
the sequence $\alpha_1(x),\alpha_2(x),\ldots \in \barMat$ is Cauchy under~$\norm \cdot$,
and hence converges in~$\barMat$.
Define $\alpha(x) = \lim_{i \to \infty} \alpha_i(x)$.
It is readily checked that $\alpha$ is $*$-linear and hence is a member of~$\calL$.
\end{proof}

\begin{lemma}
On the subset of~$\calL$, consisting of all norm-nonincreasing maps,
the evaluation map $(\alpha,x) \mapsto \alpha(x)$ is continuous.
\end{lemma}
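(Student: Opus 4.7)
The plan is to apply the triangle inequality in the standard two-variable form. Given a convergent pair $(\alpha_n, x_n) \to (\alpha_0, x_0)$ in the product topology (so that $\dist(\alpha_n,\alpha_0) \to 0$ and $\norm{x_n - x_0} \to 0$, with all $\alpha_n,\alpha_0$ norm-nonincreasing), I would split
\begin{align}
\norm{\alpha_n(x_n) - \alpha_0(x_0)} \le \norm{\alpha_n(x_n - x_0)} + \norm{(\alpha_n - \alpha_0)(x_0)}
\end{align}
and treat each summand separately.

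For the first summand, the norm-nonincreasing hypothesis on~$\alpha_n$ together with $\CC$-linearity immediately gives $\norm{\alpha_n(x_n - x_0)} \le \norm{x_n - x_0}$, which tends to zero by hypothesis. This is precisely where the restriction to norm-nonincreasing maps enters: without such a uniform Lipschitz constant, a wildly behaved sequence $\alpha_n$ could amplify the vanishing perturbation $x_n - x_0$ arbitrarily, and joint continuity would fail.

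For the second summand, provided $x_0 \notin \CC\one$, the definition of $\dist$ in~\ref{def:dist} directly furnishes
\begin{align}
\norm{(\alpha_n - \alpha_0)(x_0)} \le \dist(\alpha_n,\alpha_0)\cdot \norm{x_0}\cdot \abs{\Supp(x_0)},
\end{align}
and since $x_0 \in \Mat$ is fixed with $\abs{\Supp(x_0)} < \infty$, the right-hand side vanishes. The only edge case is $x_0 \in \CC\one$, in which case the supremum defining $\dist$ does not see $x_0$ at all. Here I would borrow the additive-shift trick used in the proof of~\ref{thm:distIsMetric}: pick any fixed nonscalar $y \in \Mat$ (for instance, a single-site non-identity operator) and write
\begin{align}
(\alpha_n - \alpha_0)(x_0) = (\alpha_n - \alpha_0)(x_0 + y) - (\alpha_n - \alpha_0)(y),
\end{align}
noting that $x_0 + y \notin \CC\one$ since $x_0$ is scalar. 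The metric bound then applies to both summands on the right, each producing a fixed constant multiple of $\dist(\alpha_n,\alpha_0)$, so the left-hand side again vanishes.

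I do not anticipate any substantive obstacle. Once the metric has been chosen as in~\ref{def:dist}, the argument is essentially a one-line $\varepsilon/2$ estimate; the only mildly delicate point is the scalar case, and this is dispatched by the same additive shift that made $\dist$ a genuine metric in~\ref{thm:distIsMetric}.
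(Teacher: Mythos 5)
Your proof is correct and follows essentially the same route as the paper's: the standard two-term triangle-inequality split, with the norm-nonincreasing hypothesis absorbing the perturbation of the argument and the metric $\dist$ controlling the perturbation of the map at a fixed finitely supported operator. Your extra handling of the scalar case $x_0 \in \CC\one$ via the additive shift is a careful touch that the paper's own proof glosses over, but it does not change the substance of the argument.
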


\begin{proof}
Let $\epsilon > 0$. 
Let $(\alpha,x)$ be in the domain of evaluation map.
Let $z \in \Mat \setminus \CC\one$ be a single-site operator 
such that~$\norm z \le 1$ and~$x-z \notin \CC\one$.
We take $\delta = \epsilon / (2 + (1+\norm x)(1+\abs{\Supp(x)}))$
and an open neighborhood of~$(\alpha,x)$ consisting of~$(\beta,y)$
such that $\dist(\alpha,\beta) < \delta$ and $\norm{x- y} < \delta$.
Then,
\begin{align}
	\norm{\alpha(x) - \beta(y)}
	&\le \norm{\alpha(x) - \beta(x)} + \norm{\beta(x) - \beta(y)} \nonumber\\
	&\le \norm{\alpha(x-z) - \beta(x-z)} + \norm{\alpha(z) - \beta(z)} + \norm{x-y}\\
	&\le \dist(\alpha, \beta)\norm{x-z}(\abs{\Supp(x)} + 1) + \dist(\alpha,\beta)\norm z + \norm{x - y}\nonumber\\
	&< \delta (1 + \norm x)(1+\abs{\Supp(x)}) + \delta + \delta = \epsilon.
	 \nonumber \qedhere
\end{align}
\end{proof}

\begin{proposition}\label{prop:trnormpreserving}
Every $*$-algebra homomorphism from~$\Mat$ into~$\barMat$, a member of~$\calL$, 
preserves trace and norm.
\end{proposition}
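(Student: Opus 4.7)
The plan is to reduce the problem to finite subsystems, where both norm and trace preservation are classical facts about matrix algebras. For any finite $S \subset \ZZ^D$, the subalgebra $\Mat(S)$ is a finite-dimensional simple $C^*$-algebra, and $\Mat = \bigcup_S \Mat(S)$. The codomain $\barMat$ is the norm completion of $\Mat$, hence a $C^*$-algebra. So $\phi$ restricted to each $\Mat(S)$ is a $*$-homomorphism between $C^*$-algebras, and I can analyze it there.

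First I would establish norm preservation. By simplicity of $\Mat(S)$, the kernel of $\phi|_{\Mat(S)}$ is either $\{0\}$ or all of $\Mat(S)$. Assuming $\phi$ is unital (so that $\phi(\one) = \one$, which is forced if $\phi$ is to preserve trace anyway and is the convention implicit throughout the paper), the latter is excluded since $\one \in \Mat(S)$. Thus $\phi|_{\Mat(S)}$ is injective, and I would invoke the standard $C^*$-algebra fact that an injective $*$-homomorphism between $C^*$-algebras is isometric. This yields $\norm{\phi(x)} = \norm{x}$ for $x \in \Mat(S)$, and since $S$ was arbitrary, for all $x \in \Mat$.

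Next I would handle trace preservation via uniqueness of the normalized tracial state. The functional $\Mat(S) \ni x \mapsto \tr(\phi(x))$ is $\CC$-linear, tracial (because $\phi$ is a homomorphism and $\tr$ is tracial on $\barMat$, using that $\tr$ extends continuously from $\Mat$ to $\barMat$ since $\abs{\tr(\cdot)} \le \norm{\cdot}$), and positive (because $\phi(x^\dag x) = \phi(x)^\dag \phi(x)$). On a finite-dimensional simple matrix algebra the cone of tracial positive linear functionals is one-dimensional, spanned by the normalized trace. Therefore $\tr(\phi(x)) = c_S \tr(x)$ on $\Mat(S)$, and evaluating at $x = \one$ gives $c_S = \tr(\phi(\one)) = \tr(\one) = 1$. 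Hence $\tr \circ \phi = \tr$ on each $\Mat(S)$, and thus on all of $\Mat$.

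The main delicate point is arranging that $\phi(\one) = \one$, so that the constant $c_S$ above actually equals $1$; in the unital convention this is automatic, otherwise one would at least see that $\phi$ factors through a corner $e\barMat e$ and trace is preserved only up to the factor $\tr(e) \le 1$. Both ingredients used — isometry of injective $C^*$-homomorphisms and uniqueness of the normalized tracial state on a matrix algebra — are classical and need no new machinery beyond what is already invoked elsewhere in the paper.
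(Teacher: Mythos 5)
Your proof is correct, and for the norm part it takes a genuinely different route from the paper. For the trace, your argument---that $x\mapsto\tr(\phi(x))$ is a positive tracial linear functional on the simple finite-dimensional algebra $\Mat(S)$, hence a scalar multiple of the normalized trace, pinned down by evaluating at $\one$---is exactly the one-line justification the paper states first (``follows from the uniqueness of the trace''), before giving a more elementary alternative via rank-one projectors permuted by a cyclic unitary. For the norm, the paper does \emph{not} invoke the theorem that an injective $*$-homomorphism of $C^*$-algebras is isometric; it proves both inequalities by hand, using the variational formula $\norm{\rho}=\sup_{\sigma}\tr(\rho\sigma)$, the fact that $\phi$ sends mutually orthogonal projectors to mutually orthogonal projectors, and an explicit norm-approximation of the image projector by elements of $\Mat$ for the lower bound. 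Your restriction to $\Mat(S)$ is the right move to make the $C^*$ machinery legitimately applicable ($\Mat$ itself is not norm-complete, but each $\Mat(S)$ is), and simplicity together with unitality does force injectivity. What your approach buys is brevity and a clean reduction to two classical facts; what the paper's buys is self-containedness, in line with its stated aim in the appendix to remain as elementary as possible. Your caveat about unitality is apt: the paper's own proof also tacitly uses $\phi(\one)=\one$ when showing $\phi(u)$ is unitary, and the intended application (short-time evolution) is shown to preserve $\one$. The one point to keep explicit, which you do note, is that $\tr$ extends continuously to $\barMat$ and remains tracial and positive there.
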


\begin{proof}
Let $\phi$ be a $*$-algebra homomorphism.

The trace preserving property follows from the uniqueness of the trace
and that $x \mapsto \tr(\phi(x))$ is a trace.
Alternatively, we can prove it as follows.
An arbitrary operator $x \in \Mat$ can be written as the sum of 
its hermitian part~$\sum_i a_i \pi_i$
and antihermitian part~$\ii\sum_i  b_i \pi'_i$
where $a_i,b_i \in \RR$ and $\pi_i = \pi_i^\dag = \pi_i^2$, $\pi'_i = (\pi'_i)^\dag = (\pi'_i)^2$
are projectors.
Hence, we simply show that the trace of a finitely supported projector 
is preserved under~$\phi$.
Any finite supported projector is a sum of finitely many rank~$1$ mutually orthogonal 
projectors~$\tau_j$ on its finite support, so we further reduce the proof to showing 
that~$\tr(\phi(\tau_j)) = \tr(\tau_j)$.
We can complete $\{\tau_i\}$ to a resolution of identity: $\sum_{j=1}^n \tau_j = \one$.
Note that $\tr(\tau_j) = \tr(\tau_{j+1}) = 1/n$.
Observe that there is a unitary~$u$ such that $u \tau_j u^\dag = \tau_{j+1}$
for all~$j$ where $\tau_{n+1} = \tau_1$.
Since $\phi$ is a $*$-homomorphism, $\phi(u)$ is also a unitary:
$\phi(u)^\dag \phi(u) = \phi(u^\dag u) = \phi(\one) = \one$.
Therefore, $\tr(\phi(\tau_j)) = \tr(\phi( u \tau_j u^\dag )  ) = \tr( \phi(\tau_{j+1}))$
and $\sum_j \tr(\phi(\tau_j)) = 1$ implies that $\tr(\phi(\tau_j)) = 1/n$.
This complete the proof that $\phi$ preserves the trace.

For norm,
it suffices to show that $\phi$ preserves norm of positive semidefinite operators~$y \in \Mat$
since $\norm{x}^2 = \norm{x^\dag x}$ for all~$x \in \Mat$.
Let $y = \sum_{i=1}^n a_i \pi_i$ be the eigenvalue decomposition 
with orthogonal projectors~$\pi_i$ which sum to~$\one$ and real numbers~$a_1 \ge \cdots \ge a_n \ge 0$,
so $a_1 = \norm y$.
Observe that $\phi$ maps mutually orthogonal projectors to mutually orthogonal projectors.
So, $\phi(y) = \sum_i a_i \pi'_i$ where $\pi'_i$ are mutually orthogonal projectors.
For any positive semidefinite~$\rho \in \Mat$ with $\tr(\rho) \le 1$,
we have $\tr(\rho \pi'_i) = \tr(\pi'_i \rho \pi'_i) \ge 0$ that sum to~$\tr(\rho)$.
So, $\tr(\phi(y)\rho) = \sum_i a_i \tr(\pi'_i \rho)$ is a convex combination of~$a_i$.
Therefore, $\norm{\phi(y)} = \sup_\rho \tr(\phi(y) \rho) \le a_1 = \norm{ y }$.
To show the opposite inequality,
let $\epsilon \in  (0, 1)$ and choose $\tau \in \Mat$ such that $\norm{\pi'_1 - \tau} < \epsilon$.
Then, $\norm{\tau} < 1 + \epsilon$, 
implying
$\norm{\tau^\dag \tau - \pi'_1} \le \norm{\tau^\dag}\norm{\tau - \pi'_1} + \norm{\tau^\dag - \pi'_1} \norm{\pi'_1} < (1+\epsilon)\epsilon + \epsilon < 3 \epsilon$,
implying
$\tr(\tau^\dag \tau) < 1 + 3 \epsilon$.
Put $\sigma = \tau^\dag \tau / (1+3\epsilon) \succeq 0$, so $\tr( \sigma ) < 1$.
Then, $\tr(\phi(y) \sigma) \ge a_1 - O(n \epsilon)$.
Since $\epsilon$ was arbitrary, $\norm{\phi(y)} = \sup_\rho \tr(\phi(y) \rho) \ge a_1$.
\end{proof}

\subsection{Hamiltonian time evolution}

By a {\bf bounded strength strictly local Hamiltonian}~$H$ on~$\Mat$ 
we mean a collection~$\{h_X | X \subset \ZZ^\dd, \norm{h_X} \le 1, \Supp(h_X) \subseteq X \}$ 
of hermitian elements of~$\Mat$
such that
for some $R > 0$, called {\bf range},
it holds that $h_X = 0$ whenever $\diam X > R$.
The collection may have infinitely many nonzero terms;
otherwise, one would be studying finite systems.
We write $H = \sum_X h_X$ where the sum is formal to denote the collection.
The most important role of a bounded strength strictly local Hamiltonian is
that it gives a locality-preserving $\CC$-linear $*$-map (not an algebra homomorphism)
from $\Mat$ to itself,
denoted by $[\ii H, \cdot ]: y \mapsto [\ii H, y] = \sum_X \ii [h_X, y]$
where this sum is now meaningful 
since it is always finite for any given $y \in \Mat$.
It belongs to~$\calL$ and satisfies the Leibniz rule:
\begin{align}
	[\ii H, x y] = [\ii H, x] y + x [\ii H, y].
\end{align}
A member $\beta \in \calL$ is a {\bf derivation} with spread~$\ell$
if it obeys the Leibniz rule and $\Supp(\beta(x)) \subseteq \Supp(x)^{+\ell}$
for all~$x \in \Mat$.
Every bounded strength strictly local Hamiltonian gives a derivation.

By abuse of notation, for any $t \in \RR$ 
we will write $y \mapsto y(t)$,
called time evolution by~$H$ where $H$ is implicit in notation~$y(t)$,
to mean
\begin{align}
	y(t) &= \lim_{n \to \infty} y(t)_n\label{eq:timeevolve}\\
	y(t)_n &= \sum_{k = 0}^n \frac{t^k}{k!} [\ii H, y]_k
	\qquad \text{ where }[\ii H, y]_k = \begin{cases}
	[\ii H,[\ii H, y]_{k-1}]  & (k > 0)\\
	y & (k=0)
	\end{cases} . \nonumber
\end{align}
The power series may not converge in~$\Mat$,
but we will show that it absolutely converges, at least for some nonzero~$t$,
in the norm completion~$\barMat$.
Unless otherwise specified, 
$H = \sum_X h_X$ will denote a bounded strength strictly local Hamiltonian 
with range~$R$ on~$\Mat$.

\begin{lemma}\label{lem:HeisenbergRemainder}
For any integer~$k > 0$ and any $y \in \Mat$,
\begin{align}
 \frac{\norm{ [H, y]_k } }{k!}  \le C \zeta^k \abs{\Supp(y)} \norm y
\end{align}
where $C,\zeta > 0$ depends only on~$\dd$ and~$R$.
\end{lemma}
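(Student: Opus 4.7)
The plan is to expand the $k$-fold nested commutator as a sum over ordered tuples of interaction supports and bound the number of tuples that produce nonzero terms. Writing
\begin{equation*}
[H, y]_k = \sum_{X_1, \ldots, X_k} [h_{X_1}, [h_{X_2}, [\cdots, [h_{X_k}, y] \cdots ]]],
\end{equation*}
each individual summand has norm at most $2^k \norm y$ by iterating $\norm{[A, B]} \le 2 \norm A \norm B$ together with the hypothesis $\norm{h_X} \le 1$. A summand is nonzero only when the tuple is \emph{chain-connected}: for each $j$ the set $X_j$ meets $\Supp(y) \cup X_{j+1} \cup \cdots \cup X_k$, because $[h_X, \cdot]$ annihilates every operator whose support is disjoint from $X$.

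To count the surviving chains I would associate each one to a rooted tree on $\{0, 1, \ldots, k\}$ with root $0$ standing for $\Supp(y)$: assign to each $j \ge 1$ a canonical parent $p(j)$, namely the largest index $i > j$ with $X_i \cap X_j \ne \emptyset$, or $p(j) = 0$ if $X_j$ meets $\Supp(y)$ and no later $X_i$. For a fixed tree shape the number of chains realizing it factorizes site by site. Children of $0$ are interactions touching $\Supp(y)$, contributing at most $\abs{\Supp(y)} \cdot N_0$ choices, where $N_0 = N_0(D,R)$ bounds the number of admissible interactions containing any given site. Every non-root child is constrained to overlap its parent, which has support of size at most $N = (2R+1)^D$, so contributes at most $N \cdot N_0$ choices. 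Summing over all labeled rooted tree shapes and linear extensions compatible with the chain order, a standard polymer/cluster enumeration then packages the total count of nonzero chains into a bound of the form $k! \cdot C \abs{\Supp(y)} \zeta_0^k$ for constants $C, \zeta_0$ depending only on $D$ and $R$.

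Combining the uniform per-chain norm bound $2^k \norm y$ with this count and dividing by $k!$ yields $\frac{\norm{[H, y]_k}}{k!} \le C (2 \zeta_0)^k \abs{\Supp(y)} \norm y$, and setting $\zeta = 2\zeta_0$ proves the lemma. The hard part will be the combinatorial step: one must arrange that $\abs{\Supp(y)}$ enters the bound \emph{exactly once}, via the attachment of the tree's root-level children to $\Supp(y)$, while the remaining $k-1$ branchings contribute only a $\zeta_0^{k-1}$ factor. The naive level-by-level expansion gives the weaker estimate $\bigl(\abs{\Supp(y)} N_0\bigr)^k$, which is polynomial in $\abs{\Supp(y)}$; circumventing this requires bundling all root-level multiplicity into a single aggregated factor via the polymer counting and crucially using that every non-root parent has uniformly bounded support.
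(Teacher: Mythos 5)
Your strategy coincides with the paper's: expand $[H,y]_k$ over tuples $(X_1,\dots,X_k)$, bound each surviving summand by $2^k\norm{y}$, and control the number of survivors by a connectivity/cluster count anchored to $\Supp(y)$. The problem is that the step you defer to a ``standard polymer/cluster enumeration'' --- forcing $\abs{\Supp(y)}$ to appear exactly once --- is not just the hard part; it cannot be done, because the inequality as stated is false. A chain-connected tuple need not form a single cluster rooted at one site of $\Supp(y)$: in your tree picture the root $0$ can have up to $k$ children, i.e.\ the tuple can split into several mutually disjoint clusters each separately attached to $\Supp(y)$, and each attachment contributes an independent factor of $\abs{\Supp(y)}$. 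This is realized by $H=\sum_j \sigma^Z_j$ and $y=\bigotimes_{j=1}^n\sigma^X_j$: writing $\sigma^X=\sigma^++\sigma^-$ one finds $U^H_t(y)=\sum_{S\subseteq\{1,\dots,n\}}e^{2\ii t(2\abs{S}-n)}E_S$ where $E_S=\prod_{j\in S}\sigma^+_j\prod_{j\notin S}\sigma^-_j=\ket{u_S}\bra{v_S}$ with $\{u_S\}$ and $\{v_S\}$ each orthonormal, so that $\norm{[H,y]_k}=\max_S\abs{2(2\abs{S}-n)}^k=(2n)^k$. For any fixed $k\ge 2$ this exceeds $C\,k!\,\zeta^k\,\abs{\Supp(y)}\,\norm{y}=C\,k!\,\zeta^k n$ once $n$ is large. (The paper's own proof stumbles at the same spot: its claim that a nonvanishing term forces $\{s,h_{X_1},\dots,h_{X_k}\}$ to be connected in $G_s$ for a \emph{single} $s\in\Supp(y)$ already fails for $k=2$ with $X_1,X_2$ disjoint and meeting $\Supp(y)$ at different sites.)

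What your counting legitimately delivers is the estimate you dismissed as naive. Choosing $X_j$ to meet $\Supp(y)\cup X_{j+1}\cup\cdots\cup X_k$ allows at most $N_0\bigl(\abs{\Supp(y)}+(k-j)(2R+1)^D\bigr)$ options, so the number of surviving tuples is at most $\zeta_0^k\prod_{i=0}^{k-1}(\abs{\Supp(y)}+i)=\zeta_0^k\,k!\binom{\abs{\Supp(y)}+k-1}{k}$ with $\zeta_0$ depending only on $D,R$, whence $\norm{[H,y]_k}/k!\le(2\zeta_0)^k\binom{\abs{\Supp(y)}+k-1}{k}\norm{y}$. This is sharp up to constants (it matches the counterexample above) and, via $\sum_k u^k\binom{n+k-1}{k}=(1-u)^{-n}$, it still yields absolute convergence of the Taylor series for the time evolution when $\abs{t}$ is small, so the qualitative downstream conclusions survive; but any later estimate that uses the \emph{linear} dependence on $\abs{\Supp(y)}$ has to be re-derived. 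The honest fix for your write-up is therefore not to find the missing enumeration lemma but to weaken the statement to the binomial bound and propagate that change.
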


\begin{proof}
If $y \in \CC \one$, there is nothing to prove.
Let $s$ denote a site in the support of~$y \in \Mat \setminus \CC\one$.
Introduce a graph~$G_s$ with nodes labeled by~$s$ and all nonzero terms~$h_X$ of~$H$
where an edge is present iff two operators have overlapping supports 
or an operator's support contains~$s$.
The degree of~$G_s$ is bounded because $H$ is strictly local;
an upper bound on the degree depends on~$\dd$ and~$R$.
The $k$-th nested commutator is a $k$-fold finite sum
\begin{align}
\sum_{X_1,\ldots,X_k} [h_{X_1},\cdots,[h_{X_k}, y]\cdots], \label{eq:nestedComm}
\end{align}
each term of which may not vanish only if $\{s,h_{X_1},h_{X_2},\ldots,h_{X_k}\}$
defines a connected subgraph of~$G_s$ for some $s \in \Supp(y)$.
The norm of a term here is at most $2^k \norm y$.
We say that $J$ is a multiset over a graph~$G$ 
if it is a function $J : \mathrm{Vertex}(G) \to \ZZ_{\ge 0}$,
where the image of a vertex~$v$ is called the weight or multiplicity of~$v$,
and the support of a multiset~$J$ is defined as $\Supp(J) = J^{-1}(\ZZ_{> 0})$.
The total weight of~$J$ is $\abs J = \sum_{g \in G} J(g)$.
We say that a multiset is connected if the support is connected.
Given a tuple~$(s,X_1,\ldots,X_k)$ we get a unique multiset~$J$ over~$G_s$ of total weight~$k+1$,
whose support is connected in~$G_s$.
Conversely, given a connected multiset~$J$ such that $J(s) =1$ and $\abs J = k+1$,
there are at most~$k!$ different tuples that correspond to~$J$.
Therefore,
\begin{align}
 \frac{1}{k!} \norm{ [H, y]_k } \le \sum_{s \in \Supp(y)} \quad \sum_{\mathrm{multiset}\,J:\, y \in \Supp(J) \subseteq G_s,\, \text{connected},\, \abs J = k+1 } 2^k \norm y .
\end{align}
Hence, it remains to show that the number of rooted, connected multisets over~$G_s$ with total weight $k+1$
is at most exponential in~$k$.
When the weight for each node is~$1$ or~$0$,
this is a subgraph counting and it is well known~\cite[Lemma~2.1]{Borgs2010} that
an upper bound is~$(ed)^{k+1}$ 
where $e \approx 2.718$ and $d$ is the maximum degree of~$G_s$.
Multiplicities increase the count,
but a bound is still exponential in~$k$: 
$\sum_{j=1}^{k+1} \binom{k}{j-1} (ed)^{j} \le (1+ed)^{k+1}$.
\end{proof}

Using~\ref{lem:HeisenbergRemainder}, we set
\begin{align}
t_0 = \frac 1 {128 \zeta}.
\end{align}
By~\ref{lem:HeisenbergRemainder}, for $t \in (-t_0,t_0)$ 
the sequence~$y(t)_0, y(t)_1, y(t)_2,\ldots$
is a Cauchy sequence, and hence converges in~$\barMat$ absolutely.
Therefore, the short time evolution belongs to~$\calL$ and preserves~$\one$.
The following inequalities will be handy
to show that short time evolution is a $*$-automorphism of~$\barMat$.

\begin{lemma}\label{lem:someIneq}
There exist constants $C',C'' > 0$ such that
for any $t,t' \in (-t_0, t_0)$, any $x,y\in \Mat$,
and any positive integers $n,m$ such that $n \le m$,
it holds that
\begin{align}
\norm{ y(t)_n(t')_m - y(t+t')_n } &\le 2^{-n} C' \norm y \abs{\Supp(y)} ,\\
\norm{ x(t)_n y(t)_n - (xy)(t)_n } & \le 2^{-n} C'' \norm x \norm y \abs{\Supp(x)} \abs{\Supp(y)}. \nonumber
\end{align}
\end{lemma}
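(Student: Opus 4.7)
The plan is to treat both differences as tails of formal power series. For the formal (infinite) series $y \mapsto y(t) = \sum_{k\ge 0} \frac{t^k}{k!}[\ii H,y]_k$, the semigroup identity $y(t)(t')=y(t+t')$ and the multiplicative identity $x(t)y(t)=(xy)(t)$ both hold as term-by-term polynomial identities in $t,t'$ with coefficients the nested commutators. Truncating each side to order $n$ therefore produces a discrepancy consisting exactly of the high-order terms that fall off one side but not the other, and Lemma~\ref{lem:HeisenbergRemainder} converts this combinatorial description into a geometric-tail norm estimate.

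For the first inequality I would expand
\begin{equation*}
y(t)_n(t')_m \;=\; \sum_{\substack{0\le k\le n\\ 0\le j\le m}} \frac{t^k (t')^j}{k!\,j!}\,[\ii H,y]_{k+j},
\end{equation*}
while the binomial theorem writes $y(t+t')_n$ as the same double sum restricted to $k+j\le n$. Because $n\le m$, every index pair with $k+j\le n$ automatically satisfies $k\le n$ and $j\le n\le m$, so the two sums differ only in the terms with $k+j>n$, $k\le n$, $j\le m$. Bounding each such term by Lemma~\ref{lem:HeisenbergRemainder} and regrouping by $\ell=k+j$ via the binomial identity leaves
\begin{equation*}
\norm{y(t)_n(t')_m-y(t+t')_n} \;\le\; C\,\abs{\Supp(y)}\,\norm y \sum_{\ell > n}\bigl(\zeta(\abs t + \abs{t'})\bigr)^\ell,
\end{equation*}
and since $\zeta(\abs t + \abs{t'}) < 2/128 = 1/64$ by the definition of $t_0$, the geometric tail is dominated by $2^{-n}$ times a constant $C'$.

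For the second inequality I would first prove by induction on $k$ the discrete Leibniz formula $[\ii H, xy]_k = \sum_{j=0}^{k} \binom{k}{j}[\ii H,x]_j [\ii H,y]_{k-j}$, which exhibits $(xy)(t)_n$ as the sub-sum of $x(t)_n y(t)_n = \sum_{0\le j,\ell\le n}\frac{t^{j+\ell}}{j!\,\ell!}[\ii H,x]_j[\ii H,y]_\ell$ consisting of the indices with $j+\ell\le n$. The difference is then the set of index pairs with $j+\ell>n$, $0\le j,\ell\le n$. Applying Lemma~\ref{lem:HeisenbergRemainder} to each factor separately and using submultiplicativity of the norm yields an upper bound $C^2\,\abs{\Supp x}\,\abs{\Supp y}\,\norm x\norm y\sum_{m>n}(m+1)(\zeta\abs t)^m$, which is again dominated by $C''\,2^{-n}$ since $\zeta\abs t<1/128$.

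The only care required is the bookkeeping of which index tuples survive in each side of the two identities—this is where the hypothesis $n\le m$ enters the first inequality and where the discrete Leibniz formula enters the second—and an elementary induction for the Leibniz formula. Once these are in place, the rest is a uniform geometric-tail estimate with ratio well below $1/2$, so I do not anticipate a genuine obstacle; the main bookkeeping step is choosing the right variable to sum over ($\ell=k+j$ or $m=j+\ell$) so that the multinomial coefficients telescope into a single-variable geometric series.
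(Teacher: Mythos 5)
Your proposal is correct and follows exactly the route the paper intends: the paper's proof consists of the single remark that the estimates follow from Lemma~\ref{lem:HeisenbergRemainder} together with the displayed Leibniz identity $[\ii H,xy]_n=\sum_{j}\binom{n}{j}[\ii H,x]_j[\ii H,y]_{n-j}$, and your tail-of-the-double-sum bookkeeping (using $n\le m$ for the first bound and the Leibniz formula for the second, then summing a geometric tail with ratio $\le 1/64$) is precisely the ``straightforward calculation'' being alluded to. The only cosmetic point is that the terms with one index equal to zero in the second estimate use the trivial $k=0$ extension of Lemma~\ref{lem:HeisenbergRemainder} (valid once $C\ge 1$), which does not affect the argument.
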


\begin{proof}
This is a straightforward calculation using~\ref{lem:HeisenbergRemainder} and using the Leibniz rule,
which implies
\begin{equation}
\sum_{j=0}^n \binom{n}{j} [\ii H, x]_j [\ii H, y]_{n-j} = [\ii H, xy]_n. \qedhere
\end{equation}
\end{proof}

\begin{corollary}\label{lem:TEisHomo}
Any short time evolution by~$H$ for time~$t$ with $\abs t < t_0$
is a $*$-algebra homomorphism from~$\Mat$ to~$\barMat$
preserving trace and norm.
\end{corollary}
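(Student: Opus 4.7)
The plan is to establish that $y \mapsto y(t)$, for $\abs{t}<t_0$, is a unital $*$-algebra homomorphism from $\Mat$ into $\barMat$, and then invoke \ref{prop:trnormpreserving} directly for trace- and norm-preservation. Three properties need to be checked on the partial sums $y(t)_n$ and then transported to the limit: $\CC$-linearity, compatibility with the adjoint, and multiplicativity. Unitality is essentially for free because $[\ii H, \one] = 0$, so every $\one(t)_n$ equals $\one$ and hence $\one(t)=\one$.

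First I would handle $*$-linearity. Each partial sum $y(t)_n = \sum_{k=0}^n (t^k/k!)\,[\ii H, y]_k$ is visibly $\CC$-linear in $y$, and the identity $([\ii H, y])^\dag = [\ii H, y^\dag]$ (using $H=H^\dag$ and $t\in\RR$) shows that each nested commutator respects the adjoint termwise. Hence each $y(t)_n$ is $*$-linear. Convergence in $\barMat$ for $\abs{t}<t_0$ is assured by \ref{lem:HeisenbergRemainder}, and since the $*$-operation is a norm-isometry on~$\barMat$ and scalar multiplication/addition are norm-continuous, the $*$-linearity passes to the limit $y(t)$.

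The key step is multiplicativity. Here I would invoke the second inequality of \ref{lem:someIneq},
\begin{equation}
\norm{x(t)_n\, y(t)_n - (xy)(t)_n} \;\le\; 2^{-n}\,C''\,\norm{x}\,\norm{y}\,\abs{\Supp(x)}\,\abs{\Supp(y)},
\end{equation}
and let $n\to\infty$. Joint norm-continuity of multiplication in the $C^*$-algebra $\barMat$ gives $x(t)_n\, y(t)_n \to x(t)\,y(t)$, while $(xy)(t)_n \to (xy)(t)$ by the definition~\eqref{eq:timeevolve}, yielding $(xy)(t) = x(t)\,y(t)$ inside $\barMat$. The main technical obstacle is precisely this limit step: one must remember that although each $y(t)_n$ lies in $\Mat$, the limit lives only in $\barMat$, so the products have to be interpreted there, and it is the norm-continuity of multiplication on the completion that allows the limit to be pushed through. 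Once $y\mapsto y(t)$ is known to be a unital $*$-algebra homomorphism $\Mat\to\barMat$, \ref{prop:trnormpreserving} applies verbatim and delivers preservation of trace and norm, completing the proof.
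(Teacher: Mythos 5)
Your proposal is correct and follows essentially the same route as the paper: the paper's (very terse) proof is exactly ``use the second inequality of~\ref{lem:someIneq}, then apply~\ref{prop:trnormpreserving}.'' You simply spell out the routine details (termwise $*$-linearity, unitality via $[\ii H,\one]=0$, and passing the multiplicativity estimate to the limit using norm-continuity of multiplication in~$\barMat$) that the paper leaves implicit.
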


\begin{proof}
Use the second inequality of~\ref{lem:someIneq}.
Apply~\ref{prop:trnormpreserving}.
\end{proof}

Being continuous, the map~$y \mapsto y(t)$ can be extended to the entire~$\barMat$
uniquely:
if $\bar y = \lim_{n \to \infty} y_n \in \barMat$ where $y_n \in \Mat$,
then we define
\begin{align}
\bar y(t) = \lim_{n \to \infty} y_n(t).
\end{align}
This limit exists by the metric completeness
because $\{y_n(t)\}_n$ is Cauchy whenever $\{y_n\}$ is, implied by~\ref{lem:TEisHomo}.
It is obvious that $\bar y(t)$ does not depend on the sequence~$y_n$ converging to~$\bar y$.

\begin{corollary}\label{thm:TEisAutoNormPreserving}
For any bounded strength strictly local Hamiltonian~$H$,
the time evolution $U^H_t : x \mapsto x(t)$ for any $t \in (-t_0,t_0)$
gives a $*$-automorphism of~$\barMat$ preserving trace and norm.
\end{corollary}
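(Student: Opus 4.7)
The plan is to lift the three properties of Corollary \ref{lem:TEisHomo} (multiplicativity, trace preservation, norm preservation) from $\Mat$ to its norm completion $\barMat$ by density and continuity, and then to establish bijectivity separately. The continuous extension $U^H_t : \barMat \to \barMat$ is already at hand from the construction immediately preceding the statement, so most of the work is verifying that each algebraic and metric property survives the extension.

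First I would check that $U^H_t$ on $\barMat$ is $*$-linear (immediate from norm continuity of $+$ and~$\dag$), multiplicative, and trace- and norm-preserving. For multiplicativity, given $\bar x, \bar y \in \barMat$ written as norm limits $x_k \to \bar x$ and $y_k \to \bar y$ with $x_k, y_k \in \Mat$, I would use the joint continuity of multiplication on norm-bounded sets together with $U^H_t(x_k y_k) = U^H_t(x_k) U^H_t(y_k)$ from Corollary \ref{lem:TEisHomo}. Trace and norm preservation transport in the same way, since $\tr$ and $\norm{\cdot}$ are norm-continuous on~$\barMat$ and $\norm{U^H_t(y)} = \norm y$ already on~$\Mat$ (indeed the latter identity is what makes the extension uniformly continuous in the first place).

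It remains to establish that $U^H_t$ is a bijection on $\barMat$. The natural candidate for the inverse is $U^H_{-t}$, which is also available on $\barMat$ because $|-t| < t_0$. For $y \in \Mat$, the first inequality in Lemma \ref{lem:someIneq} with $t' = -t$ yields $\norm{y(t)_n(-t)_m - y(0)_n} \le 2^{-n} C' \norm y \abs{\Supp y}$, and since $y(0)_n = y$ for all $n \ge 1$, letting first $m \to \infty$ (using continuity of the extended $U^H_{-t}$ applied to the Cauchy sequence $y(t)_m \to U^H_t(y)$ in $\barMat$) and then $n \to \infty$ gives $U^H_{-t}(U^H_t(y)) = y$. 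By symmetry $U^H_t(U^H_{-t}(y)) = y$ as well. Density of $\Mat$ in $\barMat$ together with continuity of the two compositions then upgrades both identities to all of $\barMat$.

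The main obstacle is the interchange of limits in the bijectivity argument: $U^H_{-t}$ on an element $U^H_t(y) \in \barMat$ unfolds as a double limit $\lim_m \lim_n y(t)_m(-t)_n$, and one must check that this collapses to $y$. Lemma \ref{lem:someIneq} supplies precisely the uniform $2^{-n}$ estimate needed to legitimate the interchange; without such a uniform bound (whose proof in turn rests on the exponential counting of connected subgraphs in Lemma \ref{lem:HeisenbergRemainder}) one could not conclude invertibility. Once bijectivity is in hand, the remaining assertions of the corollary are immediate from the preceding paragraph.
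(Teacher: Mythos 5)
Your proposal is correct and follows essentially the same route as the paper: the paper's (very terse) proof likewise invokes Corollary~\ref{lem:TEisHomo} for the homomorphism, trace, and norm properties and Lemma~\ref{lem:someIneq} with $t'=-t$ to get $x(t)(-t)=x$, with the extension to $\barMat$ by density and norm-continuity handled in the surrounding text. Your version merely makes explicit the limit interchanges and the two-sided inverse check that the paper leaves implicit.
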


\begin{proof}
An inverse exists: $x \mapsto x(t) \mapsto x(t)(-t)$
where $x(t)(-t) = x$ by~\ref{lem:someIneq}.
Use~\ref{lem:TEisHomo}.
\end{proof}

Although we have not shown that $U^H_t$ is well defined for arbitrary~$t\in \RR$ 
by the formula~\eqref{eq:timeevolve},
since $U^H_{t'}$ for $\abs{t'} < t_0$ is now an automorphism on~$\barMat$,
we can define $U^H_t$ for arbitrary~$t$ by a composition of finitely many~$(U^H_{t_0/2})^{\pm 1}$ 
and some $U^H_{t'}$ for $\abs{t'} < t_0$.
It follows from~\ref{lem:someIneq} that $U^H_t U^H_{t'} = U^H_{t'} U^H_t$ for $t,t' \in (-t_0, t_0)$.
Hence, the composition can have arbitrary order,
and we have a group homomorphism from the additive group of~$\RR$
into the group of all $*$-automorphisms of~$\barMat$,
generated by~$H$.%
\footnote{
This statement is proved in~\cite[Chap.~6]{BratteliRobinson2}.
The approach is as follows.
First, one analyzes finite volume time evolution~$e^{\ii t H_\Omega}$
for any finite set~$\Omega \subset \ZZ^\dd$, 
evaluated at a fixed operator~$x \in \Mat$.
Then, using a Lieb--Robinson bound~\cite{Hastings2010}
one shows that the infinite volume limit exists.
This approach works for all~$t \in \RR$ at once.
We contend ourselves with a simple proof.
}

\begin{lemma}\label{thm:timecontinuous}
The path~$t \mapsto U^H_t$ is continuous.
\end{lemma}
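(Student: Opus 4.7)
The plan is to reduce continuity at an arbitrary $t'$ to continuity at $0$, and then estimate $\dist(U^H_\tau, \id)$ directly via the convergent Dyson-type series that defines $U^H_\tau$ for small $\tau$.

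First I would exploit the group property $U^H_t = U^H_{t'} \circ U^H_{t-t'}$ together with the fact that every $U^H_{t'}$ is norm-preserving on $\barMat$ (by~\ref{thm:TEisAutoNormPreserving}). For any $x \in \Mat$, this gives
\begin{align}
\norm{U^H_t(x) - U^H_{t'}(x)} = \norm{U^H_{t'}\bigl(U^H_{t-t'}(x) - x\bigr)} = \norm{U^H_{t-t'}(x) - x}.
\end{align}
Dividing by $\norm{x}\cdot\abs{\Supp(x)}$ and taking the supremum over finitely supported $x \notin \CC\one$ yields $\dist(U^H_t, U^H_{t'}) = \dist(U^H_{t-t'}, \id)$. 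Thus it suffices to show $\dist(U^H_\tau, \id) \to 0$ as $\tau \to 0$.

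Next, for $\tau$ with $\abs{\tau} < t_0$, the defining expansion~\eqref{eq:timeevolve} is absolutely convergent in $\barMat$, so for any $x \in \Mat \setminus \CC\one$ I have
\begin{align}
\norm{U^H_\tau(x) - x} \le \sum_{k \ge 1} \frac{\abs{\tau}^k}{k!} \norm{[\ii H, x]_k}.
\end{align}
Applying~\ref{lem:HeisenbergRemainder} term-by-term bounds this by $C \norm{x} \abs{\Supp(x)} \sum_{k \ge 1} (\abs{\tau}\zeta)^k$. For $\abs{\tau}\zeta < 1/2$ (say), the geometric series is $\le 2\abs{\tau}\zeta$, giving
\begin{align}
\frac{\norm{U^H_\tau(x) - x}}{\norm{x}\cdot\abs{\Supp(x)}} \le \frac{C\abs{\tau}\zeta}{1 - \abs{\tau}\zeta}.
\end{align}
The right-hand side is independent of $x$, so taking the supremum gives $\dist(U^H_\tau, \id) \le 2C\abs{\tau}\zeta$ for all sufficiently small $\tau$. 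Combined with the reduction in the previous paragraph, this establishes continuity of $t \mapsto U^H_t$ at every $t' \in \RR$, not just in a neighborhood of $0$.

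The only delicate point is ensuring that the factor $\abs{\Supp(x)}$ appearing in the denominator of $\dist$ absorbs exactly the $\abs{\Supp(x)}$ that appears in~\ref{lem:HeisenbergRemainder}; this is precisely why the metric was designed with this denominator, as foreshadowed in~\ref{def:dist}. There is no genuine analytic obstacle here beyond checking that Lemma~\ref{lem:HeisenbergRemainder} is sharp enough in its $\abs{\Supp(x)}$-dependence and that the norm-preservation of $U^H_{t'}$ holds on the full completion $\barMat$ (not only on $\Mat$), which is indeed the content of~\ref{thm:TEisAutoNormPreserving}. The estimate $\dist(U^H_\tau, \id) = O(\abs{\tau})$ in fact shows Lipschitz continuity, not merely continuity, of the one-parameter group in the metric topology.
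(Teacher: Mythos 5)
Your proof is correct and follows essentially the same route as the paper's: reduce to continuity at $t=0$ using the group property and the fact that $U^H_{t'}$ is an isometry of $\barMat$, then bound $\norm{U^H_\tau(x)-x}$ term-by-term with Lemma~\ref{lem:HeisenbergRemainder} to get a geometric series of order $O(\abs{\tau})\,\norm{x}\abs{\Supp(x)}$. Your version merely spells out the reduction step that the paper leaves implicit and notes the resulting Lipschitz bound explicitly.
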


\begin{proof}
It suffices to show the continuity at~$t=0$.
Let $\epsilon > 0$.
For any $x \in \Mat$, if~\mbox{$|t| < \min(\epsilon / 4 C \zeta, t_0)$} 
where the constants~$C,\zeta$ are from~\ref{lem:HeisenbergRemainder},
then
\begin{align}
	\norm{U^H_t(x) - x} 
	&= 
	\norm*{\sum_{k \ge 1} [\ii t H, x]_k / k!} 
	\le	
	 C \frac{\zeta t}{ 1- \zeta \abs t } \norm x \abs{\Supp(x)}, \\
	\dist(U^H_t, U^H_0) &= \sup_{x \in \Mat\setminus \CC\one} \frac{\norm{U^H_t(x) - U^H_0(x)}}{\norm x \abs{\Supp(x)}} 
	<
	\epsilon . \nonumber \qedhere
\end{align}
\end{proof}

\begin{remark}\label{rem:continuity}
The seemingly trivial statement of~\ref{thm:timecontinuous}
would have been false if we used a different topology on~$\calL$.
This will give some reason why we wanted the factor~$\abs{\Supp(x)}$ in the denominator
when we introduced~$\dist$.
Let $H = \half \sum_j \sigma_j^Z$ be a Hamiltonian on $\Mat(\ZZ,p=2)$.
This is a noninteracting Hamiltonian on a one-dimensional chain of qubits;
being one-dimensional will be immaterial.
For any positive integer~$n$, let 
\begin{align}
g_n(\phi) = \half (\ket{0^{\otimes n}} + \phi \ket{1^{\otimes n}})(\bra{0^{\otimes n}} + \phi^{-1} \bra{1^{\otimes n}})
\end{align}
be a projector where $\phi$ is a complex phase factor of magnitude one.
Here, $\sigma^Z \ket 0 = \ket 0$ and $\sigma^Z \ket 1 = - \ket 1$.
The precise support of $g_n(\phi)$ does not matter.
Calculation shows that $U_t^H(g_n(1)) = g_n(e^{\ii t n})$ and thus
\begin{align}
\sup_{x \neq 0} \frac{ \norm{U^H_t(x) - U^H_0(x)}}{\norm x} 
\ge \sup_n \norm{g_n(e^{\ii t n}) - g_n(1)} = \begin{cases} 
1 & \text{ if $t$ is an irrational multiple of $\pi$ }\\
0 & \text{ if } t = 0
\end{cases}.
\end{align}
Therefore, in the induced norm topology of~$\calL$
the time evolution is not continuous in time.
\hfill $\square$
\end{remark}

\begin{proposition}\label{thm:TimeEvolutionIsDifferentiable}
The time evolution by any bounded strength strictly local Hamiltonian 
is differentiable with respect to time:
\begin{align}
	\lim_{t \to 0} \dist\left(\frac 1 t (U^H_{s+t} - U^H_s) , ~~U^H_s \circ [\ii H, \cdot] \right) = 0.
\end{align}
\end{proposition}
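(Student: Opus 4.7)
The plan is to exploit the one-parameter group property of $t \mapsto U^H_t$ recorded right after~\ref{thm:TEisAutoNormPreserving}, together with the norm-preservation of each $U^H_s$, to reduce the statement to a Taylor-remainder bound that follows immediately from~\ref{lem:HeisenbergRemainder}.

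First I would use $U^H_{s+t} = U^H_s \circ U^H_t$ to write, for any $x \in \Mat$,
\begin{equation*}
\tfrac{1}{t}\bigl(U^H_{s+t}(x) - U^H_s(x)\bigr) - U^H_s\bigl([\ii H, x]\bigr) = U^H_s\!\left(\tfrac{1}{t}(U^H_t(x) - x) - [\ii H, x]\right).
\end{equation*}
Since $U^H_s$ extends to a norm-preserving $*$-automorphism of~$\barMat$ by~\ref{thm:TEisAutoNormPreserving}, taking norms on both sides kills $U^H_s$, and therefore
\begin{equation*}
\dist\!\left(\tfrac{1}{t}(U^H_{s+t} - U^H_s),\; U^H_s \circ [\ii H,\cdot]\right) \;=\; \dist\!\left(\tfrac{1}{t}(U^H_t - \Id),\; [\ii H,\cdot]\right).
\end{equation*}
Hence it suffices to establish differentiability at $s=0$.

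Next, for $\abs t < t_0$ the series $U^H_t(x) = \sum_{k \ge 0} (t^k/k!)[\ii H,x]_k$ converges absolutely in~$\barMat$, so subtracting the zeroth and first order terms and dividing by $t$,
\begin{equation*}
\tfrac{1}{t}\bigl(U^H_t(x) - x\bigr) - [\ii H, x] \;=\; \sum_{k \ge 2} \frac{t^{k-1}}{k!}\,[\ii H, x]_k .
\end{equation*}
I would then apply~\ref{lem:HeisenbergRemainder} term by term (noting $\norm{[\ii H,x]_k} = \norm{[H,x]_k}$) and sum the resulting geometric tail, valid for $\abs t < 1/\zeta$, to obtain
\begin{equation*}
\norm*{\tfrac{1}{t}(U^H_t(x) - x) - [\ii H, x]} \;\le\; \frac{C\,\zeta^2\,\abs t}{1 - \zeta \abs t}\;\abs{\Supp(x)}\;\norm x,
\end{equation*}
with $C,\zeta$ the constants from~\ref{lem:HeisenbergRemainder}. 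Dividing by $\norm x\cdot\abs{\Supp(x)}$ and taking the supremum yields a bound independent of $x$ that tends to zero as $t \to 0$, giving the claim.

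I do not foresee a serious obstacle. The only point requiring care is that $\tfrac{1}{t}(U^H_t(x) - x) - [\ii H, x]$ generally lies in~$\barMat$ rather than in~$\Mat$, so the step that strips $U^H_s$ off via norm-preservation uses the extension of $U^H_s$ to the whole of~$\barMat$ (constructed in the discussion leading to~\ref{thm:TEisAutoNormPreserving}); correspondingly, $U^H_{s+t} = U^H_s \circ U^H_t$ is used for arbitrary real $s,t$ through the additive-group homomorphism assembled from short-time pieces. Neither the factor $\abs{\Supp(x)}$ in the denominator of $\dist$ nor the restriction $\abs t < t_0$ causes trouble: the bound is already of the correct shape $\abs{\Supp(x)}\norm x$ times something in $t$, and we only need small $t$ for differentiability.
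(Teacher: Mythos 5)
Your proposal is correct and follows essentially the same route as the paper: reduce to $s=0$ using the fact that composing with the norm-preserving automorphism $U^H_s$ leaves $\dist$ unchanged, then bound the Taylor remainder $\sum_{k\ge 2}\frac{t^{k-1}}{k!}[\ii H,x]_k$ via~\ref{lem:HeisenbergRemainder} to get exactly the bound $\frac{C\zeta^2\abs t}{1-\zeta\abs t}\norm x\abs{\Supp(x)}$. The paper compresses the first step into the remark that ``$U^H_s$ preserves $\dist$,'' which you have simply spelled out via the group property.
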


\begin{proof}
Since $U^H_s$ preserves $\dist$,
it suffices to consider $s = 0$.
From~\ref{lem:HeisenbergRemainder}, we see for $\abs t < t_0$
and any~$x\in \Mat$,
\begin{align}
\frac 1 t (U^H_t(x) - x) - [\ii H, x] 
&=
\sum_{k = 2}^{\infty} \frac{t^{k-1}}{k!} [\ii H, x]_k , \nonumber \\
\norm*{\frac 1 t (U^H_t(x) - x) - [\ii H, x] } 
&\le 
\sum_{k \ge 2} \abs t ^{k-1} C \zeta^k \norm{x} \abs{\Supp(x)} = \frac{ C \zeta^2 \abs{t}} {1 - \zeta \abs t} \norm x \abs{\Supp(x)},\label{eq:diffRemainder}\\
\dist\left( \frac{U^H_t - U^H_0}{t - 0}, [\ii H, \cdot] \right) &\le \frac{C\zeta^2\abs t}{1-\zeta \abs t}
\xrightarrow{\quad t \to 0 \quad } 0. \qedhere
\end{align}
\end{proof}

\begin{remark}\label{rem:reasonSupp}
With the differentiability, 
we can give a stronger reason that we want the factor~$\abs{\Supp(x)}$
in the definition of~$\dist$.
Consider the same Hamiltonian~$H = \sum_j \sigma^Z_j$ as in~\ref{rem:continuity}.
Let $x_n = \bigotimes_{j=1}^n \sigma^X_j$ be a tensor product of Pauli matrices,
which has $\norm{x_n} = 1$ for all~$n \ge 1$.
The derivation $x \mapsto [\ii H,x]$ by~$H$
blows the norm up:
\begin{align}
[\ii H, x_n] &= - 2 \sum_{j=1}^n \sigma^Y_j \prod_{i \neq j} \sigma^X_i,\nonumber\\
\norm{ [\ii H, x_n] } &= 2 n = 2 \abs{\Supp(x_n)}, \\
\norm*{\frac 1 t (U^H_t(x_n) - x_n) - [\ii H, x_n] } &\ge 2 \abs{\Supp(x_n)} - 2/t \qquad (t > 0),
\nonumber
\end{align}
where the norm calculation uses the fact that different summands in the first line commute.
The third line shows that
in order for~$U^H_t$ to be differentiable in~$t$ with respect to a
metric
\begin{align}
\dist_\eta(\alpha,\beta) = \sup_{x \in \Mat \setminus \CC\one} \frac {\norm{\alpha(x) - \beta(x)}}{\norm x \abs{\Supp(x)}^\eta},
\end{align}
we must have $\eta \ge 1$.
Our metric~$\dist$ is~$\dist_{\eta =1}$.

If we consider a sequence $x_n / \sqrt n$ that converges to zero as~$n \to \infty$ 
in~$\Mat$ with the norm topology,
we see that $\lim_{n \to \infty} \norm{[\ii H, x_n / \sqrt n]} = \infty$.
This shows that the derivation by~$H$ is \emph{not} norm-continuous.
This may be thought of as the origin that we need the factor~$\abs{\Supp(x)}$
in~$\dist$.~$\square$
\end{remark}

\begin{remark}\label{rem:strongTopology}
Our metric topology is strictly finer than the strong topology.
To show this, we find a sequence in~$\calL$ which converges in the strong topology, 
but not in our $\dist$ topology.
Consider a depth~1 quantum circuit~$P$
consisting of single-site unitaries~$\sigma^X_j$ (a Pauli operator)
on every site~$j$ of~$\Mat(\ZZ,p=2)$.
For each integer~$n > 0$, 
we define $P_n = \prod_{j=-n}^n \sigma^X_j$, a unitary of~$\Mat(\ZZ,2)$.
For any $x \in \Mat(\ZZ,2)$ we know $x$ is supported on~$[-n_x, n_x] \subset \ZZ$
for some $n_x > 0$.
Hence, $P_n(x) \equiv P_n x P_n^\dag = P(x)$ if $n > n_x$.
In particular, for all~$x \in \Mat$
\begin{align}
\lim_{n \to \infty} \norm{P_n(x) - P(x)} = 0.
\end{align}
That is, the sequence~$\{ P_n \in \calL \}$ converges to~$P \in \calL$ in the strong topology.
However, $P_n(\sigma^Z_{k}) = \sigma^Z_{k}$ if $k > n$
whereas $P(\sigma^Z_{k}) = - \sigma^Z_{k}$.
Therefore,
\begin{align}
\dist(P_n, P) = \sup_{x \in \Mat \setminus \CC\one} \frac{\norm{ P_n(x) - P(x)}}{\norm x \abs{\Supp(x)}} \ge \sup_k \norm{P_n(\sigma^Z_k) - P(\sigma^Z_k)} = 2,
\end{align}
which does not converge to zero as $n \to \infty$.
\hfill $\square$
\end{remark}

\subsection{A limit of quantum circuits}

Given any region $\Omega \subseteq \ZZ^\dd$, we define 
\begin{align}
	H_\Omega = \sum_{X \subseteq \Omega} h_X,
\end{align}
the collection of all terms of~$H$ supported on~$\Omega$.

\begin{lemma}[A Lieb--Robinson bound~\cite{HHKL2018}]\label{lem:LR}
For any $t \in (-t_0,t_0)$ and any $\Omega \subseteq \ZZ^\dd$,
\begin{align}
\norm{
	U^{H}_t(x) - U^{H_\Omega}_t(x)
}
\le 
C (\abs t / 4 t_0)^{L/R} \norm x \abs{\Supp(x)}
\end{align}
where $L$ is the distance between~$\Supp(x)$ and the complement of~$\Omega$,
and the constant $C$ depends only on~$R,\dd$.
\end{lemma}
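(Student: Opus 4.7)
The plan is to compare the two time evolutions through their nested-commutator Taylor expansions (which converge absolutely for $|t|<t_0$ by Lemma~A.4) and to show that every nonzero term in the difference requires a long chain of Hamiltonian terms reaching from $\Supp(x)$ out to the complement of $\Omega$. Write
\begin{equation*}
U^H_t(x) - U^{H_\Omega}_t(x) = \sum_{k\ge 0}\frac{(\ii t)^k}{k!}\bigl([H,x]_k-[H_\Omega,x]_k\bigr),
\end{equation*}
and expand each $k$-th nested commutator as a sum over ordered tuples $(X_1,\ldots,X_k)$ of region labels, exactly as in~\eqref{eq:nestedComm}. The sum defining $[H_\Omega,x]_k$ is the restriction to tuples with every $X_j\subseteq\Omega$, so the difference retains precisely the tuples in which at least one $X_j$ satisfies $X_j\not\subseteq\Omega$.

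Next I would reuse the graph-connectedness observation from the proof of Lemma~A.4: a nested commutator $[h_{X_1},[\cdots,[h_{X_k},x]\cdots]]$ can be nonzero only when $\{s,X_1,\ldots,X_k\}$ forms a connected subgraph of $G_s$ for some $s\in\Supp(x)$, where two nodes are adjacent iff their supports meet. If some $X_j\not\subseteq\Omega$, then $X_j$ contains a site at $\ell_\infty$-distance at least $L$ from $\Supp(x)$; walking along the connecting path inside this subgraph, each step can advance the distance by at most the diameter bound $R$, so the chain from $s$ to $X_j$ must have at least $\lceil L/R\rceil$ edges. Consequently only orders $k\ge L/R$ contribute to the difference.

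For these surviving terms, the very same counting as in Lemma~A.4 supplies $\norm{[H,x]_k}/k!\le C\zeta^k\norm{x}\abs{\Supp(x)}$ and the analogous bound with $H$ replaced by $H_\Omega$. Summing the geometric tail from $k=\lceil L/R\rceil$ onward, using $|t|\zeta<1/128$ when $|t|<t_0=1/(128\zeta)$, one gets
\begin{equation*}
\norm{U^H_t(x)-U^{H_\Omega}_t(x)}\le\frac{2C}{1-|t|\zeta}(|t|\zeta)^{L/R}\norm{x}\abs{\Supp(x)}\le C'\bigl(|t|/(4t_0)\bigr)^{L/R}\norm{x}\abs{\Supp(x)},
\end{equation*}
since $|t|\zeta=|t|/(128\,t_0)\le|t|/(4t_0)$. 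The main obstacle I anticipate is the careful bookkeeping of the connectedness argument when isolating tuples with at least one $X_j$ outside $\Omega$ and verifying that the $\lceil L/R\rceil$ lower bound on chain length genuinely applies in every configuration; once that combinatorial step is secured, the remainder is a direct reuse of the geometric estimate already proved in Lemma~A.4.
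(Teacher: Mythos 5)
Your proposal is correct and follows exactly the paper's (one-line) argument: the nested-commutator expansions of $U^H_t(x)$ and $U^{H_\Omega}_t(x)$ agree below order $L/R$, because a nonzero difference term requires a connected chain of overlapping range-$R$ terms reaching from $\Supp(x)$ out to $\ZZ^D\setminus\Omega$, and the tail from order $\lceil L/R\rceil$ onward is summed geometrically via Lemma~\ref{lem:HeisenbergRemainder}. The combinatorial step you flag as the main obstacle is already secured by your distance-advancing argument, so nothing further is needed.
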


\begin{proof}
The defining expansions~\eqref{eq:timeevolve} 
are the same up to $k$-th order, where $k = L/R$,
beyond which the series converges geometrically by~\ref{lem:HeisenbergRemainder}.
\end{proof}

\begin{lemma}[Lemma~6 of~\cite{HHKL2018}]\label{lem:Staircase}
Let $A, B, C \subset \ZZ^\dd$ be pairwise disjoint finite subsets.
Then,
\begin{align}
\norm*{
e^{\ii t H_{A\cup B \cup C}}
-
e^{\ii t H_{A \cup B}} e^{-\ii t H_B} e^{\ii t H_{B \cup C}}
}\le
C \abs t e^{-L/R} \abs{ A^{+R} }
\end{align}
for some $C > 0$ depending only on~$\dd,R$.
Here, $L$ is the $\ell_\infty$-distance (that is our convention for~$\ZZ^\dd$) 
between~$A^{+R}$ and~$C$.
\end{lemma}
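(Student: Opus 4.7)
The plan is to reduce the difference $e^{\ii t H_{A\cup B\cup C}} - e^{\ii t H_{A\cup B}} e^{-\ii t H_B} e^{\ii t H_{B\cup C}}$ to a comparison of two interaction-picture propagators, and then apply the Lieb--Robinson estimate of Lemma~\ref{lem:LR} twice. Since $H$ has range $R$ while $\dist(A,C) \ge L + R > R$, no interaction term $h_X$ has support intersecting both $A$ and $C$, so $H_{A\cup B\cup C} = H_{A\cup B} + H_{B\cup C} - H_B$ exactly. Setting $S := H_{B\cup C} - H_B = \sum_{X : X\cap C \neq \emptyset} h_X$, each summand of $S$ is supported in $C^{+R}$.

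The key simplification is that in the interaction picture generated by $H_B$, the middle two factors of the staircase cancel. Writing $e^{\ii t H_{B\cup C}} = e^{\ii t(H_B + S)} = e^{\ii t H_B} W(t)$ with $W(t) = T\exp\!\left(\ii \int_0^t e^{-\ii s H_B} S\, e^{\ii s H_B}\, ds\right)$, one finds $e^{-\ii t H_B} e^{\ii t H_{B\cup C}} = W(t)$, so the staircase equals $e^{\ii t H_{A\cup B}} W(t)$. The exact evolution factorizes analogously as $e^{\ii t H_{A\cup B\cup C}} = e^{\ii t H_{A\cup B}} \widetilde W(t)$ with $\widetilde W(t) = T\exp\!\left(\ii \int_0^t \widetilde S(s)\, ds\right)$ and $\widetilde S(s) := e^{-\ii s H_{A\cup B}} S\, e^{\ii s H_{A\cup B}}$. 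Since $e^{\ii t H_{A\cup B}}$ is unitary, the quantity to bound is exactly $\norm{\widetilde W(t) - W(t)}$.

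A standard Duhamel argument applied to $W(s)^{-1}\widetilde W(s)$ gives $\norm{\widetilde W(t) - W(t)} \le \int_0^{|t|} \norm{\widetilde S(s) - S^{(B)}(s)}\, ds$, where $S^{(B)}(s) := e^{-\ii s H_B} S\, e^{\ii s H_B}$. A second Duhamel step comparing the two Heisenberg evolutions, whose generators differ by $H_{A\cup B} - H_B = \sum_Y h_Y$ with each $Y \subseteq A^{+R}$, yields
\begin{align}
\norm{\widetilde S(s) - S^{(B)}(s)} \le \int_0^s \sum_{X,Y} \norm{[h_Y,\; e^{-\ii(s-u)H_{A\cup B}} h_X e^{\ii(s-u)H_{A\cup B}}]}\, du.
\end{align}
For each pair $(X,Y)$, I approximate the Heisenberg-evolved $h_X$ by its restriction to $\Omega = X^{+(\dist(X,Y)-1)}$; the restricted operator commutes with $h_Y$ by disjoint support, so Lemma~\ref{lem:LR} bounds the commutator by a constant multiple of $(|s-u|/4t_0)^{(\dist(X,Y)-1)/R}$, the prefactor $|X|$ being absorbed since $|X| \le (2R+1)^D$.

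The main obstacle is to organize the double sum so that $|A^{+R}|$ emerges rather than $|C|$. I would perform the $X$-sum first with $Y$ fixed: parametrising $X$ by a chosen site $c_X \in X \cap C$ converts it (up to multiplicative constants) to $\sum_{c \in C}(|s-u|/4t_0)^{\dist(Y,c)/R}$; grouping sites by their distance from $Y$ yields a geometric-type series $\sum_{d \ge L} d^{D-1} \rho^d$ with $\rho = (|s-u|/4t_0)^{1/R}$, which for $|s-u|$ below a threshold is dominated by its smallest-$d$ term so that the polynomial $d^{D-1}$ is absorbed by an arbitrarily small weakening of the rate in the exponent, giving a bound of order $e^{-L/R}$ uniform in $Y$. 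The outer sum $\sum_Y$, bounded by $|A^{+R}|$ times a geometric constant, produces the desired boundary-volume factor. Performing the two Duhamel integrations gives a $|t|^2 e^{-L/R}|A^{+R}|$ bound, which collapses to the claimed $|t| e^{-L/R}|A^{+R}|$ after using $|t| \le t_0$ to absorb one factor of $|t|$ into the constant; for $|t| > t_0$ the stated bound is trivial since both unitaries have norm one.
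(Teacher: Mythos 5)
Your argument is correct, but it peels the staircase from the opposite side compared to the paper, and this choice costs you real extra work. You factor out $e^{\ii t H_{A\cup B}}$ on the left, so both remaining propagators are interaction-picture evolutions generated by the \emph{large} perturbation $S=H_{B\cup C}-H_B$ supported near $C$; since the naive term count is then proportional to $\abs{C^{+R}}$ rather than $\abs{A^{+R}}$, you are forced into a second Duhamel expansion and a distance-organized double sum $\sum_{X,Y}\rho^{\dist(X,Y)}$ to trade the volume of $C$ for the volume of $A^{+R}$ plus a convergent geometric tail. The paper instead peels $e^{\ii t H_{B\cup C}}$ off the right and compares $W(s)=e^{\ii s H_{A\cup B\cup C}}e^{-\ii s H_{B\cup C}}$ with $V(s)=e^{\ii s H_{A\cup B}}e^{-\ii s H_{B}}$; both are generated by the \emph{same small} operator $H_{A^{+R}}=H_{A\cup B\cup C}-H_{B\cup C}=H_{A\cup B}-H_{B}$, conjugated by the two evolutions, so a single Duhamel step plus one application of~\ref{lem:LR} to each term $h_X$ with $X\subseteq A^{+R}$ immediately yields the factor $\abs{A^{+R}}$, the decay $e^{-L/R}$ (the complement of $A\cup B$ inside $A\cup B\cup C$ is $C$, at distance $L$ from $A^{+R}$), and a single power of $\abs t$. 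Your route is more laborious but equally valid, and it makes explicit the combinatorial reason the bound scales with the smaller of the two boundary volumes; the paper's route buys brevity by choosing the decomposition so that the perturbation is anchored at $A$ from the start. Two small caveats: in your second Duhamel step the inner Heisenberg evolution naturally comes out generated by $H_B$ (or $H_{A\cup B}$, depending on which endpoint you anchor the interpolation at) --- either works, but write it consistently; and your closing remark that the bound is trivial for $\abs t>t_0$ is false (for large $L$ the right-hand side $C\abs t e^{-L/R}\abs{A^{+R}}$ can be far below the trivial bound $2$), though the paper's own proof carries the same implicit restriction $\abs t<t_0$ through its use of~\ref{lem:LR}, and the lemma is only ever invoked in that regime.
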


This is slightly weaker than~\cite[Lemma~6]{HHKL2018}
as the upper bound depends on the volume of~$A$.
We will use this lemma with small~$A$.

\begin{proof}
Since $A,B,C$ are finite, those exponentials are unitaries of~$\Mat$.
Without loss of generality, assume $t > 0$.
Suppress the union symbol~$\cup$.
Put $W(s) = e^{\ii s H_{A B  C}} e^{-\ii s H_{B C}}$ 
and $V(s) = e^{\ii s H_{A B}} e^{-\ii s H_{B}}$ for~$s \in [0,t]$.
They are unique solutions to first order differential equations 
$\partial_s W(s) = \ii W'(s) W(s)$ and $\partial_s V(s) = \ii V'(s) V(s)$
with initial condition $W(0) = V(s) = \one$,
where
\begin{align}
\ii W'(s) &= 
U^{H_{A B C}}_s (H_{A B C} - H_{B C}) = U^{H_{A B C}}_s (H_{A^{+R}}),\nonumber \\
\ii V'(s) &= 
U^{H_{A B}}_s (H_{A B} - H_{B}) = U^{H_{A B}}_s (H_{A^{+R}}),\\
\norm{W'(s) - V'(s)} &\le \sum_{X \subseteq A^{+R}} \norm{U_s^{H_{ABC}}(h_X) - U_s^{H_{AB}}(h_X) } 
\le O(1) \abs*{A^{+R}} R^\dd e^{-L/R} \nonumber & \text{by \ref{lem:LR}}.
\end{align}

On the other hand, a differential 
equation~$\partial_s Y(s) = \ii V(s)^\dag (-V'(s) + W'(s) ) V(s) Y(s)$ with initial condition~$Y(0) = \one$,
gives a unique unitary solution~$Y(s) = V(s)^\dag W(s)$ for~$s \in [0,t]$.
Jensen's inequality~$\norm{Y(t) - \one} \le \int_0^t \rd s \norm{\partial_s Y(s)}$
gives the result:
\begin{align}
\norm*{
e^{\ii t H_{A B C}}
-
e^{\ii t H_{A B}} e^{-\ii t H_B} e^{\ii t H_{B C}}
} 
&=
\norm*{
e^{\ii t H_{A B C}} e^{-\ii t H_{B  C}}
-
e^{\ii t H_{A B}} e^{-\ii t H_B} 
}\\
& = \norm*{ W(t) - V(t) } = \norm*{Y(t) - \one} \le O(1) t e^{-L/R} \abs{A^{+R}}. 
\nonumber \qedhere
\end{align}
\end{proof}

\begin{lemma}\label{lem:ShortTimeEvolutionIsLimit}
Given any $L > 10R$ and $t \in \RR$ with $\abs t < t_0$,
there exists a quantum circuit~$Q_L$ of spread~$10(2\dd+1)L$
such that
\begin{align}
\dist(U^H_t, Q_L) \le C \abs t L^{2\dd} e^{-L/R}
\end{align}
for some constant~$C$ depending on~$\dd,R$.
\end{lemma}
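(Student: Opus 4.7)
The plan is an HHKL-style decomposition of $U_t^H = e^{\ii t H}$ into a finite-depth quantum circuit, obtained by iterated application of the Staircase lemma (\ref{lem:Staircase}). I would partition $\ZZ^D$ into axis-aligned cubes of side length $L$. Starting from the exact evolution, I then process the $D$ coordinate axes one at a time: along axis $\mu$, apply Staircase to each current evolution block to split it into two half-evolutions separated by an $R$-thick interface correction, doubling the number of factors but reducing the linear extent of each factor along axis $\mu$. After all $D$ passes, the result is a product of $O(2^D)$ layers, each being a product of commuting unitaries $e^{\pm \ii t H_\Omega}$ with pairwise disjoint supports $\Omega$, each $\Omega$ contained in an axis-aligned cube of side at most $(2D+1)L$, comfortably within the claimed spread $10(2D+1)L$. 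This product defines $Q_L$ as a bona fide finite-depth quantum circuit.

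For the error, each Staircase step incurs operator-norm error $O(\abs t \, e^{-L/R} \abs{A^{+R}})$ by \ref{lem:Staircase}; since the interface slab $A$ has thickness $O(R)$ inside a cube of side $O(L)$, we have $\abs{A^{+R}} = O(L^D)$, so the per-step error is $O(\abs t \, L^D e^{-L/R})$. For any test $x \in \Mat \setminus \CC \one$ with $\Supp(x) = S$, both $U_t^H(x)$ and $Q_L(x)$ are supported inside the enlarged region $S^{+10(2D+1)L}$ (using the spread of $Q_L$ and \ref{lem:LR} for $U_t^H$). The discrepancy $\norm{U_t^H(x) - Q_L(x)}$ accumulates one $O(\abs t\, L^D e^{-L/R}) \norm x$ contribution for each Staircase interface touching this enlarged region. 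Since the enlarged region has volume $O(\abs{S} \cdot L^D)$ while each interface occupies volume $\Theta(L^D)$, there are $O(\abs{S})$ relevant interfaces after accounting for the bounded depth. Multiplying these factors yields $\norm{U_t^H(x) - Q_L(x)} \le C \abs t \, L^{2D} e^{-L/R} \abs{S} \norm x$, where one factor of $L^D$ comes from the per-step slab volume and a second $L^D$ comes from the fact that each Staircase step contributes error densely across the $L^D$ sites of its block. Dividing by $\norm x \cdot \abs{S}$ gives the claimed bound on $\dist(U_t^H, Q_L)$.

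The main obstacle is the combinatorial bookkeeping verifying that the iterated Staircase decomposition honestly produces a circuit of $O(2^D)$ layers with pairwise disjoint supports within each layer and uniform cube-size $(2D+1)L$. This is handled by a parity-class coloring of the original $L$-cubes in each axis direction, together with careful tracking of the growth of block sizes from $L$ to $(2D+1)L$ through the axis-by-axis iteration; the buffers of width $\ge R$ introduced at each Staircase application ensure the layer-disjointness. Once this structural claim is in place, the passage from the operator-norm error to the $\dist$ metric is routine, since both $U_t^H$ and each factor of $Q_L$ are norm-preserving and the Lieb--Robinson control of \ref{lem:LR} confines all discrepancies to the explicit $(2D+1)L$-neighborhood of $\Supp(x)$.
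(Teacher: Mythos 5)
Your overall architecture is the paper's: build an HHKL-style finite-depth circuit by iterating \ref{lem:Staircase}, restrict to a finite region $\Omega$ containing the enlarged support of $x$ via \ref{lem:LR}, accumulate per-interface operator-norm errors, and divide by $\norm{x}\,\abs{\Supp(x)}$. Only the combinatorics differ: you bisect axis-by-axis into $O(2^D)$ layers of cubes, while the paper uses a $(D+1)$-coloring of cells of diameter $\sim L$ and peels off one color class per pair of layers, getting depth $2D+1$. Either organization can produce a circuit of the stated spread.

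There is, however, a concrete error in how you invoke \ref{lem:Staircase}. Its bound is $C\abs{t}\,e^{-L'/R}\abs{A^{+R}}$ where $L'$ is the distance between $A^{+R}$ and $C$; to extract the factor $e^{-L/R}$ you need the separating region $B$ between the peeled-off piece $A$ and the remainder $C$ to have width comparable to $L$, \emph{not} $R$. With the ``$R$-thick interface corrections'' you describe, $A^{+R}$ touches $C$, and the lemma degenerates to the useless bound $O(\abs{t}\,\abs{A^{+R}})$. Your own spread bookkeeping (blocks growing from $L$ to $(2D+1)L$, i.e.\ by $L$ on each side per axis) is consistent with $L$-thick buffers, so this may be a slip, but the entire error estimate hinges on it. A secondary bookkeeping gap: in the axis-by-axis scheme the set $A$ at intermediate stages is still unbounded in the not-yet-processed directions (or, after restriction to $\Omega$, of volume up to $O(L^D\abs{\Supp(x)})$), so $\abs{A^{+R}}$ is not $O(L^D)$ per step as you claim, and the heuristic ``second factor of $L^D$ from error spread densely over the block'' is not a justification. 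The estimate does close if you sum $\abs{A^{+R}}$ over all applications of \ref{lem:Staircase} inside $\Omega$ — each level contributes $O(\abs{\Omega})=O(L^D\abs{\Supp(x)})$, giving a total of $O(\abs{t}\,L^{D}e^{-L/R}\abs{\Supp(x)})$ up to a $D$-dependent factor, well within the claimed $L^{2D}$ — but that volume-summation argument (the paper's route, via bounding the number of applications by $\abs{\Omega}$) needs to replace your per-step count.
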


\begin{proof}
Assume $t > 0$ for brevity.

We use the approximating circuit constructed in~\cite{HHKL2018}.
It is built out of local unitaries,
each of which is supported on a ball of diameter at most~$L$.
Divide~$\ZZ^\dd$ into disjoint ``cells'' 
each of which contains a radius~$7L$ ball and is contained in the concentric radius~$10L$ ball 
such that the cells are colored with $\dd+1$ colors $\{0,1,\ldots,\dd\}$,
and any two cells of a given color are separated by distance~$5L$.
Such a division exists as seen for example by considering a regular triangulation of~$\RR^\dd$,
and fattening all cells. For $\dd=2$, one can consider a honeycomb tiling.
Let $ce(k)$ denote the union of all cells colored~$k = 0,1,2,\ldots,\dd$,
and let $ce(k)^+$ denote the $L$-neighborhood of~$ce(k)$.
The $(2\dd+1)$st layer, the last, 
of the circuit is $e^{\ii t H_{ce(\dd)^+}}$.
Here, we abused the notation; the Hamiltonian~$H_{ce(k)^+}$ 
consists of terms that act on disjoint cells of linear size $\sim L$,
so the exponential is an infinite product unitaries,
which must be interpreted merely as a collection of those local unitaries.
The layer beneath, the $2\dd$-th, is $e^{-\ii t H_{ce(\dd)^+ \cap (\ZZ^\dd \setminus ce(\dd))}}$.
Inductively, having defined $2k$-th layer for $k > 1$,
we define two more layers 
by declaring that the starting Hamiltonian is~$H_{\bigcup_{j=0}^{k-1} ce(j)}$.
This is a Hamiltonian on $k$-colored cells,
and gives circuit elements down to the second layer.
The first layer is simply $U_t^{H_{ce(0)}}$.
This defines~$Q_L$ of depth~$2\dd+1$ and spread~$10(2\dd+1)L$.

Next, we have to show that this is a desired quantum circuit.
Let $x \in \Mat \setminus \CC\one$.
Let $\Omega \subset \ZZ^\dd$ be a finite subset consisting of the cells 
that intersect~$\Supp(x)^{+123(2\dd+1)L}$.
So, $\abs{\Omega} \le O(1) L^\dd \abs{\Supp(x)}$.
By~\ref{lem:LR}, we have 
\begin{align}
\norm{U^H_t(x) - U^{H_\Omega}_t(x)} \le O(1) e^{-L/R} \norm x \abs{\Supp(x)}.
\end{align}
Apply the circuit construction above to~$H_\Omega$, keeping the cells intact.
The gates in this $\Omega$-dependent circuit~$Q'$ are exactly the same as those of~$Q_L$,
except those near the boundary of~$\Omega$.
But $Q'$ contains the ``geometric lightcone'' of~$\Supp(x)$,
and therefore $Q'(x) = Q_L(x)$.

Now, by~\ref{lem:Staircase} we can estimate~$\norm{e^{\ii t H_\Omega} - Q'}$.
Consider one cell associated with the top layer, the $(2\dd+1)$st.
Call this cell~$A$ and put $B = A^{+L} \setminus A$ and $C = \Omega \setminus (A\cup B)$.
The distance between $A^{+R}$ and $C$ is at least~$L$.
The gate~$e^{\ii t H_{A\cup B}}$ is precisely a gate of~$Q'$ at $(2\dd+1)$st layer,
and $e^{-\ii t H_B}$ is a gate at the $2\dd$th layer of~$Q'$.
The full two top layers of~$Q'$ are obtained by such decompositions, 
iterated at most $\abs \Omega$ times, incurring accumulated ``error'' 
$O(1) t e^{-L/R} L^\dd \abs{\Omega}$.
Inductively, similar decompositions give the full $Q'$.
Since $\abs{\Omega} \le \abs{\Supp(x)}(2 L)^\dd$,
we have 
\begin{align}
\norm{e^{\ii t H_\Omega} - Q'} \le
O(1) \dd t e^{-L/R} L^\dd \abs{\Omega} \le O(1) t e^{-L/R} L^{2\dd} \abs{\Supp(x)}.
\end{align}
We conclude that
\begin{align}
\norm{U^H_t(x) - Q_L(x)} 
&= \norm{U^H_t(x) - Q'(x)} \nonumber\\
&\le 
\norm{U^H_t(x) - U^{H_\Omega}_t(x)} 
+
2 \norm{e^{\ii t H_\Omega} - Q'} \norm x \\
&\le O(1) t e^{-L/R} L^{2\dd} \norm x \abs{\Supp(x)} \nonumber
\end{align}
where we abused notations to write~$Q'$ for the unitary 
and~$Q'(x)$ for the conjugation of $x$ by the unitary.
\end{proof}

\begin{theorem}\label{thm:TimeEvolutionIsLimit}
The time evolution~$U^H_t$ by any bounded strength strictly local Hamiltonian~$H$ on~$\Mat(\ZZ^\dd)$ 
for any time $t \in \RR$
is a limit of a sequence of finite depth quantum circuits.
Therefore, the $\dist$-closure of the group of all QCA 
contains the time evolution by any bounded strength strictly local Hamiltonian.
\end{theorem}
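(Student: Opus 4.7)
The plan is to reduce to the short-time case already treated in~\ref{lem:ShortTimeEvolutionIsLimit}, then to extend to arbitrary $t \in \RR$ using the group property $U^H_t = (U^H_{t/N})^N$ together with a telescoping estimate.

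For $\abs t < t_0$ I would simply invoke~\ref{lem:ShortTimeEvolutionIsLimit}: the circuits $Q_L$ constructed there satisfy $\dist(U^H_t, Q_L) \le C \abs t L^{2D} e^{-L/R}$, which tends to zero as $L \to \infty$. Hence $U^H_t$ lies in the $\dist$-closure of finite-depth quantum circuits in this regime.

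For general $t \in \RR$, I would pick a positive integer $N$ with $\abs t / N < t_0$ and set $t' = t/N$. By~\ref{thm:TEisAutoNormPreserving}, $U^H_{t'}$ is a $*$-automorphism of $\barMat$, and $U^H_t = (U^H_{t'})^N$ on $\barMat$. For each large $L$, let $Q_L$ be a quantum circuit of spread $\sigma_L = 10(2D+1)L$ approximating $U^H_{t'}$ as in~\ref{lem:ShortTimeEvolutionIsLimit}; its $N$-fold composition $Q_L^N$ is again a finite-depth quantum circuit. The key step is to control the error through the telescoping identity
\begin{equation*}
U^H_t - Q_L^N = \sum_{k=0}^{N-1} (U^H_{t'})^{N-1-k} \circ (U^H_{t'} - Q_L) \circ Q_L^k .
\end{equation*}
Evaluated at a nonscalar $x \in \Mat$, the inner piece $(U^H_{t'} - Q_L)(Q_L^k(x))$ has norm at most $\dist(U^H_{t'}, Q_L) \cdot \norm{Q_L^k(x)} \cdot \abs{\Supp(Q_L^k(x))}$. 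Since $Q_L$ is a $*$-automorphism of spread $\sigma_L$, \ref{prop:trnormpreserving} gives $\norm{Q_L^k(x)} = \norm x$, while the spread bound yields $\abs{\Supp(Q_L^k(x))} \le \abs{\Supp(x)}(2 k \sigma_L + 1)^D$. The outer factor $(U^H_{t'})^{N-1-k}$ is likewise norm-preserving, and summing the $N$ terms produces
\begin{equation*}
\dist(U^H_t, Q_L^N) \le C \abs t \, N^D L^{3D} e^{-L/R},
\end{equation*}
which tends to zero as $L \to \infty$ with $N$ held fixed. The second assertion of the theorem follows immediately, since every finite-depth quantum circuit is a QCA.

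The main obstacle will be the interplay between the number of short-time factors $N$ and the approximation scale $L$. Each composition inflates support by a polynomial in $\sigma_L \sim L$, so after $N$ iterations the support growth is $O((NL)^D)$; the telescoped error therefore carries a prefactor $L^{3D}$ rather than the cleaner $L^{2D}$ of the single-factor bound of~\ref{lem:ShortTimeEvolutionIsLimit}. The resolution is to fix $N$ first, determined by $t$ and $t_0$, and only then send $L \to \infty$, so that the exponential $e^{-L/R}$ absorbs any fixed power of $L$. With this estimate alone one cannot let $N$ grow in step with $L$, which is the obstruction to extracting any diagonal subsequence that would mix the two parameters.
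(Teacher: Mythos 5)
Your proposal is correct and follows essentially the same route as the paper: reduce to the short-time case via \ref{lem:ShortTimeEvolutionIsLimit}, write $U^H_t = (U^H_{t/N})^N$, and telescope, using norm preservation of both factors and the polynomial support growth of $Q_L^k$ to get a bound of the form $(\text{poly in } N, L)\, e^{-L/R}$, then send $L \to \infty$ with $N$ fixed. Your closing remark about the order of limits is exactly the point the paper's proof relies on as well.
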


\begin{proof}
Take a positive integer $n$ sufficiently large that $\tau  = t/ n \in (-t_0, t_0)$.
Let $Q_L$ be a quantum circuit corresponds to~$U^H_\tau$ by~\ref{lem:ShortTimeEvolutionIsLimit}.
Let $x \in \Mat\setminus \CC\one$.
Then, for any $k = 1,2,\ldots, n$,
\begin{align}
&\norm*{
\left(U^H_\tau \right)^k \left(Q_L\right)^{n-k}(x) 
-
\left(U^H_\tau \right)^{k-1} \left(Q_L\right)^{n-k+1}(x) 
}\nonumber\\
&=
\norm*{
U^H_\tau(Q_L^{n-k}(x) ) - Q_L( Q_L^{n-k}(x) )
} & \text{by~\ref{thm:TEisAutoNormPreserving}}\\
&\le \dist(U^H_\tau, Q_L) \norm x \abs{\Supp(Q_L^{n-k}(x))} & \text{by definition~\ref{def:dist}}\nonumber\\
&\le C t_0 L^{2\dd} e^{-L/R} \norm x (20(2\dd+1)nL)^\dd \abs{\Supp(x)} & \text{by~\ref{lem:ShortTimeEvolutionIsLimit}}.\nonumber
\end{align}
Therefore,
\begin{align}
\dist(U^H_t, Q_L^{n}) \le 
\sum_{k=1}^n \dist(
\left(U^H_\tau \right)^{k} \left(Q_L\right)^{n-k},
\left(U^H_\tau \right)^{k-1} \left(Q_L\right)^{n-k+1} 
)
\le C' n^{\dd+1} L^{3\dd} e^{-L/R}
\end{align}
for some constant $C'$ that depends only on~$\dd,R$.
Since $\calL$ is $\dist$-complete by~\ref{thm:distIsMetric},
taking $L \to \infty$ we complete the proof.
\end{proof}

\cite[Thm.~5.6]{Ranard2022} says that a certain class of 
approximately locality-preserving $*$-automorphisms~$\alpha$ of~$\barMat(\ZZ^\dd)$ with~$\dd=1$,
is contained in the $\dist$-completion of the group of QCA.
To state the studied class of~$\alpha$,
we denote by~$[S] \subset \ZZ$ the smallest interval containing~$S$ for any $S \subset \ZZ$.
The required condition for~$\alpha$ is that
for any given $\epsilon > 0$ 
there is a length scale~$\ell$ (corresponding to the spread of a QCA)
such that for every $x \in \Mat$ there exists $y_x \in \Mat([\Supp(x)]^{+\ell})$
with $\norm{\alpha(x) - y_x} < \epsilon \norm{x}$.
A specific relation between $\epsilon$ and $\ell$ encodes the decay rate of the tails of~$\alpha$.
It appears that an important condition is that 
$\ell$ should be uniform across the lattice,
but not the specific decay rate.
In course of proving~\cite[5.6]{Ranard2022},
they also show~\cite[4.10]{Ranard2022} 
that two one-dimensional QCA on the same~$\Mat(\ZZ,p)$ 
that are $\epsilon_0$-close in~$\dist$
must blend where $\epsilon_0$ is a positive threshold depending only on the spread.
This raises a natural question:
in higher dimensions, if two QCA are $\epsilon$-close in~$\dist$,
do they give boundary algebras of the same Brauer class?

\section{One-dimensional Brauer group is trivial}\label{app:1d}

In this appendix, 
we are going to show that the Brauer group of invertible subalgebras in one dimension is trivial.
The Brauer triviality of an invertible subalgebra means 
that there is a locality-preserving isomorphism into the invertible subalgebra
(up to stabilization) from a full local operator algebra.
In fact, we will not need stabilization
and will find a decomposition of the invertible subalgebra
into mutually commuting central simple $*$-subalgebras on intervals of~$\ZZ$.
These interval subalgebras are images of single-site matrix algebras.
It is not too difficult to find some central subalgebra around an interval.
Then, we are led to think about central subalgebras on far separated intervals,
and try to fill the ``gaps'' by taking commutants.
We will do this, but there is one difficulty.
Namely, it does not immediately follow
that the commutants are again central
since the existence argument~(\ref{lem:CentralClosure0}) 
of a central algebra on an interval 
is too abstract to give any useful information 
about potentially central elements.
So, we take a ``dual'' construction for interval central subalgebras,
defined as commutants of some local operators
such that any potential central elements have restricted support.

For the rest of this section,
we let $\calA \subseteq \Mat(\ZZ)$ denote any one-dimensional 
invertible subalgebra of spread~$\ell$.
For any interval~$[a,b] \cap \ZZ$, we introduce an abbreviation
\begin{align}
	\calA_{[a,b]} = \calA \cap \Mat( [a,b] \cap \ZZ ).
\end{align}
We also use a notation for any two sets~$\calX, \calZ \subseteq \calA$
\begin{align}
\Comm( \calX \,|\, \calZ ) = \{ z \in \calZ ~|~ \forall x \in \calX :~[z,x] = 0 \}.
\end{align}
In this notation, $\calX$ need not be a subset of~$\calZ$.
The center of an algebra~$\calY$ will be denoted by
\begin{align}
\Cent( \calY ) = \Comm(\calY\,|\, \calY).
\end{align}

First we will need a tool to examine operator components individually 
that are far apart in space.

\begin{lemma}\label{lem:LocallyFactorizable}
Suppose an element~$x$ of an invertible subalgebra~$\calA \subseteq \Mat(\ZZ^\dd)$
of spread~$\ell$
is decomposed as $x = \sum_i y_i z_i$ 
such that $\{y_i\}$ and $\{z_i\}$ are each linearly independent
(obtained by e.g.~the Schmidt decomposition with respect to the Hilbert--Schmidt inner product).
If $Y = \bigcup_i \Supp(y_i)$ and~$Z = \bigcup_i \Supp(z_i)$
are separated by distance $> \ell$,
then $y_i , z_i \in \calA$ for all~$i$.
\end{lemma}

In~\cite{FreedmanHastings2019QCA}
such a subalgebra~$\calA$ is called ``locally factorizable.''

\begin{proof}
The commutant~$\calB$ of~$\calA$ within~$\Mat(\ZZ^\dd)$
is $\ell$-locally generated (\ref{lem:csa}, \ref{lem:InvertibilityIsSymmetric}).
Let $b$ be any $\ell$-local generator of~$\calB$ whose support overlaps with~$Z$.
Since~$Y$ and~$Z$ are far apart, $b$ does not overlap with any~$y_i$.
Hence, $0 = [b,x] = \sum_i y_i [b,z_i] =  \sum_i y_i \otimes [b,z_i]$.
The linear independence of~$\{y_i\}$ implies 
that $[b,z_i] = 0$ for all~$i$.
This means that each $z_i$ commutes with every generator of~$\calB$.
Therefore, $z_i$ belongs to the commutant of~$\calB$, 
which is $\calA$ by~\ref{lem:InvertibilityIsSymmetric}.
A symmetric argument shows that~$y_i \in \calA$ for all~$i$.
\end{proof}

The following~\ref{lem:CentralClosure0} 
is a seed for central subalgebras of an invertible subalgebra,
not necessarily on one dimension.
Indeed, this lemma has nothing to do with locality.
When we use it, the assumption of this lemma 
will be fulfilled always by~\ref{lem:CommutantOfRestricted}.

\begin{lemma}[Lemma~3.8 of~\cite{FreedmanHastings2019QCA}]\label{lem:CentralClosure0}
Let $\calX \subseteq \calZ$ be unital $*$-subalgebras of  
a full matrix algebra of dimension~$\Tr(\one) < \infty$.
Let~$\{ \pi_\mu  \}$ and~$\{ \tau_\nu \}$ be the central minimal projectors
of~$\calX$ and~$\calZ$, respectively.
Suppose 
\begin{align}
	\frac{\Tr(\pi_\mu \tau_\nu)}{\Tr(\one)} 
	= \frac{\Tr(\pi_\mu)}{\Tr(\one)} \frac{\Tr(\tau_\nu)}{\Tr(\one)}
\end{align}
for all~$\mu,\nu$.
Then, there exists a central (and hence simple) $*$-subalgebra~$\calY$ such that
$\calX \subseteq \calY \subseteq \calZ$.
\end{lemma}

The proof below is the same as that in~\cite{FreedmanHastings2019QCA}.

\begin{proof}
For the simple subalgebras~$\pi_\mu \calX$ and~$\tau_\nu \calZ$,
let~$[\pi_\mu \calX]$ and~$[\tau_\nu \calZ]$ 
denote some full matrix algebras isomorphic to them
so that we have $\Tr$-preserving isomorphisms for all $\mu, \nu$
\begin{align}
\pi_\mu \calX &\cong [\pi_\mu \calX] \otimes \one_{c^\calX_\mu}, &
\tau_\nu \calZ &\cong [\tau_\nu \calZ] \otimes \one_{c^\calZ_\nu} ,\\
\Tr(\pi_\mu) &= c^\calX_\mu \dim [\pi_\mu \calX], &
\Tr(\tau_\nu) &= c^\calZ_\nu \dim [\tau_\nu \calZ].\nonumber
\end{align}
Given a pair~$(\nu,\mu)$, consider a homomorphism
\begin{align}
\varphi_{\nu,\mu} : 
[\pi_\mu \calX] \xrightarrow{\quad\otimes \one_{c^\calX_\mu}\quad} \pi_\mu \calX \xrightarrow{\quad \tau_\nu \times \quad } \tau_\nu \calZ \xrightarrow{\qquad} [\tau_\nu \calZ].
\end{align}
Since $[\pi_\mu \calX]$ is simple,
this map~$\varphi_{\nu,\mu}$ can be either zero or an injection.
Hence, there is a nonnegative integer~$C_{\nu,\mu}$
that counts the multiplicity of~$[\pi_\mu \calX]$ in the image.
That is, $\varphi_{\nu,\mu}([\pi_\mu \calX]) \cong [\pi_\mu \calX] \otimes \one_{C_{\nu,\mu}}$
where $\cong$ is a $\Tr$-preserving isomorphism.
Tracking the traces of the unit,
we have
\begin{align}
C_{\nu, \mu} \frac{\Tr(\pi_\mu)}{c^\calX_\mu} 
= \frac{\Tr(\tau_\nu \pi_\mu)}{c^\calZ_\nu}.
\end{align} 
By assumption, this is equal to $\Tr(\tau_\nu) \Tr(\pi_\mu)/\Tr(\one) c^\calZ_\nu$.
Rearranging, we have
\begin{align}
C_{\nu,\mu} 
= \frac{\Tr(\tau_\nu)}{c^\calZ_\nu} \frac{c^\calX_\mu }{\Tr(\one)} 
= \frac{\dim [\tau_\nu \calZ] \gcd}{\Tr(\one)} \frac{c^\calX_\mu}{\gcd}
\end{align}
where $\gcd = \gcd(\{c^\calX_\mu\}_\mu)$.
If $n_\mu$ are integers such that $\sum_\mu n_\mu c^\calX_\mu = \gcd$,
we see that 
\begin{align}
N_\nu = \frac{(\dim [\tau_\nu \calZ]) \gcd}{\Tr(\one)} = \sum_\mu C_{\nu,\mu} n_\mu
\end{align} 
is a positive integer for each~$\nu$.
It follows that
\begin{align}
\bigoplus_\mu \varphi_{\nu,\mu}([\pi_\mu \calX]) 
\,\cong\,
%\Big( 
\bigoplus_\mu [\pi_\mu \calX] \otimes \one_{C_{\nu,\mu}} 
%\Big)
\,\cong\,
%\Big(
\bigoplus_\mu [\pi_\mu \calX] \otimes \one_{c^\calX_{\mu}/\gcd}
%\Big)
 \otimes \one_{N_\nu}
\subseteq [\tau_\nu \calZ]
\end{align}
where all the isomorphisms are $\Tr$-preserving.
Since~$\calX$ is a subalgebra of~$\calZ$,
the $c^\calZ_\nu$-weighted direct sum over~$\nu$ of these images must be equal to~$\calX$ itself:
\begin{equation}
\calX \cong
\bigoplus_{\nu} \Big( \bigoplus_\mu [\pi_\mu \calX] \otimes \one_{c^\calX_{\mu}/\gcd} \Big) \otimes \one_{N_\nu c^\calZ_\nu} .
\end{equation}
Now, let $d = \sum_\mu (c^\calX_\mu / \gcd) \dim [\pi_\mu \calX]$.
Observe that $d \cdot N_\nu c^\calZ_\nu = \sum_\mu (c^\calX_\mu / \gcd) \dim [\pi_\mu \calX] N_\nu c^\calZ_\nu = c^\calZ_\nu \sum_\mu C_{\nu,\mu} \dim [\pi_\mu \calX]$.
The last sum~$\sum_\mu C_{\nu,\mu} \dim [\pi_\mu \calX]$ 
is equal to~$\dim [\tau_\nu \calZ]$ because $\calX$ is a unital subalgebra of~$\calZ$.
Therefore, $d \cdot N_\nu c^\calZ_\nu = \Tr(\tau_\nu)$.
This dimension counting shows that a desired subalgebra~$\calY$ can be defined as
\begin{equation}
\calY \cong \left\{ 
\bigoplus_{\nu} M \otimes \one_{N_\nu c^\calZ_\nu}~\middle|~ M \text{ is any $d \times d$ matrix } \right\} \qedhere
\end{equation}
\end{proof}

The following lemma is another tool that we will find useful later.
Namely, the commutant of certain local operators is locally generated.

\begin{lemma}\label{lem:localGenerators}
Let $\calD^\mathrm{left} \subseteq \calA_{[a-10\ell, a+10\ell]}$ 
and $\calD^\mathrm{right} \subseteq \calA_{[b-10\ell, b+10\ell]}$ 
be $*$-subalgebras.
For $b - a > 40\ell$, a $*$-subalgebra
\begin{align}
\calC &= \Comm(\calD^\mathrm{left} \calD^\mathrm{right} \,|\,  \calA_{[a, b]} ) 
% \supseteq \calA_{ [a+10\ell, b- 10\ell] }.
\end{align}
is $20\ell$-locally generated.
\end{lemma}

The subsets $\calD^\mathrm{left,right}$ can be~$\CC$,
in which case $\calC = \calA_{[a,b]}$.

\begin{proof}
$\calC$ contains $\calA_{[a + 12\ell, b - 12\ell]}$ and $\calA_{[a+16\ell, b- 16\ell]}$.
By~\ref{lem:CommutantOfRestricted},
the centers of these two algebras have disjoint support,
and \ref{lem:CentralClosure0} gives a central subalgebra~$\calC_0$ 
sandwiched by these two subalgebras.
In the finite dimensional $*$-algebra~$\calC$,
the central subalgebra~$\calC_0$ must be a tensor factor:
$\calC$ is unitarily isomorphic 
to~$\bigoplus_\mu \pi_\mu \calC \subseteq \Mat([a-10\ell,b+10\ell])$
for some minimal central projectors~$\pi_\mu \in \calC$,
and each simple summand~$\pi_\mu \calC$ must have $\calC_0$ as a tensor factor.
So, we have $\calC \cong \calC_0 \otimes \calM$ where $\calM = \Comm(\calC_0 \,|\, \calC)$.
We show that $\calC_0$ is contained in a subalgebra of~$\calC$ generated by 
$20\ell$-local elements of~$\calC$,
and that $\calM$ is contained a tensor product of two subalgebras of~$\calC$,
one supported on~$[a,a+20\ell]$ and the other on~$[b-20\ell, b]$.
This will complete the proof as $\calC$ is generated by $\calC_0$,
that is covered by a $20\ell$-locally generated algebra,
and the two subalgebras associated with~$\calM$,
each of which can be taken as a subset of $20\ell$-local generators.

If $x \in \calC_0 \subseteq \Mat([a+12\ell, b- 12\ell])$,
we can write it as a sum of product of single-site operators,
each of which is supported on~$[a+12\ell, b-12\ell]$.
Since~$\calA$ is invertible,
we can rewrite such a sum as another sum of products of elements of form~$y z$
where $y$ is a $\ell$-local generator of~$\calA$ (\ref{lem:csa}),
and $z$ is a $\ell$-local generator of~$\calB = \Comm(\calA, \Mat(\ZZ))$.
Applying $\tr_\calB$ of~\ref{prop:TrNormISA}, 
we express $x$ as a sum of products of
$\ell$-local elements of $\calC$.

Next, let $w \in \calM$.
Since $\calC_0$ contains $\calA_{[a+16\ell, b- 16\ell]}$,
we have by~\ref{lem:CommutantOfRestricted} an unnormalized 
Schmidt decomposition $w = \sum_j w^L_j w^R_j$,
where $w^L_j \in \calA_{[a, a+19\ell]}$
and $w^R_j \in \calA_{[b-19\ell, b]}$ by~\ref{lem:LocallyFactorizable}.
Each of $\calD^\mathrm{left}$ and $\calD^\mathrm{right}$ cannot overlap 
with both~$w^L_j$ and $w^R_j$,
so we must have $w^L_j, w^R_j \in \calC$ for all~$j$.%
\footnote{
	We did not show that each $w^L_j$ belongs to~$\calM$.
}
\end{proof}

The next lemma illustrates what we are going to do.
The assumption of this lemma will hold always, as shown in the proof 
of~\ref{thm:BrauerGroupIsTrivialIn1D}.

\begin{lemma}\label{lem:findCSin1d}
Assume that for any given site $s \in \ZZ$
there exist two $*$-subalgebras~$\calD_s^\mathrm{right}, \calD_s^\mathrm{left} $
such that 
\begin{itemize}
\item[(i)] $\calD_s^\mathrm{left}, \calD_s^\mathrm{right} \subseteq \calA_{[s - 10\ell, s+10\ell]}$,
\item[(ii)] $\calD_s^\mathrm{right}$ commutes with $\calA_{[s-30\ell, s-3\ell]}$,
 $\calD_s^\mathrm{left}$ commutes with $\calA_{[s+3\ell, s+30\ell]}$,
\item[(iii)] $\Supp \Cent \Comm(\calD_s^\mathrm{right} \,|\, \calA_{[s - 30\ell, s + 3\ell]} ) \subseteq [s-30\ell, s-7\ell]$, and
\item[]
 $\Supp \Cent \Comm(\calD_s^\mathrm{left} \,|\, \calA_{[s - 3\ell, s + 30\ell]} ) \subseteq [s + 7\ell, s+30\ell]$.
\end{itemize}
Then, $\calA$ is generated by mutually commuting 
central simple $*$-subalgebras $\calA_i \subset \calA$ indexed by $i \in \ZZ$
such that $\Supp \calA_i \subseteq [50\ell i - 49 \ell , 50 \ell i + 49 \ell ]$.
\end{lemma}

The subalgebras~$\calD^\mathrm{left,right}$ in the assumption
will control the position of potential central elements of their commutants.
The annotation ``left'' and ``right'' means
that they determine the left and right end of the commutant, respectively.
As typical in this kind of proof,
the constants such as~$3,10,30,50$ are never meant to be optimal.
They are chosen large enough to avoid certain overlaps.

\begin{proof}
For $a,b \in \ZZ$ such that $b - a > 40 \ell$,
we define
\begin{align}
\calC(a,b) &= \Comm(\calD_{a}^\mathrm{left} \calD_{b}^\mathrm{right} \,|\,  \calA_{[a, b]} ) .
% \supseteq \calA_{ [a+10\ell, b- 10\ell] }.
\end{align}
By~\ref{lem:localGenerators}, $\calA_{[a + 3\ell, b - 3\ell]}$ is $20\ell$-locally generated.
Then, (ii) implies that these local generators
commute with both $\calD_a^\mathrm{left}$ and $\calD_b^\mathrm{right}$,
so $\calC(a,b)$ contains $\calA_{[a + 3\ell, b - 3\ell]}$.
If $z$ is a central element of $\calC(a,b)$,
then by~\ref{lem:CommutantOfRestricted},
$z$ must be supported
on~$L = [a, a + 5\ell]$ union $R= [b - 5\ell, b]$.
If $z = \sum_k z^L_k z^R_k$ is a Schmidt decomposition,
then by~\ref{lem:LocallyFactorizable}, $z^L_k , z^R_k \in \calA_{[a,b]}$ for all~$k$.
By construction,
we have $0 = [\calD_{a}^\mathrm{left}, z ] = \sum_k  [\calD_{a}^\mathrm{left}, z^L_k] z^R_k$,
where each commutator must vanish because $z^R_k$ are linearly independent,
so $z^L_k \in \Comm(\calD_{a}^\mathrm{left} \,|\, \calA_{[a, a + 15\ell]} ) 
\subseteq \Comm(\calD_{a}^\mathrm{left} \,|\, \calA_{[a, a + 30\ell]} )$.
Similarly, since $\calC(a,b)$ is $20\ell$-locally generated by~\ref{lem:localGenerators},
$z^L_k$ commutes with $\calC(a,b)$ for each~$k$.
Since $\Comm(\calD_{a}^\mathrm{left} \,|\, \calA_{[a, a + 30\ell]} ) \subseteq \calC(a,b)$,
it follows that~$z^L_k \in \Cent \Comm(\calD_{a}^\mathrm{left} \,|\, \calA_{[a,a + 30\ell]} )$.
By~(iii) we have $\Supp(z^L_k) \subseteq [a + 7\ell, a + 30\ell] \cap [a, a + 5 \ell] = \emptyset$,
so~$z^L_k$ is a scalar for all~$k$.
A symmetric argument
shows that $z^R_k$ is also a scalar for all~$k$,
and therefore $\calC(a,b)$ is central.

Next, we define
\begin{align}
\calA_{2j} &= \calC( 100 \ell j - 30\ell, 100 \ell j + 30 \ell )\\
\calA_{2j+1} &= \Comm( \calA_{2j} \calA_{2j+2} \,|\, \calC(100 \ell j - 30 \ell, 100 \ell (j+1) + 30 \ell ) )\nonumber
\end{align}
Since $\calC(b,a)$ with $b - a > 40\ell$
is central and finite dimensional,
it follows that $\calA_{2j},\calA_{2j+1}$ are all central.
By construction, two consecutive subalgebras $\calA_{i}$ and $\calA_{i+1}$ 
are mutually commuting for all~$i$.
If we show the claim on the support of $\calA_i$,
the mutual commutativity will immediately follow.

Let us examine the support of~$\calA_{2j+1}$.
Since the even-indexed $\calA_{2j}$ 
contains $\calA_{[100 \ell j - 20\ell, 100 \ell j + 20\ell]}$,
the odd index~$\calA_{2j+1}$ is supported on three disjoint intervals by~\ref{lem:CommutantOfRestricted}:
$L=[100 \ell j - 30\ell, 100\ell j - 18\ell]$,
$M=[100 \ell j + 18 \ell, 100 \ell (j+1) - 18 \ell]$,
and 
$R=[100 \ell (j+1) + 18 \ell,  100 \ell (j+1) + 30 \ell]$.
Let $x \in \calA_{2j+1}$.
Applying~\ref{lem:LocallyFactorizable} to $(L\cup R) \cup M$,
we have a Schmidt decomposition~$x = \sum_k x^{LR}_k x^M_k$ (unnormalized)
where $x^{LR}_k \in \calA_{[100\ell j - 30 \ell, 100 \ell j + 30\ell]} \calA_{[ 100 \ell (j+1) - 30 \ell, 100 \ell (j+1) + 30 \ell]}$.
Considering the commutation relation with $\calD^\mathrm{left}_{100\ell j-30\ell}$ and $\calD^\mathrm{right}_{100 \ell (j+1) + 30\ell}$,
we see that $x^{LR}_k \in \calA_{2j} \calA_{2j+2}$ for all~$k$.
But $x$ commutes with~$\calA_{2j} \calA_{2j+2}$,
so $x$ must commute with all the $20\ell$-local generators of~$\calA_{2j} \calA_{2j+2}$ (\ref{lem:localGenerators}),
each of which can overlap with either $L\cup R$ or $M$, but not both.
This implies that $x^{LR}_k$ commutes with every generator of~$\calA_{2j} \calA_{2j+2}$.
Since $\calA_{2j}\calA_{2j+2}$ is central, all $x^{LR}_k$ are scalars.
Therefore, $\calA_{2j+1}$ is supported on~$M$, 
completing the proof for the claim on the support of~$\calA_{2j+1}$.
The support of~$\calA_{2j}$ satisfies the claim by construction.

It remains to show that all $\calA_i$'s generate the full~$\calA$.
It suffices to check that every $\ell$-local generator~$g$ of~$\calA$ (\ref{lem:csa})
is contained in the algebra generated by~$\calA_i$'s.
Since $g$ is $\ell$-local,
we must have that $g \in \calA_{[100\ell j - 10\ell, 100 \ell (j+1) + 10\ell]}$
for some~$j$.
In particular, $g \in \calC(100\ell j - 30\ell, 100 \ell j + 30 \ell ) = \calA_{2j}\calA_{2j+1} \calA_{2j+2}$.
This completes the proof.
\end{proof}

\begin{theorem}\label{thm:BrauerGroupIsTrivialIn1D}
The Brauer group of invertible subalgebras of~$\Mat(\ZZ)$ is zero.
\end{theorem}

\begin{proof}
We will confirm that the assumption of~\ref{lem:findCSin1d} is always true.
Then, \ref{lem:findCSin1d} gives a decomposition of~$\calA$
into mutually commuting central simple $*$-subalgebras~$\calA_i$.
Defining a local operator algebra~$\Mat(\ZZ,q)$ with a local dimension assignment~$q$
by $q(s) = 1$ for all~$s \in \ZZ$ except $q(s = 50\ell i) = \dim \calA_i$ for $i \in \ZZ$,
we have the Brauer triviality of~$\calA$.
We will only construct $\calD_s^\mathrm{right}$;
the construction for~$\calD_s^\mathrm{left}$ will be completely parallel.
The reference to a site~$s$ will only complicate the notation,
so we work near a convenient point~$0 \in \ZZ$.

Let $\pi = \pi^\dag \in \calA_{[0,5\ell]}$ be a nonzero minimal projector.%
\footnote{
	Here we follow an idea in the proof of~\cite[Thm.~3.6]{FreedmanHastings2019QCA}.
}
That is, if $\tau = \tau^\dag \neq \pi$ 
is any other projector such that $\pi - \tau$ is positive semidefinite,
then $\tau = 0$.
Such a minimal projector exists because $\calA_{[0,5\ell]}$ is finite dimensional.
Define for any integer $j > 0$
\begin{align}
	\calX_j = \Comm(\pi \,|\, \calA_{[0,j \ell]}).
\end{align}
Clearly, $\calX_j \subseteq \calX_{j+1}$ for all~$j$.

For $j \ge 50$, 
consider the decomposition~$\calA_{[0,j\ell]} = \bigoplus_{\mu} \xi_\mu \calM_\mu$
where $\xi_\mu$ are minimal central projectors and $\calM_\mu$ are central simple.
The projector $\pi$ is represented in this decomposition
as $\pi = \sum_\mu \xi_\mu \pi$ where $\xi_\mu \pi \in \calM_\mu$ is a projector.
Considering the commutant of~$\pi$ in each simple subalgebra,
we see that the center of~$\calX_j$
is in the $\CC$-linear span of~$\pi \xi_\mu, \xi_\mu$.
But, by~\ref{lem:CommutantOfRestricted},
any element $\xi \in \Cent(\calA_{[0,j \ell]})$ 
is supported on~$L = [0,2\ell]$ union $R = [j\ell - 2\ell, j \ell]$.
Every Schmidt component of~$\xi$ (\ref{lem:LocallyFactorizable}) 
on either~$L$ or~$R$ belongs to $\calA_{[0, j\ell]}$.
Since $\calA_{[0,j\ell]}$ is $20\ell$-locally generated (\ref{lem:localGenerators}),
we see that each Schmidt component must be central by itself.
It follows that $\Cent( \calA_{[0,j\ell]} ) = \langle \xi_\nu^L \xi_\sigma^R \rangle$ 
is a tensor product of two commutative algebras 
generated by nonzero (possibly nonminimal) central projectors $\xi_\nu^L$
and $\xi_\sigma^R$.
Since $\pi \xi_\nu^L \preceq \pi$ and $\xi_\nu^L \in \calA_{[0,5\ell]}$,
we must have $\pi \xi_\nu^L = \pi$ or $\pi \xi_\nu^L = 0$ for the minimality of~$\pi$
for all~$\nu$.
Therefore, the center of~$\calX_j$ is generated by~$\pi$ and $\xi_\sigma^R$
where all $\xi_\sigma^R$ are supported on the right 
end~$[j\ell - 2\ell , j\ell]$.

If $\pi \xi'$ is a central projector of~$\pi \calX_{53}$,%
\footnote{
	$\pi \calX_j$ is a unital $*$-algebra on its own,
	but the inclusion $\pi \calX_j \to \calA_{[0,j \ell]}$ 
	is not a unital $*$-homomorphism;
	the unit~$\pi$ of~$\pi \calX_{50}$ is mapped to $\pi \neq \one$.
}
then $\tr(\pi \xi_\sigma^R  \pi \xi') \tr(\pi) = \tr(\pi)^2 \tr(\xi_\sigma^R) \tr(\xi') 
= \tr(\pi \xi_\sigma^R) \tr(\pi \xi')$,
which is the condition to apply~\ref{lem:CentralClosure0} to~$\pi \calX_{50} \subseteq \pi \calX_{53}$.
We obtain a central $*$-algebra $\calC^\pi$
such that $\pi \calX_{50} \subseteq \calC^\pi \subseteq \pi \calX_{53}$
where the superscript~$\pi$ is 
to note that the unit~$\pi$ of~$\calC^\pi$ 
is not equal to the unit~$\one \in \calA$.
Define
\begin{align}
 \calD^\pi &= \Comm( \calC^\pi \,|\, \pi\calX_{53} ), \\
 \calD &= \big\langle x^R_i ~|~ \sum_i x^L_i x^R_i \text{ is a Schmidt decomposition of }x \in \calD^\pi \text{ w.r.t. } \calA_{[0,20\ell]} \calA_{[40\ell,53\ell]} \big\rangle . \nonumber
\end{align}
Let us first check that $\calD$ is well defined.
Suppose $x = \pi x = x \pi \in \calD^\pi$.
If $t \in \Supp(x) \cap [20\ell, 40\ell]$,
then the VS property of~$\calA$ gives $w \in \calA_{[15\ell, 45\ell]}$
such that $[w,x] \neq 0$.
Then,~$\pi w = w \pi \in \pi \calA_{[15\ell, 45\ell]} \subseteq \pi \calX_{50} \subseteq \calC^\pi$
and $[\pi w, x] = \pi w x - x \pi w = w x - x w \neq 0$,
which is a contradiction.
So, $x$ is supported on the disjoint union of intervals
that appear in the definition of~$\calD$,
and thus $\calD$ is well defined.
We declare $\calD_{50\ell}^\mathrm{right} = \calD$ and claim that 
$\calD_{50\ell}^\mathrm{right}$ satisfies all the properties
in the assumption of~\ref{lem:findCSin1d}.

First, $\calD$ is supported on~$[40\ell, 53\ell] 
\subseteq [50\ell - 10\ell, 50 \ell + 10\ell]$.
This is~(i).
Second, we have to show that $\calA_{[20\ell, 47\ell]}$ commutes with~$\calD$.
%The algebra~$\calA_{[20\ell,47\ell]}$ is $20\ell$-locally generated 
%by~\ref{lem:localGenerators}.
%If $g$ is any one of the generators,
If $g \in \calA_{[20\ell, 47\ell]}$,
then $\pi g = g \pi \in \pi \calX_{50} \subseteq \calC^\pi$,
so $\pi g$ commutes with~$\calD^\pi$.
If $\sum_i x_i^{\le 20} x^{\ge 40}_i$ is a Schmidt decomposed element of~$\calD^\pi$,
we have $\sum_i x_i^{\le 20} \pi [g,x^{\ge 40}_i] = 0$ where $x_i^{\le 20} \pi = x_i^{\le 20}$ for any~$i$,
so $[g,x^{\ge 40}_i] =0$.
This means that $g$ commutes with~$\calD$,
implying~(ii).
Finally, we have to show~(iii) that 
$\Supp \Cent \Comm(\calD \,|\, \calA_{[20\ell,53\ell]} ) \subseteq [20\ell, 43\ell]$.
Since $\Comm(\calD \,|\, \calA_{[20\ell,53\ell]} )$ 
contains $\calA_{[20\ell, 47\ell]}$ by~(ii),
any central element~$y \in \Cent \Comm(\calD \,|\, \calA_{[20\ell,53\ell]} )$ 
is supported on $L = [20\ell, 22\ell]$ union $R= [45\ell,53\ell]$ by~\ref{lem:CommutantOfRestricted}.
Let $y = \sum_i y^L_i y^R_i$ be an unnormalized Schmidt decomposition.
By~\ref{lem:LocallyFactorizable} we know $y^R_i \in \calA_{[45\ell,53\ell]}$ for all~$i$.
We claim that $y^R_i \in \CC\one$, which completes the proof of~(iii) and hence the theorem.

Since~$y$ belongs to $\Comm(\calD | \calA_{[20\ell,53\ell]})$
and $\calD$ is supported on~$[40\ell,53\ell]$,
we see that $y^R_i$ must commute with~$\calD$ for all~$i$,
so $y^R_i \in \Comm(\calD \,|\, \calA_{[20\ell,53\ell]} )$.
Hence,
\begin{align}
	\pi y^R_i \in \Comm(\calD^\pi, \pi \calX_{53}) = \calC^\pi \label{eq:piyR}
\end{align}
where the equality is because in any finite dimensional $*$-algebra ($\pi \calX_{53}$)
the bicommutant of a central $*$-subalgebra ($\calC^\pi$) is itself.
Let us show that $\calC^\pi$ admits a generating set of form~$\{ \pi h\}$
where $h \in \calA_{[0,53\ell]}$ are $20\ell$-local;
the argument is the same as in the proof of~\ref{lem:localGenerators}.
Note that $\calX_{50}$ is $20\ell$-locally generated by~\ref{lem:localGenerators}.
Thus, $\pi \calX_{50}$ is generated by $\pi$ times $20\ell$-local operators.
By~\ref{lem:CentralClosure0} we find a central $*$-algebra $\calE^\pi$ such that
$\pi \calX_{47} \subseteq \calE^\pi \subseteq \pi \calX_{50} \subseteq \calC^\pi$.
This $\calE^\pi$ is contained in an algebra generated by $20\ell$-local operators.
$\Comm(\calE^\pi \,|\, \calC^\pi)$ is supported (modulo~$\pi$) 
near the ends of the interval, which factorizes by~\ref{lem:LocallyFactorizable},
and we can take each tensor factor as generators.
Now, every $20\ell$-local generator~$\pi h$ of~$\calC^\pi$
commutes with $\pi y^R_i$ trivially if $h$ is supported on~$[0,43\ell]$.
If $h$ is supported on $[20\ell, 53\ell]$,
then by construction $h$ belongs to $\Comm(\calD \,|\, \calA_{[20\ell,53\ell]} )$.
Such $h$ commutes with~$y \in \Cent \Comm(\calD \,|\, \calA_{[20\ell,53\ell]} )$ 
and hence also with~$y^R_i$.
Therefore, $\pi y^R_i$ commutes with $\calC^\pi$.
Since~$\pi y^R_i \in \calC^\pi$ by~\eqref{eq:piyR},
we have $\pi y^R_i \in \Comm(\calC^\pi, \calC^\pi) = \CC \pi$.
Since $\pi$ and $y^R_i$ are far apart in support,
we conclude that $y^R_i \in \CC \one$.
\end{proof}

%%% To use bibtex style apsrev4-2, add the following two lines to bib file.
% @Control{REVTEX42Control,}
% @Control{apsrev42Control,  author = {00},  title  = {0},  year   = {0},  editor = {1},  pages  = {1},}
\nocite{apsrev42Control}
\bibliographystyle{apsrev4-2}
\bibliography{../refs}
\end{document}